\documentclass[11pt]{article}
\usepackage{fullpage}
\usepackage{pbox}
\usepackage{url,amsfonts, amsmath, amssymb, amsthm}
\usepackage[pdftex]{graphicx}
\usepackage{color}
\usepackage{enumitem}
\theoremstyle{plain}
\newtheorem{theorem}{Theorem}[section]
\newtheorem{lemma}[theorem]{Lemma}

\newtheorem{conjecture}[theorem]{Conjecture}

\newtheorem{observation}[theorem]{Observation}
\theoremstyle{definition}
\newtheorem{definition}[theorem]{Definition}

\newtheorem{remark}[theorem]{Remark}

\newcommand{\dist}{\operatorname{dist}}

\newcommand{\E}{{\mathbb E\/}}
\newcommand{\paren}[1]{\mathopen{}\left( #1 \right)\mathclose{}}

\newcommand{\ceil}[1]{\lceil #1 \rceil}
\newcommand{\floor}[1]{\lfloor #1 \rfloor}
\newcommand{\f}[2]{\frac{#1}{#2}}
\newcommand{\fr}[2]{\mbox{$\frac{#1}{#2}$}}

\newcommand{\bydef}{\stackrel{\operatorname{def}}{=}}
\newcommand{\poly}{\operatorname{poly}}
\newcommand{\eps}{\epsilon}

\newcommand{\ignore}[1]{}

\newcommand{\rb}[2]{\raisebox{#1 mm}[0mm][0mm]{#2}}
\newcommand{\istrut}[2][0]{\rule[- #1 mm]{0mm}{#1 mm}\rule{0mm}{#2 mm}}
\newcommand{\zero}[1]{\makebox[0mm][l]{$#1$}}

\newcommand{\hcm}[1][1]{\hspace*{#1 cm}}

\newcommand{\Base}{\dot{B}}
\newcommand{\DblBase}{\ddot{B}}
\newcommand{\nBase}{\dot{n}}
\newcommand{\nDblBase}{\ddot{n}}
\newcommand{\mBase}{\dot{m}}
\newcommand{\mDblBase}{\ddot{m}}
\newcommand{\Pairs}{\mathcal{P}}
\newcommand{\lab}{\operatorname{label}}
\newcommand{\Labels}{\mathcal{L}}
\newcommand{\Graph}{\mathcal{H}}
\newcommand{\RevGraph}{\overline{\Graph}}
\newcommand{\Layer}[1]{\ddot{L}_{#1}}
\newcommand{\Ball}{\mathcal{B}}
\newcommand{\Succ}[2]{C_{#1}(#2)}
\newcommand{\Fail}[2]{I_{#1}(#2)}

\newcommand{\Span}{{\sc Span.}}
\newcommand{\Emul}{{\sc Emul.}}
\newcommand{\all}{{\sc All}}
\newcommand{\approxdist}{\widetilde{\dist}}
\newcommand{\Loss}{\xi}


\newcommand{\Bollobas}{Bollob\'{a}s}

\newcommand{\Althofer}{Alth\"{o}fer}
\newcommand{\Matousek}{Matou\v{s}ek}

\usepackage[dvipsnames]{xcolor}

\begin{document}

\title{A Hierarchy of Lower Bounds for Sublinear Additive Spanners\thanks{Supported
by NSF grants CCF-1217338, CNS-1318294, CCF-1417238, CCF-1514339, CCF-1514383, CCF-1637546,
and BSF Grant 2012338.  Email: \texttt{abboud@cs.stanford.edu}, \texttt{gbodwin@cs.stanford.edu}, \texttt{pettie@umich.edu}.  A preliminary version of this paper will appear in the conference proceedings of SODA 2017.}}

\author{
Amir Abboud\\
Stanford University
\and
Greg Bodwin\\
Stanford University
\and
Seth Pettie\\
University of Michigan}

\date{}

\maketitle
\thispagestyle{empty}
\setcounter{page}{0}

\begin{abstract}
Spanners, emulators, and approximate distance oracles
can be viewed as {\em lossy} compression schemes that represent an 
unweighted graph metric in small 
space, say $\tilde{O}(n^{1+\delta})$ bits.  There is an inherent tradeoff
between the sparsity parameter $\delta$ and the {\em stretch function} $f$ of the compression scheme,
but the qualitative nature of this tradeoff has remained a persistent open problem.

It has been known for some time that when $\delta\ge 1/3$ there are 
schemes with constant {\em additive} stretch (distance $d$ is stretched to at most $f(d) = d + O(1)$), 
and recent results of Abboud and Bodwin show that when $\delta < 1/3$ there are no such schemes.
Thus, to get practically efficient graph compression with $\delta \to 0$ we must pay super-constant additive stretch, but exactly how much do we have to pay?

In this paper we show that the lower bound of Abboud and Bodwin is just the first step in a \emph{hierarchy} of lower bounds
that characterize the asymptotic behavior of the optimal stretch function $f$ for sparsity parameter $\delta \in (0,1/3)$.
Specifically, for any integer $k\ge 2$, any compression scheme with size $O(n^{1+\f{1}{2^k-1} - \epsilon})$
has a {\em sublinear additive stretch} function $f$:
\[
f(d) = d + \Omega(d^{1-\f{1}{k}}).
\]  
This lower bound matches Thorup and Zwick's (2006) construction of sublinear additive {\em emulators}.
It also shows that Elkin and Peleg's $(1+\epsilon,\beta)$-spanners have an essentially optimal tradeoff between $\delta,\epsilon,$ and $\beta$,
and that the sublinear additive spanners of Pettie (2009) and Chechik (2013) are not too far from optimal.
To complement these lower bounds we present a new construction of
$(1+\epsilon, O(k/\epsilon)^{k-1})$-spanners
with size $O((k/\epsilon)^{h_k} kn^{1+\f{1}{2^{k+1}-1}})$, where $h_k < 3/4$.
This size bound improves on the spanners of Elkin and Peleg (2004), Thorup and Zwick (2006), and Pettie (2009).
According to our lower bounds neither the size nor stretch function can be substantially improved.

Our lower bound technique exhibits several interesting degrees of freedom in the framework of Abboud and Bodwin.
By carefully exploiting these freedoms, we are able to obtain lower bounds for several related combinatorial objects.
We get lower bounds on the size of $(\beta,\epsilon$-\emph{hopsets}, matching Elkin and Neiman's construction (2016),
and lower bounds on {\em shortcutting sets} for digraphs that preserve the transitive closure.  Our lower bound simplifies
Hesse's (2003) refutation of Thorup's conjecture (1992), which stated that adding a linear number of shortcuts suffices to reduce the diameter to polylogarithmic.
Finally, we show matching upper and lower bounds for graph compression schemes that work for graph metrics with girth at least $2\gamma+1$.
One consequence is that 
Baswana et al.'s (2010) additive $O(\gamma)$-spanners with size $O(n^{1+\f{1}{2\gamma+1}})$ cannot be improved in the exponent.
\end{abstract}

\newpage

\section{Introduction}

{\em Spanners}~\cite{PS89}, {\em emulators}~\cite{DHZ00,TZ06}, and {\em approximate distance oracles}~\cite{TZ05}
can be viewed as kinds of compression schemes that approximately encode the distance metric of a (dense) 
undirected input graph $G=(V,E)$ in small space, where the notion of {\em approximation} 
is captured by a non-decreasing {\em stretch function} $f\,:\, \mathbb{N}\rightarrow\mathbb{N}$.

\begin{description}
\item[Spanners.] An $f(d)$-{\em spanner} $G' = (V,E')$ is a subgraph of $G$ for which $\dist_{G'}(u,v)$ is at most $f(\dist_G(u,v))$.
An $(\alpha,\beta)$-spanner is one with stretch function $f(d) = \alpha d + \beta$.  Notable special cases include {\em multiplicative $\alpha$-spanners}~\cite{PS89,Althofer+93,EP04,TZ05,BaswanaS07,BKMP10}, 
when $\beta=0$,
and {\em additive $\beta$-spanners}~\cite{ACIM99,DHZ00,EP04,TZ06,BKMP10,Woodruff10,Chechik13,Knudsen14}, 
when $\alpha=1$.  See~\cite{EP04,BKMP10,TZ06,Pettie-Span09,Chechik13,Parter14} for ``mixed'' 
spanners with $\alpha > 1, \beta>0$.

\item[Emulators.] An $f(d)$-{\em emulator} (also called a {\em Steiner spanner}~\cite{Althofer+93}) is a {\em weighted} graph $G' = (V'\supseteq V, E',w')$ such that for each $u,v\in V$, 
$\dist_{G'}(u,v) \in [\dist_G(u,v), f(\dist_G(u,v))]$.  
In other words, one is allowed to add Steiner points ($V'\backslash V$) and long-range (weighted) edges $(u,v) \in E'\backslash E$
such that distances are non-contracting.

\item[(Unconstrained) Distance Oracles.] For our purposes, an $f(d)$-approximate distance oracle using space $s$ is a bit string 
in $\{0,1\}^{s}$
such that given $u,v\in V$, 
an estimate $\approxdist(u,v) \in [\dist_G(u,v), f(\dist_G(u,v))]$ can be computed by examining only the bit string.
Note: the term ``oracle'' was used in~\cite{TZ05} to indicate that 
$\approxdist(u,v)$ is computed in constant time~\cite{PatrascuR14,AbrahamG11,Chechik15}. 
Later work considered distance oracles with non-constant query time~\cite{PoratR13,AgarwalG13,Agarwal14,ElkinP16}.
In this paper we make no restrictions on the query time at all.  Thus, for our purposes distance oracles 
generalize spanners, emulators, and related objects.
\end{description}

In this paper we establish essentially optimal tradeoffs between the size of the compressed graph representation
and the asymptotic behavior of its stretch function $f$.
In order to put our results in context we must recount the developments of the last 30 years that
investigated multiplicative, additive, $(\alpha,\beta)$, and sublinear additive stretch functions.

\subsection{Multiplicative Stretch}
Historically, the first notion of stretch studied in the literature was purely multiplicative stretch.  \Althofer{} et al.~\cite{Althofer+93} 
quickly settled the problem
by showing that any graph contains an $\alpha$-spanner with at most $m_{\alpha+2}(n)$ edges, and that the claim is false for $m_{\alpha+2}(n)-1$. 
Here $m_g(n)$ is the maximum number of edges in a graph with $n$ vertices and girth $g$.
The upper bound of~\cite{Althofer+93} follows directly from the observation that a natural greedy construction 
never closes a cycle with length at most $\alpha+1$; the lower bound follows from the fact that no strict subgraph of a graph with 
girth $\alpha+2$ is an $\alpha$-spanner.\footnote{Removing any edge stretches the distance between its endpoints from 1 to at least $\alpha+1$.
Moreover, since every graph contains a bipartite subgraph with at least half the edges, $m_{2k+1} \le 2m_{2k+2}(n)$ for every $k$.
Thus, there are $(2k-1)$-spanners with size $O(m_{2k+2}(n))$.}
It has been conjectured~\cite{Erdos63,BondyS74,Bollobas78}
that the trivial upper bound $m_{2k+1}(n), m_{2k+2}(n) = O(n^{1+1/k})$ is sharp up to the leading constant, 
but this {\em Girth Conjecture} has only been proved for $k=1$ (trivial), and $k\in\{2,3,5\}$~\cite{Brown66,ErdosRS66,Reiman58,Wenger91,Tits59,Benson66,LazebnikU93}.
See~\cite{LazebnikUW95,LazebnikUW96,WoldarU93} for lower bounds on $m_g(n)$.

\subsection{Additive Stretch}
The Girth Conjecture implies that a spanner with size $O(n^{1+1/k})$ must stretch some pair of adjacent vertices at original distance
$d=1$ to distance $2k-1$.   If ``stretch'' is defined \emph{a priori} to be multiplicative, then such $(2k-1)$-spanners are optimal.  However, there is no reason to believe that $f(d) = (2k-1)d$ is an optimal stretch \emph{function} for size $O(n^{1+1/k})$.  The girth argument could also be interpreted as lower bounding additive stretch or $(\alpha,\beta)$-stretch.  In general, the Girth Conjecture only implies that $(\alpha,\beta)$-spanners with size $O(n^{1+1/k})$
have $\alpha+\beta \ge 2k-1$.

Aingworth, Chekuri, Indyk, and Motwani~\cite{ACIM99} gave a construction of an additive $2$-spanner with size $\tilde{O}(n^{3/2})$, which is optimal in the sense that 
neither the additive stretch $2$ nor exponent $3/2$ can be unilaterally improved.\footnote{Moreover, 
later results of \Bollobas{} et al.~\cite{BCE06} show that for spanner size $O(n^{3/2})$,
the stretch {\em function} $f(d) = d+2$ is optimal for $1\le d \le \Theta(\sqrt{n})$.
See~\cite{EP04,TZ06,BKMP10,Knudsen14} for constructions of additive 2-spanners with size $O(n^{3/2})$.}
This result raised the tantalizing possibility that there exist arbitrarily sparse additive spanners.  
Dor, Halperin, and Zwick~\cite{DHZ00} observed that additive $4$-{\em emulators} exist with size $\tilde{O}(n^{4/3})$, 
i.e., the emulator introduces weighted edges connecting distant vertex pairs.
Baswana, Kavitha, Mehlhorn, and Pettie~\cite{BKMP10} constructed additive 6-spanners
with size $O(n^{4/3})$ and Chechik~\cite{Chechik13} constructed additive-4 spanners with size $\tilde{O}(n^{7/5})$.
See~\cite{Woodruff10,Knudsen14,EP04,TZ06,DHZ00,BKMP10} for other constructions of additive 2- and 6-spanners.

The ``$4/3$'' exponent proved to be very resilient, for both emulators and spanners with additive stretch. 
This led to a line of work establishing additive spanners below the $n^{4/3}$ threshold with stretch polynomial in $n$~\cite{BCE06,BKMP10,Pettie-Span09,Chechik13,BodwinW15}.
The additive spanners of Bodwin and Williams~\cite{BodwinW16} with 
stretch function $f(d) = d + n^\epsilon$ have size that is the minimum of 
$O(n^{\f{4}{3} - \f{7\epsilon}{9} + o(1)})$ and $O(n^{\f{5}{4} - \f{5\epsilon}{12}+o(1)})$.

\subsection{Sublinear Additive Stretch}

Elkin and Peleg~\cite{EP04} showed that the ``4/3 barrier'' could also be broken by tolerating $1+\epsilon$ multiplicative stretch. 
In particular, for any integer $\kappa$ and real $\epsilon>0$, there are $(1+\epsilon,\beta)$-spanners with size
$O(\beta n^{1+1/\kappa})$, where $\beta = O(\epsilon^{-1}\log \kappa)^{\log \kappa}$.  The construction algorithm and size-bound
both depend on $\epsilon$.  Thorup and Zwick~\cite{TZ06} gave a surprisingly simple construction of
an $O(kn^{1+\f{1}{2^{k+1}-1}})$-size {\em emulator} with $(1+\epsilon, O(k/\epsilon)^{k-1})$-type stretch.

Thorup and Zwick's emulator has the special property that its stretch holds for every $\epsilon>0$ \emph{simultaneously}, i.e., it can be selected as a function of $d$.  
Judiciously choosing $\epsilon = k/d^{\f{1}{k}}$ leads 
to an emulator with a {\em sublinear additive stretch} 
function $f(d) = d + O(kd^{1-\frac{1}{k}} + 3^k)$.\footnote{The Thorup-Zwick 
emulator can easily be converted to a $(1+\epsilon,\beta)$-spanner by replacing weighted
edges with paths up to length $\beta$.  A careful analysis shows the size of the resulting spanner can be made 
$O((k/\epsilon)^{O(1)} n^{1+\f{1}{2^{k+1}-1}})$ (see Section~\ref{sect:new-upper-bounds})
which would slightly improve on~\cite{EP04}.  
Elkin [personal communication, 2013] has stated that with minor changes, the Elkin-Peleg~\cite{EP04} spanners
can also be expressed as $(1+\epsilon, O(k/\epsilon)^{k-1})$-spanners with size $O((k/\epsilon)^{O(1)}n^{1+\f{1}{2^{k+1}-1}})$.
We state these bounds in Figure~\ref{fig:prior-work} rather than those of~\cite{EP04} 
in order to facilitate easier comparisons with subsequent constructions
\cite{TZ06,Chechik13,Pettie-Span09}, and the new constructions of Section~\ref{sect:new-upper-bounds}.}
Thorup and Zwick also showed that this same stretch function also applies to 
their earlier~\cite{TZ05} construction of multiplicative $(2k+1)$-spanners with size $O(kn^{1+\f{1}{k+1}})$.
Pettie~\cite{Pettie-Span09} gave a construction of sublinear additive {\em spanners} whose size-stretch tradeoff is closer
to the Thorup-Zwick emulators.  For stretch function $d+O(kd^{1-\f{1}{k}} + 3^k)$ the size is 
$O(kn^{1+\f{(3/4)^{k-2}}{7 - 2(3/4)^{k-2}}})$, which is always $o(n^{1+ (3/4)^{k+3}})$ for any fixed $k$.
At their sparsest, Thorup and Zwick's emulators~\cite{TZ06} and Pettie's spanners~\cite{Pettie-Span09} have size 
$O(n\log\log n)$ and stretch 
$f(d) = d + O(\log\log n) \cdot d^{1-\Theta(1/\log\log n)} + (\log n)^{\log_2 3}$.  
Pettie~\cite{Pettie-Span09} gave an even sparser 
$(1+\epsilon,O(\epsilon^{-1}\log\log n)^{\log\log n})$-spanner 
with size $O(n\log\log(\epsilon^{-1}\log\log n))$.

\subsection{Lower Bounds}

Woodruff proved that any $k^{-1}n^{1+1/k}$-size spanner with stretch function $f$ must have $f(k)\ge 3k$.
As a corollary, additive $(2k-2)$-spanners must have size $\Omega(k^{-1}n^{1+1/k})$, independent of the status of the Girth Conjecture.
\Bollobas, Coppersmith, and Elkin~\cite{BCE06} showed that if the stretch $f$ is such that $f(d) = d$ for $d\ge D$, 
then $\Omega(n^2/D)$-size is necessary and sufficient for spanners and emulators.

In a recent surprise, Abboud and Bodwin~\cite{AbboudB16} proved that no additive $\beta$-spanners, emulators, nor distance oracles exist with $\beta=O(1)$ and exponent less than $4/3$.
More precisely, any construction of these three objects with additive $\beta=O(1)$ stretch has size 
$\Omega(n^{4/3}/2^{O(\sqrt{\log n})})$ and any construction 
with size $O(n^{4/3-\epsilon})$ has additive stretch $\beta = n^{\delta}$ for some $\delta=\delta(\epsilon)$.
This result explained why all prior additive spanner constructions had a strange transition at $4/3$~\cite{DHZ00,TZ06,BKMP10,Chechik13,BodwinW16,Knudsen14,Woodruff10},
but it did not suggest what the {\em optimal} stretch function should be for sparsity $n^{1+\delta}$ when $\delta\in[0,1/3)$.

\begin{figure}
\centering
\scalebox{.95}{
\begin{tabular}{|l|c|c|c||c|}
\multicolumn{1}{l}{}	& \multicolumn{4}{c}{\rb{1}{\bf\large Stretch Function}}\\\cline{2-5}
\multicolumn{1}{c|}{} 		& $d + O\paren{\sqrt{d}}$			& $d + O\paren{d^{\f{2}{3}}}$				&  $d + O\paren{d^{\f{3}{4}}}$ 				& $d + O\paren{kd^{1-\f{1}{k})}}$\\
\multicolumn{1}{c|}{}			&	or	& 	or 	&     or	&    or\\
\multicolumn{1}{c|}{\rb{1}{\bf\large Citation}}  & $\paren{1+\epsilon,O\paren{\fr{1}{\epsilon}}}$  & $\paren{1+\epsilon,O\paren{\fr{1}{\epsilon}}^2}$	& $\paren{1+\epsilon,O\paren{\fr{1}{\epsilon}}^3}$	& $\paren{1+\epsilon,O\paren{\fr{k}{\epsilon}}^{k-1}}$\istrut[2]{0}\\\cline{1-5}
Elkin \& Peleg   \hfill \Span 	& $O\paren{\epsilon^{-O(1)}n^{\f{8}{7}}}$	&	$O\paren{\epsilon^{-O(1)}n^{\f{16}{15}}}$	&	$O\paren{\epsilon^{-O(1)}n^{\f{32}{31}}}$	&	$O\paren{\paren{\fr{k}{\epsilon}}^{O(1)} n^{1+\f{1}{2^{k+1}-1}}}$\istrut[4]{6}\\\hline
\rb{-1}{Thorup} \hfill \Emul	& $O\paren{n^{\f{8}{7}}}$		&	$O\paren{n^{\f{16}{15}}}$				& $O\paren{n^{\f{32}{31}}}$				& $O\paren{kn^{1+\f{1}{2^{k+1}-1}}}$\istrut[3]{6}\\
\rb{1}{\& Zwick}	\hfill \Span		& $O\paren{n^{\f{4}{3}}}$		&	$O\paren{n^{\f{5}{4}}}$				& $O\paren{n^{\f{6}{5}}}$					& $O\paren{kn^{1+\f{1}{k+1}}}$\istrut[4]{0}\\\hline
Pettie 	 \hfill \Span 	& $O\paren{n^{\f{6}{5}}}$		& 	$O\paren{n^{\f{25}{22}}}$				& $O\paren{n^{\f{103}{94}}}$				& $O\paren{kn^{1+\frac{(3/4)^{k-2}}{7-2(3/4)^{k-2}}}}$\istrut[6]{8}\\\hline
Chechik	\hfill \Span 	& $\tilde{O}\paren{n^{\f{20}{17}}}$\istrut[3]{5}	& \multicolumn{3}{c|}{} \\\hline

{\bf New} 			\hfill \Span & $O\paren{\epsilon^{-\f{2}{7}}n^{\f{8}{7}}}$ &	$O\paren{\epsilon^{-\f{7}{15}}n^{\f{16}{15}}}$			& $O\paren{\epsilon^{-\f{18}{31}}n^{\f{32}{31}}}$		
																										& $O\paren{\paren{\fr{k}{\epsilon}}^h kn^{1+\f{1}{2^{k+1}-1}}}$\istrut[4]{6}\\\hline

\pbox{20cm}{\bf New Lower\\Bounds} \hfill \all	& $\Omega\paren{n^{\f{4}{3}-o(1)}}$ & 	$\Omega\paren{n^{\f{8}{7}-o(1)}}$		& $\Omega\paren{n^{\f{16}{15}-o(1)}}$	& $\Omega\paren{n^{1+\f{1}{2^{k}-1}-o(1)}}$\istrut[4]{6}\\\hline

\end{tabular}
}
\caption{\label{fig:prior-work}
A summary of spanners and emulators with 
$(1+\epsilon, O(k/\epsilon)^{k-1})$-type stretch and
sublinear additive stretch $d+O(kd^{1-\f{1}{k}})$.
Note: the new lower bounds do not contradict the upper bounds;
the lower bounds are for stretch functions with smaller leading constants in the 
$O(k/\epsilon)^{k-1}$ and $O(kd^{1-\f{1}{k}})$ terms.
In the last cell of the table, $h = \frac{3\cdot 2^{k-1} - (k+2)}{2^{k+1}-1} < 3/4$, which improves
the dependence on $\epsilon$ that can be obtained from modified versions of existing constructions~\cite{EP04,TZ06}.}
\end{figure}

\subsection{New Results}

\paragraph{Distance Oracle Lower Bounds.}

Our main result is a hierarchy of lower bounds for spanners, emulators, and distance oracles, which shows that 
tradeoffs offered by Thorup and Zwick's~\cite{TZ06} sublinear additive emulators~\cite{TZ06} 
and Elkin and Peleg's $(1+\epsilon,\beta)$-spanners cannot be substantially improved.  
Building on Abboud and Bodwin's~\cite{AbboudB16} $\Omega(n^{4/3-o(1)})$ lower bounds for additive spanners, 
we prove that for every integer $k\ge 2$ and $d < n^{o(1)}$,
there is a graph $\Graph_k$ on $n$ vertices
and $n^{1+\f{1}{2^{k}-1}-o(1)}$ edges such that any spanner with size $n^{1+\f{1}{2^{k}-1} - \epsilon}$, $\epsilon>0$,
stretches vertices at distance $d$ to at least $d + c_k d^{1-\f{1}{k}}$ for a constant $c_k =\Theta(1/k)$.
More generally, we exhibit graph families that cannot be \emph{compressed} into distance oracles on $n^{1+\f{1}{2^{k}-1} - \epsilon}$ bits such that distances can be recovered below this error threshold.
The consequences of this construction are that
the existing sublinear additive emulators~\cite{TZ06}, sublinear additive spanners~\cite{Pettie-Span09,Chechik13}, 
and $(1+\epsilon,\beta)$-spanners~\cite{EP04,TZ06,Pettie-Span09} are, to varying degrees, close to optimal.  Specifically,
\begin{itemize}
\item 
The $(d + O(kd^{1-\f{1}{k}} + 3^k))$-emulator~\cite{TZ06} with size $O(n^{1+\f{1}{2^{k+1}-1}})$ cannot be improved by more than a constant
factor in the stretch $O(kd^{1-\f{1}{k}})$, or by a $o(1)$ in the exponent $1+\f{1}{2^{k+1}-1}$.  

\item The sublinear additive {\em spanners} of Pettie~\cite{Pettie-Span09} and Chechik~\cite{Chechik13} probably have suboptimal exponents, but not by much.
For example, the exponent of Chechik's~\cite{Chechik13} $\tilde{O}(n^{20/17})$-size $(d+O(\sqrt{d}))$-spanner is within 0.034 of optimal and
the exponent of Pettie's~\cite{Pettie-Span09} $O(n^{25/22})$-size $(d+O(d^{2/3}))$-spanner is within 0.07 of optimal.

\item When $\epsilon \ge 1/n^{o(1)}$, the existing constructions of $(1+\epsilon, O(k/\epsilon)^{k-1})$-spanners~\cite{EP04,TZ06,Pettie-Span09} 
with size $O\paren{(k/\epsilon)^{O(1)}n^{1+\f{1}{2^{k+1}-1}}}$ cannot be substantially improved in either the additive $O(k/\epsilon)^{k-1}$ term or the exponent $1+\f{1}{2^{k+1}-1}$.
This follows from the fact that any spanner with stretch of type $(1+\hat{\epsilon},O(k/\hat{\epsilon})^{k-1})$, for every $\hat{\epsilon} \ge \epsilon$ functions as a $(d+O(kd^{1-\f{1}{k}}))$-spanner
for distances $d \le O(k/\epsilon)^{k}$.
However, there is no reason to believe that the {\em size} 
of such $(1+\epsilon,\beta)$-spanners must depend on $\epsilon$, as it does in the current constructions.
\end{itemize}

There is an interesting new hierarchy of {\em phase transitions} in the interplay between our lower bounds previous upper bounds~\cite{TZ06}.
Let $C$ be a sufficiently large constant and $c$ be a sufficiently small constant.
If one wants a graph compression scheme with stretch $f(d) = d + C\sqrt{d}$, 
then one needs only $\widetilde{O}(n^{8/7})$ bits of space to store an emulator~\cite{TZ06}.
However, if we want a slightly improved stretch $f(d) = d + c\sqrt{d}$, then, by our lower bound,
the space requirement leaps to $\Omega(n^{4/3 - o(1)})$.
In general, the optimal space for stretch function $f(d) = d + c' d^{1 - 1/k}$ takes a polynomial jump as we shift 
$c'$ from some sufficiently large constant $O(k)$ to a sufficiently small constant $\Omega(1/k)$.

An important take-away message from our work is that the sublinear additive stretch functions of type $f(d) = d + O(d^{1 - 1/k})$ used by Thorup and Zwick~\cite{TZ06} 
are exactly of the ``right'' form.  For example, such plausible-looking stretch functions as $f(d) = d + O(d^{1/3})$ 
and $f(d) = d + O(d^{2/3} / \log d)$ could only exist in the narrow bands not covered by our lower bounds:
between space $n^{4/3-o(1)}$ and $n^{4/3}$ and between space $n^{8/7-o(1)}$ and $n^{8/7}$.

\paragraph{Spanner Upper Bounds.}

To complement our lower bounds we provide new upper bounds on the sparsity of spanners with stretch of type
$(1+\hat{\epsilon},O(k/\hat{\epsilon})^{k-1})$, which holds for every $\hat{\epsilon} \ge \epsilon$.
Our new spanners have size $O((k/\epsilon)^{h}kn^{1+\f{1}{2^{k+1}-1}})$, where $h=\f{3\cdot 2^{k-1} - (k+2)}{2^{k+1}-1} < 3/4$.
This construction improves on the bounds that can be derived from~\cite{TZ06,EP04,Pettie-Span09} in the dependence on 
$\epsilon$.\footnote{No bounds of this type are stated explicitly in~\cite{TZ06} or~\cite{EP04}.
In order to get a bound of this type---with the $1+\f{1}{2^{k+1}-1}$ exponent
and some $\poly(1/\epsilon)$ dependence on $\epsilon$---
one must only adjust the sampling probabilities of~\cite{TZ06};
however, adapting \cite{EP04} requires slightly more significant changes [Elkin, personal communication, 2013].}
For example, one consequence of this result is an $O(D^{1/7}n^{8/7})$-size spanner that functions 
as a $(d+O(\sqrt{d}))$-spanner for all $d\le D$.  This size bound is an improvement on 
Chechik's $(d+O(\sqrt{d}))$-spanner, as long as $D < n^{4/17}$.

\paragraph{Hopset Lower Bounds.}
Hopsets are fundamental objects that are morally similar to emulators.
They were explicitly defined by Cohen \cite{Cohen00} but used implicitly in many earlier works \cite{UY91,KS97,Cohen97,SS99}.
Let $G = (V, E, w)$ be an arbitrary undirected \emph{weighted} graph and $H \subset {V\choose 2}$ be a set of edges called the \emph{hopset}.  
In the united graph $G' = (V, E\cup H,w)$, the weight of an edge $(u,v)\in H$ is the length of the shortest path in $G$ between $u$ and $v$.
Define the \emph{$\beta$}-limited distance in $G'$, denoted $\dist^{(\beta)}_{G'}(u, v)$, 
to be the length of the shortest path from $u$ to $v$ that uses at most $\beta$ edges in $G'$.\footnote{Note that whereas $\dist = \dist^{(\infty)}$ is metric, $\dist^{(\beta)}$ does not necessarily satisfy the triangle inequality for finite $\beta$.}
We call $H$ a \emph{$(\beta,\eps)$-hopset}, where $\beta \ge 1, \epsilon>0$, 
if, for any $u, v \in V$, we have
\[
\dist_{G'}^{(\beta)}(u, v) \le (1 + \eps) \dist_G(u, v).
\]
There is clearly some three-way tradeoff between $\beta,\eps,$ and $|H|$.
Elkin and Neiman \cite{EN16} recently showed that any graph has a $(\beta,\eps)$-hopset with size $\tilde{O}(n^{1+1/\kappa})$, where 
$\beta = O\left( \frac{\log \kappa}{\eps} \right)^{\log \kappa}$.\footnote{It is likely that Elkin and Neiman's tradeoff could be more precisely stated as follows:
for any positive integer $k$ and $\epsilon > 0$, 
there is an $\tilde{O}(n^{1+\frac{1}{2^{k+1}-1}})$ size $(\beta,\eps)$-hopset with $\beta = O(k/\epsilon)^k$.}

In this work, we show that any construction of $(\beta, \eps)$-hopsets with worst-case 
size $n^{1 + \frac{1}{2^{k} - 1} - \delta}$, where $k\ge 1$ is an integer and $\delta > 0$, 
must have $\beta = \Omega_k\left(\frac{1}{\eps}\right)^{k}$.
For example, hopsets with $\beta = o(1/\epsilon)$ must have size $\Omega(n^{2-o(1)})$ 
and those with $\beta = o(1/\epsilon^2)$ must have size $\Omega(n^{4/3-o(1)})$.
This essentially matches the Elkin-Neiman tradeoff, up to a constant in $\beta$ that depends on $k$.

\paragraph{Lower Bounds on Shortcutting Digraphs.}
In 1992, Thorup~\cite{Thorup92} conjectured that the diameter of any directed graph $G = (V, E)$ could be drastically reduced with a small number of \emph{shortcuts}.
In particular, there exists another directed graph $G' = (V,E')$ with $|E'| = O(|E|)$ and the same transitive closure relation as $G$ ($\leadsto$),
such that if $u\leadsto v$, then there is a $\poly(\log n)$-length path from $u$ to $v$ in $G'$.
Thorup's conjecture was confirmed for trees~\cite{Thorup92,Thorup97-par-shortcut,Chaz87}
and planar graphs~\cite{Thorup95}, but finally refuted by Hesse~\cite{Hesse03} for general graphs.
In this paper we give a simpler 1-page proof of Hesse's refutation by modifying our spanner lower bound construction.

\paragraph{Spanners for High-Girth Graphs.}
Our lower bounds apply to the class of {\em all} undirected graph metrics.
Baswana, Kavitha, Mehlhorn, and Pettie~\cite{BKMP10} gave sparser spanners for a \emph{restricted} class of graph metrics.
Specifically, graphs with girth at least $2\gamma+1$ contain additive $6\gamma$-spanners with size $O(n^{1+\f{1}{2\gamma+1}})$.
We adapt our lower bound construction to prove that the exponent $1+\f{1}{2\gamma+1}$ is optimal, assuming the Girth Conjecture, and more generally we
give lower bounds on compression schemes for the class of graphs with girth at least $2\gamma+1$.
Any scheme that uses $n^{1+\f{1}{(\gamma+1)2^{k-1}-1}-\epsilon}$ bits must have stretch 
$f(d) \ge d + \Omega(d^{1-1/k})$, for any $d < n^{o(1)}$.
We also give new constructions of emulators and spanners for girth-$(2\gamma+1)$ graphs 
that shows that the exponent $1+\f{1}{(\gamma+1)2^{k-1}-1}$ is the best possible.

\subsection{Related Work}

Much of the recent work on spanners has focused on preserving or approximating
distances between specified {\em pairs} of vertices.
See~\cite{CE06,AbboudB16-SODA,AbboudB16} for lower bounds on pairwise spanners
and \cite{CE06,Pettie-Span09,CyganGK13,KavithaV15,Kavitha15,AbboudB16-SODA,Parter14,RTZ05} for upper bounds.
Pairwise spanners have proven to be useful tools for constructing (sublinear) additive spanners;
see~\cite{Pettie-Span09,Chechik13,BodwinW15}.

The space/stretch tradeoffs offered by the best distance 
oracles~\cite{Chechik15,PatrascuR14,PatrascuRT12,AbrahamG11,AgarwalG13,Agarwal14,ElkinP16}
are strictly worse than those of the best spanners and emulators, even though
distance oracles are entirely {\em unconstrained} in how they encode the graph metric.
This is primarily due to the requirement that distance oracles respond to queries quickly.
There are both unconditional~\cite{SommerVY09} 
and conditional~\cite{CohenP10,PatrascuR14,PatrascuRT12} lower bounds suggesting
that distance oracles with reasonable query time cannot match the best spanners or emulators.

\subsection{Organization}
In Section~\ref{sect:lower-bound} we generalize Abboud and Bodwin's construction~\cite{AbboudB16}
to give a spectrum of lower bounds against graph compression schemes 
with sublinear additive stretch and $(1+\epsilon,\beta)$-stretch.
In Section~\ref{sect:new-upper-bounds} we combine ideas from Thorup and Zwick's emulators~\cite{TZ06} and
Pettie's spanners~\cite{Pettie-Span09} to attain a new bound on sparse $(1+\epsilon,\beta)$-spanners.
In Section~\ref{sect:hopset} we prove tight bounds on $(\beta,\epsilon)$-hopsets.
In Section~\ref{sect:girth-lbs} we generalize the construction of Section~\ref{sect:lower-bound} 
to give stretch-sparseness lower bounds on the class of graphs with girth at least $2\gamma+1$.
Matching upper bounds for graphs of gith $2\gamma+1$ are given in Section~\ref{sect:girth-upper-bounds}.
In Section~\ref{sect:shortcut} we give a simpler refutation of Thorup's shortcutting conjecture.
In Section~\ref{sect:conclusion} we highlight some remaining open problems.

\section{The Lower Bound Construction}\label{sect:lower-bound}

The graphs in this section are parameterized by an integer $\ell \ge 2$, 
which determines the length of the hardest shortest paths to approximate.
Each graph has a layered structure, consisting of a layer of {\em input ports},
some number of interior layers, and a layer of {\em output ports}.
In any given graph construction, $p$ is the number of input/output ports.
The construction of $\Base[p]$, $\DblBase[p]$, and $\Graph_2[p]$
is essentially the same as the graphs constructed by 
Abboud and Bodwin~\cite{AbboudB16}.

\subsection{The First Base Graph}

Let $\Base[p] = (L_0 \cup \ldots \cup L_{\ell}, E)$ be an $(\ell+1)$-layer graph with the following properties:
\begin{itemize}
\item $\Base[p]$ has $p$ vertices per layer, and all edges
connect vertices in adjacent layers.
\item Each edge $e$ is assigned a $\lab(e) \in \Labels[p]$. 
For any vertex $u$, the edges connecting $u$ to the previous layer have distinct labels
and the edges connecting $u$ to the subsequent layer have distinct labels.
\item Let $\Pairs(\Base[p]) \subset L_0\times L_{\ell}$ be a set of pairs of input/output ports.  
Each $(u_0,u_\ell)\in \Pairs(\Base[p])$
has the property that there exists a {\em unique} shortest path $(u_0,u_1,\ldots,u_\ell)$.
Moreover, $\lab(u_0,u_1) = \cdots = \lab(u_{\ell-1},u_{\ell})$, any two of these paths are edge disjoint, and the edge set $E$ is precisely the union of these paths over all pairs in $\Pairs(\Base[p])$.
\end{itemize}
These properties imply that the number of vertices and edges in $\Base[p]$ is 
$\nBase[p] \bydef (\ell+1)p$ and $\mBase[p] \bydef |E(\Base[p])| = \ell\cdot |\Pairs(\Base[p])|$.

Refer to~\cite{Alon01,AbboudB16} for constructions of $\Base[p]$ satisfying these requirements, 
or to \cite{CE06} for a construction without the layered structure.  For the sake of completeness we
give a short sketch of how $\Base[p]$ is constructed using average-free sets~\cite{Alon01,AbboudB16}.
Let $\Labels[p] \subset \{1,\ldots,\lfloor p/\ell \rfloor\}$ 
be an $\ell$-average-free set, i.e., one for which the equation
\begin{align*}
\ell \cdot x_0 &= x_1 + x_2 + \cdots + x_{\ell}, \; \; \mbox{where $x_0,x_1,\ldots,x_{\ell}\in \Labels[p]$}
\intertext{%
has no solutions, except the trivial $x_0 = x_1 = \cdots = x_{\ell}$.  
Let $u_{i,j}$ denote the $j$th vertex in $L_i$.  The edge set consists of 
}
E &= \{(u_{i,j}, u_{i+1,j'}) \;|\; i\in[0,\ell) \mbox{ and } (j'-j) \operatorname{mod} p \in \Labels[p]\},
\intertext{%
with $\lab(u_{i,j},u_{i+1,j'}) = (j'-j)\operatorname{mod} p$.  The pair set consists of
}
\Pairs(\Base[p]) &= \{(u_{0,j}, u_{\ell,(j+\ell x)\operatorname{mod} p}) \;|\; \mbox{ for all $j\in\{0,\ldots,p-1\}$ and $x\in\Labels[p]$}\}
\end{align*}
The $\ell$-average free property of $\Labels[p]$ ensures that $(u_{0,j}, u_{1,j+x},u_{2,j+2x},\ldots,u_{\ell, j+\ell x})$ is the unique
shortest path between its endpoints.

\subsection{The Second Base Graph}

Roughly speaking, $\DblBase[p]$ is obtained by taking a certain product of two copies of $\Base[\sqrt{p}]$.\footnote{Here we let $\sqrt{p}$ be short for $\floor{\sqrt{p}}$.  Ignoring issues of integrality only introduces $1+o(1)$ factors in all the bounds.}
Let $L_0^0\cup\cdots\cup L_\ell^0$ and $L_0^1\cup\cdots\cup L_\ell^1$ be the 
vertex sets of copies $\Base^0 [\sqrt p]$ and $\Base^1 [\sqrt p]$, each with respective pair-sets $\Pairs^0$ and $\Pairs^1$.
$\DblBase[p]$ is a layered graph with vertex set $\Layer{0} \cup\cdots\cup \Layer{2\ell}$ where
$\Layer{i} = L_{i/2}^0 \times L_{i/2}^1$ when $i$ is even and $\Layer{i} = L_{\ceil{i/2}}^0 \times L_{\floor{i/2}}^1$ when $i$ is odd.
Vertices in $\DblBase$ are identified with vertex pairs from $V(\Base^0)\times V(\Base^1)$.
When $i$ is even, an edge $((u,v),(u',v))$ exists between layers $\Layer{i}$ and $\Layer{i+1}$ iff $(u,u')\in E(\Base^0)$.  Similarly, 
when $i$ is odd, an edge $((u,v),(u,v'))$ exists between layers $\Layer{i}$ and $\Layer{i+1}$ iff $(v,v')\in E(\Base^1)$.
An edge in $\DblBase$ inherits the label of the corresponding edge in $\Base$, so the label set for $\DblBase[p]$ is
$\Labels[\sqrt{p}]$.
The pair-set for $\DblBase$ is defined to be 
\[
\Pairs(\DblBase[p]) = \{((u_0,v_0), (u_{\ell},v_{\ell})) \;|\; (u_0,u_\ell)\in \Pairs^0 \mbox{ and } (v_0,v_{\ell}) \in \Pairs^1\}.
\]  
Observe that any length-$2\ell$ path from layer $\Layer{0}$ to $\Layer{2\ell}$ corresponds to picking edges alternately
from two paths, one from $L_0^0$ to $L_\ell^0$ in $\Base^0$ and one from $L_0^1$ to $L_\ell^1$ in $\Base^1$.
Lemma~\ref{lem:properties-of-DblBase} summarizes the relevant properties of $\DblBase$ and $\Pairs(\DblBase)$.

\begin{lemma}\label{lem:properties-of-DblBase}
Let $\Loss_{\ell}(p)$ be a non-decreasing function of $p$ 
such that $|\Labels[p]| \ge p/\Loss_{\ell}(p)$, $|\Labels[p]| \le p/2$, and $|\Pairs(\Base[p])| \ge p^2 / \Loss_{\ell}(p)$.
The graph $\DblBase = \DblBase[p]$ has the following properties.
\begin{enumerate}
\item It has $\nDblBase[p] \le (2\ell+1)p$ vertices and $\mDblBase[p] \ge (1-o(1))2\ell p^{3/2} / \Loss_{\ell}(\sqrt{p})$ edges.
\item The vertices of each pair in $\Pairs(\DblBase[p])$ are connected by a unique shortest path in $\DblBase[p]$,
whose edge labels alternate between two labels in $\Labels[\sqrt{p}]$.
\item By definition, $|\Pairs(\DblBase[p])| = (|\Pairs(\Base[\sqrt{p}]|)^2 \ge p^2 / (\Loss_{\ell}(\sqrt{p}))^2$.
\end{enumerate}
\end{lemma}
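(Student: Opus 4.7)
The plan is to establish the three assertions in order, with the projection argument for part 2 being the conceptual crux while the other two are essentially counting from the definitions.

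For part 1, I would first observe that each layer $\Layer{i}$ of $\DblBase[p]$ is a product of one layer of $\Base^0[\sqrt{p}]$ with one of $\Base^1[\sqrt{p}]$, hence has exactly $(\sqrt{p})^2 = p$ vertices; since there are $2\ell + 1$ layers, the vertex bound follows. For edges, I would analyze a single even-to-odd transition between $\Layer{i}$ and $\Layer{i+1}$: by definition, the number of edges there equals (number of edges in the corresponding layer transition of $\Base^0[\sqrt{p}]$) $\cdot$ (number of vertices in one layer of $\Base^1[\sqrt{p}]$). The explicit construction of $\Base[\sqrt{p}]$ gives $\sqrt{p} \cdot |\Labels[\sqrt{p}]|$ edges per layer transition, and multiplying by $\sqrt{p}$ and using $|\Labels[\sqrt{p}]| \ge \sqrt{p}/\Loss_\ell(\sqrt{p})$ yields at least $p^{3/2}/\Loss_\ell(\sqrt{p})$ edges per transition. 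Summing over all $2\ell$ transitions (the odd-to-even ones are symmetric) gives the claimed bound.

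For part 2 (the main content), consider a pair $((u_0, v_0), (u_\ell, v_\ell)) \in \Pairs(\DblBase[p])$ with $(u_0, u_\ell) \in \Pairs^0$ and $(v_0, v_\ell) \in \Pairs^1$. The key move is to project any path $P$ from $(u_0,v_0)$ to $(u_\ell,v_\ell)$ in $\DblBase[p]$ onto its two coordinates: by the construction, each edge of $P$ advances exactly one coordinate (either the $\Base^0$-component on an even-to-odd step, or the $\Base^1$-component on an odd-to-even step). Hence the projection $P^0$ is a walk in $\Base^0$ from $u_0$ to $u_\ell$, which must have length at least $\ell$ since $u_0 \in L_0^0, u_\ell \in L_\ell^0$ and $\Base^0$ is layered; the same holds for $P^1$. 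Therefore $|P| = |P^0| + |P^1| \ge 2\ell$, and any realizing path of length exactly $2\ell$ forces both projections to have length exactly $\ell$. By the uniqueness property of $\Pairs^0$ and $\Pairs^1$, each projection must then be the unique shortest path in its respective copy, so $P$ is uniquely determined as the interleaving $((u_0,v_0),(u_1,v_0),(u_1,v_1),(u_2,v_1),\ldots,(u_\ell,v_\ell))$. Since the edges along each $\Base^i$-projection all share a common label (by the base-graph property), the labels along $P$ alternate between the two labels $\lambda_0, \lambda_1 \in \Labels[\sqrt{p}]$ of the two base-graph paths.

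Part 3 is immediate from the definition of $\Pairs(\DblBase[p])$ as the Cartesian product $\Pairs^0 \times \Pairs^1$, combined with the hypothesized lower bound $|\Pairs(\Base[\sqrt{p}])| \ge p/\Loss_\ell(\sqrt{p})$. The only real subtlety anywhere in the argument is making sure the interleaving step in part 2 is forced: that depends crucially on the fact that in an even-to-odd step only the first coordinate can change and in an odd-to-even step only the second, which rules out any alternative length-$2\ell$ path. Once this projection observation is in hand, the rest of the lemma reduces to combining the two independent base-graph guarantees in parallel.
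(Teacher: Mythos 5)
Your proof is correct and, in essence, the same argument as the paper's, which treats Parts 2 and 3 as "following directly from the construction" and gives only a terse sketch for Part 1. Your version usefully makes the projection argument in Part 2 explicit: projecting a length-$2\ell$ path onto its two coordinates to get walks of length at least $\ell$ in $\Base^0$ and $\Base^1$, then using the uniqueness of shortest paths for pairs in $\Pairs^0, \Pairs^1$ to force the interleaving. For Part 1 you count edges per layer transition, whereas the paper counts each edge of $\Base^0,\Base^1$ being duplicated $\sqrt{p}$ times; both give $\ge 2\ell p^{3/2}/\Loss_\ell(\sqrt{p})$. The only cosmetic gap is that you do not explicitly account for the $(1-o(1))$ factor coming from replacing $\sqrt{p}$ by $\lfloor\sqrt{p}\rfloor$, but the paper itself brushes this aside with dummy vertices.
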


\begin{proof}
Part 1. Each layer of $\DblBase$ contains $(\sqrt{p})^2$ vertices; there is no harm in adding dummy vertices 
to round it up to $p$.  There are at least $\ell \sqrt{p}^2/\Loss_{\ell}(\sqrt{p})$ edges in each of $\Base^0$ and $\Base^1$, 
and each edge of $\Base^0,\Base^1$ is duplicated $\sqrt{p}$ times in the construction of $\DblBase$.
Parts 2,3. Follows directly from the construction of $\DblBase$, and that $\Pairs(\Base[\sqrt{p}])$ has unique
shortest paths in $\Base$.
\end{proof}

A standard extension of Behrend's construction~\cite{Behrend46} of progression-free sets (see~\cite[Appendix]{AbboudB16})
shows that $\Loss_{\ell}(p) = 2^{O(\sqrt{\log p\log \ell})}$, so if $\ell = p^{o(1)}$ then $\Loss_{\ell}(p) = p^{o(1)}$ as well,
and if $\ell = p^{\epsilon}$ for an $\epsilon>0$ then $\Loss_{\ell}(p) = p^{\delta}$ for some $\delta=\delta(\epsilon) > \epsilon$.
We are most interested in the cases when $\ell, \Loss_{\ell}(p) = p^{o(1)}$.

\subsection{A Recursive Construction}

In this section we construct a hierarchy $\{(\Graph_k,\Pairs_k)\}_{k\ge 1}$ of hard graphs $(\Graph_k)$ and corresponding 
pair-sets $(\Pairs_k)$ such that each pair in $\Pairs_k$ has a unique shortest path in $\Graph_k$.
We will show that, 
for any $k \ge 2$ and sufficiently small constant $c_k$,  
any spanner of $\Graph_k$ with stretch function $f(d) = d + c_k d^{1-\frac{1}{k}} + \tilde{O}(1)$ must include at least $|\Pairs_k|$ edges.
Each $\Graph_k[p]$ is a layered graph with $p$ input ports, $p$ output ports, and some number of interior layers.
In other words, the first layer (``input ports'') and last layer (``output ports'') have size $p$ each while the interior layers may have different sizes, and each node pair in $\Pairs_k$ is composed of one input port and one output port.
Let $\RevGraph_k[p]$ denote the graph with the same topology as $\Graph_k[p]$ but with layers reversed; that is, the roles of input and output ports are swapped.

\paragraph{The Base Case.}
The base case graph $\Graph_1[p] = (\{1,\ldots,2p\}, \{1,\ldots,p\}\times\{p+1,\ldots,2p\})$ is a complete bipartite graph
on $2p$ vertices and its corresponding pair-set $\Pairs_1[p] = \{1,\ldots,p\}\times\{p+1,\ldots,2p\}$ has size $p^2$. 

\paragraph{The Inductive Case.}
Let us first give a very informal overview of the construction, then discuss how we plan to prove its correctness.
The goal is to produce a new graph $\Graph_k$ that contains within it many copies of $\Graph_{k-1}$.
The shortest path $P_{s,t}$ for each $(s,t)\in \Pairs_k$ joins an input port $s$ to an output port $t$, in $\Graph_k$,
and meanders through many copies of $\Graph_{k-1}$.  When $P_{s,t}$ goes through a copy of $\Graph_{k-1}$ it 
enters and exits it at a particular input/output port pair, say $(x,y)$.  We hope that $(x,y)\in \Pairs_{k-1}$ (a success); if this holds
for {\em all} the copies of $\Graph_{k-1}$ intersected by $P_{s,t}$ then any aggressive sparsification of these copies
will introduce a significant additive error in each copy.  Unfortunately, while $|\Pairs_{k-1}|$ is large, it is not {\em that} large.
Only a tiny $o(1)$-fraction of the set of input/output port pairs of $\Graph_{k-1}$ appear in $\Pairs_{k-1}$.  
Thus, if $P_{s,t}$ walks into and out of each $\Graph_{k-1}$
through random ports, it is likely to miss the pairs in $\Pairs_{k-1}$ (a failure).  

The problem with this approach is not the random assignment of input/output ports but the {\em independence} across
copies of $\Graph_{k-1}$.  We solve this problem by \emph{correlating} the success or failure events associated with $P_{s,t}$.
That is, we ensure that $P_{s,t}$ either enters/leaves \emph{every} copy of $\Graph_{k-1}$ along a pair in $\Pairs_{k-1}$, 
or it enters/leaves \emph{no} copy of $\Graph_{k-1}$ using a pair in $\Pairs_{k-1}$.
Thus, many of the potential pairs are useless and may be discarded, but some of the pairs $(s,t)$
must accumulate lots of error at each copy of $\Graph_{k-1}$ that $P_{s,t}$ touches.\\

We now give this argument in more formality.
When $k\ge 2$ we construct $\Graph_k[p]$ from $\Graph_{k-1}[\cdot]$ and $\DblBase = \DblBase[p]$ as follows.  
Let the label-set of $\DblBase$
be $\Labels = \Labels[\sqrt{p}]$ and $p' = |\Labels|$.
Let $\Graph_{k-1}, \RevGraph_{k-1}$ be the standard and reversed copies of 
$\Graph_{k-1}[p']$
and 
$\pi : \Labels \rightarrow \{1,\ldots,p'\}$ be a {\em port assignment permutation} selected uniformly at random.

Recall that $\DblBase$ consists of layers $\Layer{0},\ldots,\Layer{2\ell}$.  
Layers $\Layer{0}$ and $\Layer{2\ell}$ become the input and output ports of $\Graph_k$ and are left as-is.  
For each vertex $u$ in an interior layer $\Layer{i}$, we replace $u$ with a graph $\Graph(u)$, 
which is a copy of $\Graph_{k-1}$ if $i$ is odd and $\RevGraph_{k-1}$ if $i$ is even.
For each former edge $(u,u') \in \Layer{i}\times\Layer{i+1}$ in $\DblBase$ with $\lab(u,u')=a$, 
we replace it with a path of length $(2\ell-1)^{k-1}$ connecting the $\pi(a)$th output port of 
$\Graph(u)$ (or leave it at $u$ if $i=0$) and the $\pi(a)$th input port of $\Graph(u')$ (or leave it at $u'$ if $i+1=2\ell$.)
The resulting graph is $\Graph_k[p]$; see Figure~\ref{fig:LB-construction} for a diagram.  
It remains to define the new pair-set $\Pairs_k[p]$.

Let $(u_0,u_{2\ell}) \in \Pairs(\DblBase)$ be one of the pairs in $\DblBase$, and suppose
the edges on the unique shortest path from $u_0$ to $u_{2\ell}$ alternate between labels `$a$' and `$b$'.
The corresponding path $Q_{(u_0, u_{2\ell})}$ in $\Graph_k$ passes through some $\Graph(u_1),\Graph(u_2),\ldots,\Graph(u_{2\ell-1})$,
where $\Graph(u_1),\Graph(u_3),\ldots$ are copies of 
$\Graph_{k-1}[p']$ and $\Graph(u_2),\Graph(u_4),\ldots$ are copies of $\RevGraph_{k-1}[p']$. 

By construction, $Q_{(u_0, u_{2\ell})}$ enters $\Graph(u_i)$ at the $\pi(a)^{\mathrm{th}}$ input port and leaves at the $\pi(b)^{\mathrm{th}}$
output port, if $i$ is odd, or the reverse if $i$ is even.  Up to reversal, the input/output terminals through each $\Graph(u_i)$
are identical, for all $i \in [1,2\ell-1]$.  The pair-set $\Pairs_{k-1}[p]$ consists of all $(u_0,u_{2\ell})\in \Pairs(\DblBase)$ (whose
unique shortest path in $\DblBase$ is labeled with, say, $a,b$) for which $(\pi(a),\pi(b)) \in \Pairs_{k-1}[p']$.

\begin{figure}[h]
\centering
\scalebox{.45}{\includegraphics{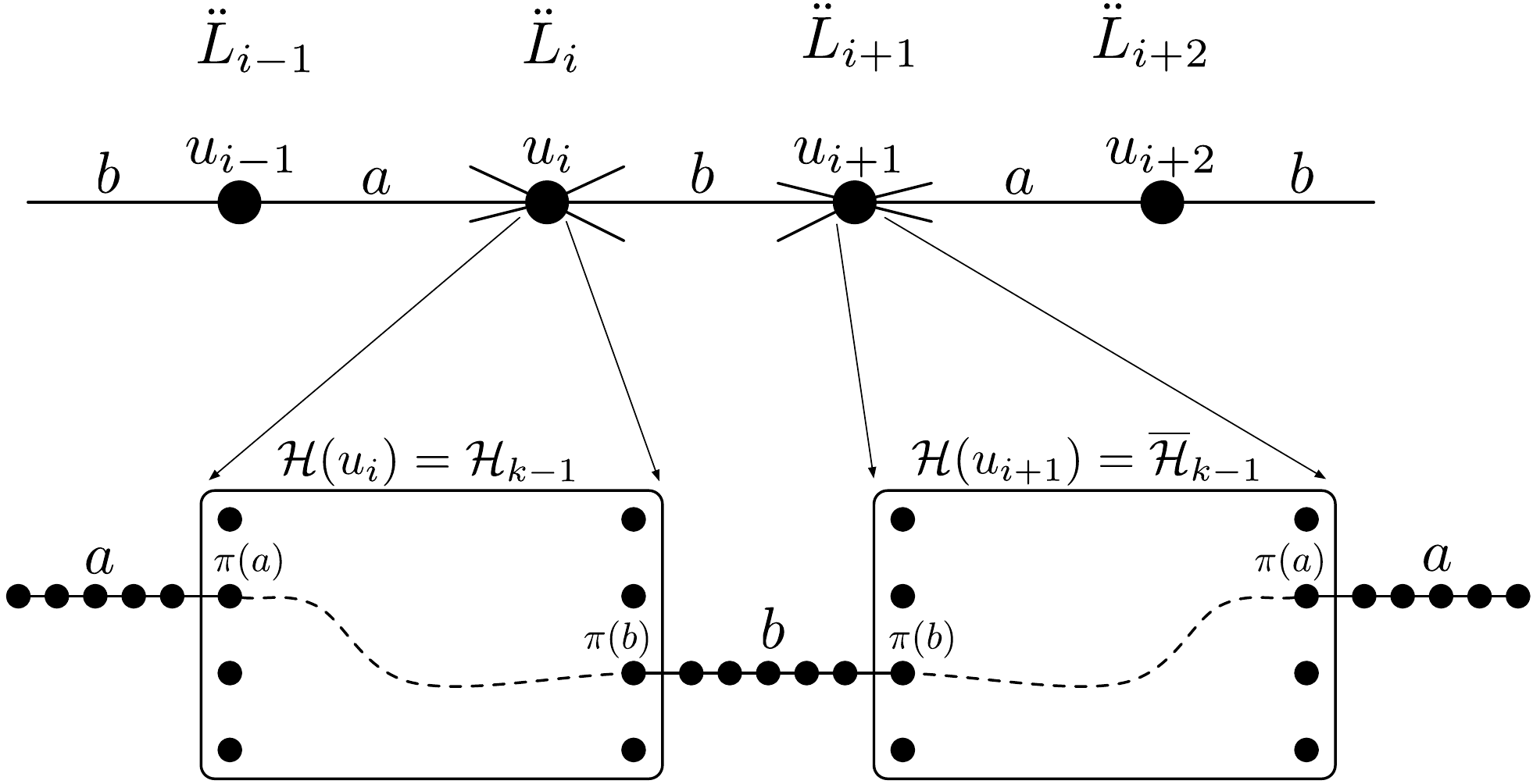}}
\caption{\label{fig:LB-construction}\small The edge-labels of a shortest path $(u_0,u_1,\ldots,u_{2\ell})$ for a pair $(u_0,u_{2\ell})\in\Pairs(\DblBase)$ always alternate between some $a,b\in\Labels$.
To form $\Graph_k$ we substitute for each vertex $u\in\DblBase$ a graph $\Graph(u)$, which is a 
copy of either $\Graph_{k-1}$
or $\RevGraph_{k-1}$, depending on whether $u$ appears in an odd or even numbered layer, respectively,
then replace each edge $(u,u')$ with a path of length $(2\ell-1)^{k-1}$.  The endpoints of this path
are the $(\pi(\lab(u,u')))$th output/input ports of $\Graph(u)$ and $\Graph(u')$.
If $(\pi(a),\pi(b))$ is in $\Pairs_{k-1}$, the pair set of $\Graph_{k-1}$, then there is a {\em unique}
shortest path in $\Graph_{k-1}$ from its $\pi(a)$th input port to its $\pi(b)$th output port.}
\end{figure}

\begin{lemma} \label{lem: pair set size}
The expected size of $\Pairs_k$ is $\f{p^2}{(\Loss_{\ell}(\sqrt{p}))^2} \cdot \f{|\Pairs_{k-1}|}{(p')^2}$.  
Assuming $\Loss_{\ell}(\cdot)$ is a nondecreasing function for all $\ell$, the expected
size of $\Pairs_k$ is on the order of $p^2 / (\Loss_{\ell}(\sqrt{p}))^{2(k-1)}$.
\end{lemma}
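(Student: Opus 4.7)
The plan is a two-step argument: a linearity-of-expectation computation for one recursive step, then unrolling through all $k-1$ levels. For the single-step claim, I would fix an arbitrary pair $(u_0, u_{2\ell}) \in \Pairs(\DblBase[p])$ and compute the probability (over the random $\pi$) that it survives into $\Pairs_k$. By Lemma~\ref{lem:properties-of-DblBase}(2), the unique shortest path in $\DblBase$ between $u_0$ and $u_{2\ell}$ has edge labels alternating between two \emph{distinct} values $a, b \in \Labels[\sqrt{p}]$. Since $\pi$ is a uniformly random permutation of $\Labels$ into $\{1,\ldots,p'\}$, the image $(\pi(a),\pi(b))$ is then a uniformly random ordered pair of distinct elements of $\{1,\ldots,p'\}^2$. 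By definition, the pair is retained in $\Pairs_k$ precisely when $(\pi(a),\pi(b)) \in \Pairs_{k-1}[p']$, an event of probability $|\Pairs_{k-1}[p']|/(p'(p'-1))$, which agrees with $|\Pairs_{k-1}[p']|/(p')^2$ up to a $1+o(1)$ factor. Combining this with the bound $|\Pairs(\DblBase[p])| \geq p^2/(\Loss_{\ell}(\sqrt{p}))^2$ from Lemma~\ref{lem:properties-of-DblBase}(3) and linearity of expectation gives the first claim.

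For the asymptotic bound I would unroll the recursion. Set $p_k := p$, $p_{j-1} := |\Labels[\sqrt{p_j}]|$, and $r_j := \mathbb{E}[|\Pairs_j[p_j]|]/p_j^2$. The first claim then reads $r_j \geq r_{j-1}/(\Loss_{\ell}(\sqrt{p_j}))^2$, while the base case $|\Pairs_1[p_1]| = p_1^2$ gives $r_1 = 1$. Iterating yields
\[
r_k \;\geq\; \prod_{j=2}^{k} \frac{1}{(\Loss_{\ell}(\sqrt{p_j}))^2}.
\]
Since $p_j \leq p$ for every $j$ and $\Loss_{\ell}$ is nondecreasing, each factor is at least $1/(\Loss_{\ell}(\sqrt{p}))^2$, so $\mathbb{E}[|\Pairs_k[p]|] \geq p^2/(\Loss_{\ell}(\sqrt{p}))^{2(k-1)}$, exactly as claimed.

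The only subtlety is the distinctness $a \neq b$ on the alternating shortest path of $\DblBase$, without which $(\pi(a),\pi(b))$ would be degenerate. This is immediate from the construction of $\DblBase$: such a length-$2\ell$ shortest path interleaves edges from one nontrivial shortest path in $\Base^0$ (whose common label contributes $a$) with edges from one in $\Base^1$ (contributing $b$), and the two cannot coincide because they live in different copies of $\Base[\sqrt{p}]$. Apart from checking this, the proof is a mechanical combination of linearity of expectation with the monotonicity of $\Loss_{\ell}$, so I do not anticipate any serious obstacle.
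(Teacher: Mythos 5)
Your proposal is correct and follows essentially the same route as the paper: compute for each candidate in $\Pairs(\DblBase[p])$ the probability of survival under the random port assignment, apply linearity of expectation for the single-step claim, then iterate/induct using monotonicity of $\Loss_\ell$ for the asymptotic bound. The one place you are genuinely more careful than the paper is the probability computation: you correctly note that since $\pi$ is a uniformly random \emph{permutation} and $a \neq b$, the pair $(\pi(a),\pi(b))$ is uniform over ordered pairs of \emph{distinct} indices, so the survival probability is $|\Pairs_{k-1}|/(p'(p'-1))$ rather than exactly $|\Pairs_{k-1}|/(p')^2$ as the paper states (the paper is implicitly absorbing this $1+o(1)$ discrepancy, just as it treats $|\Pairs(\DblBase[p])|$ as $p^2/\Loss_\ell(\sqrt p)^2$ when it is only a lower bound). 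Your explicit verification that $a \neq b$ — since the two alternating labels come from the two disjoint factors $\Base^0$ and $\Base^1$ — is a reasonable thing to spell out and is indeed the reason the permutation argument is nondegenerate. The unrolling via $r_j := \E[|\Pairs_j|]/p_j^2$ and the product over levels is a cosmetic repackaging of the paper's induction; both rely on the same monotonicity fact that $p_j \le p$ forces $\Loss_\ell(\sqrt{p_j}) \le \Loss_\ell(\sqrt p)$.
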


\begin{proof}
By definition of $\Loss_{\ell}(\cdot)$ and the construction of $\DblBase[p]$, 
there are $p^2 / (\Loss_{\ell}(\sqrt{p}))^2$ candidate pairs in $\Pairs(\DblBase[p])$,
each of which, say $(u_0,u_{2\ell})$, is associated with two alternating labels, say $a$ and $b$.
Since a uniformly random
input/output pair $(\pi(a),\pi(b))$ is in $\Pairs_{k-1}$ with probability $|\Pairs_{k-1}|/(p')^2$,
$(u_0,u_{2\ell})$ is retained in $\Pairs_k$ with exactly this probability.

The second part of the proof is by induction.  Let $\Loss = \Loss_{\ell}(\sqrt{p})$.  
When $k=1$ we have $|\Pairs_1| = p^2 = p^2 / \Loss^0$.
By the inductive hypothesis and the fact that $\Loss_{\ell}$ is non-decreasing, 
$|\Pairs_{k-1}| \ge (p')^2 / \Loss^{2(k-2)}$ (in expectation)
and so $|\Pairs_k| \ge (p^2 / \Loss^2) \cdot (1/\Loss^{2(k-2)}) = p^2 / \Loss^{2(k-1)}$ (in expectation). 
\end{proof}

\begin{lemma} \label{lem: unique paths}
If $(u_0,u_{2\ell}) \in \Pairs_k$ then there is a unique shortest path from $u_0$ to $u_{2\ell}$ in $\Graph_k$,
which passes through exactly $(2\ell-1)^{k-1}$ complete bipartite graphs (copies of $\Graph_1[\cdot ]$)
and has $\dist_{\Graph_k}(u_0,u_{2\ell}) = (2(k-1)\ell + 1)(2\ell-1)^{k-1}$, 
which is exactly the distance from the first to last layer of $\Graph_k$.
\end{lemma}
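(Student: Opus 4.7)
The plan is to induct on $k$. The base case $k=1$ is immediate: $\Graph_1[p]=K_{p,p}$ is bipartite, every pair in $\Pairs_1$ is connected by the unique direct edge of length $1=(2(0)\ell+1)(2\ell-1)^0$, that edge is the $1=(2\ell-1)^0$ copy of $\Graph_1$ it passes through, and the distance from the input layer to the output layer is clearly $1$.

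For the inductive step, fix $(u_0,u_{2\ell})\in\Pairs_k$. By definition of $\Pairs_k$, the labels $(a,b)$ of the unique shortest path in $\DblBase$ from $u_0$ to $u_{2\ell}$ (guaranteed by Lemma~\ref{lem:properties-of-DblBase}(2)) satisfy $(\pi(a),\pi(b))\in\Pairs_{k-1}$. I would first exhibit the candidate path by concatenating: the length-$(2\ell-1)^{k-1}$ edge-path from $u_0$ to the $\pi(a)$-th input port of $\Graph(u_1)$; then, for each $i=1,\ldots,2\ell-1$, the inductively unique shortest path through $\Graph(u_i)$ between the ports corresponding to $(\pi(a),\pi(b))$ (for even $i$ this is the reverse traversal in $\RevGraph_{k-1}$, which is the same set of edges as the forward path in $\Graph_{k-1}$ and hence still unique); and the edge-paths of length $(2\ell-1)^{k-1}$ between consecutive inner graphs. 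The length totals
\[
(2\ell)(2\ell-1)^{k-1} + (2\ell-1)(2(k-2)\ell+1)(2\ell-1)^{k-2} = (2(k-1)\ell+1)(2\ell-1)^{k-1},
\]
and the count of embedded $\Graph_1$ copies is $(2\ell-1)\cdot(2\ell-1)^{k-2}=(2\ell-1)^{k-1}$.

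For the distance claim --- that this length equals the minimum distance from $\Layer{0}$ to $\Layer{2\ell}$ --- I would set up a ``rank function'' $\rho$ on $V(\Graph_k)$ taking values in $\{0,1,\ldots,D\}$, with $D=(2(k-1)\ell+1)(2\ell-1)^{k-1}$, such that every edge of $\Graph_k$ connects vertices with ranks differing by exactly one, input ports get rank $0$, and output ports get rank $D$. The existence of $\rho$ follows by stacking: by induction each $\Graph_{k-1}$ admits such a function taking values in $\{0,\ldots,D'\}$ with $D'=(2(k-2)\ell+1)(2\ell-1)^{k-2}$; each edge-path of length $(2\ell-1)^{k-1}$ is trivially graded; and the $\DblBase$-layering lets us offset these ranks consistently so that the $\pi(a)$-input port of $\Graph(u_i)$ receives rank $i(2\ell-1)^{k-1}+(i-1)D'$. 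Any $\Layer{0}$-to-$\Layer{2\ell}$ path must then have length $\geq D$, matching the upper bound.

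Finally, uniqueness follows because each segment of any shortest path is forced: the $\DblBase$-walk must be the unique $\DblBase$-shortest path $u_0,\ldots,u_{2\ell}$ (else some edge-path or inner graph is traversed extraneously, exceeding $D$); inside each $\Graph(u_i)$ the traversal is the unique shortest path between the ports given by $(\pi(a),\pi(b))\in\Pairs_{k-1}$, which is unique by induction; and the edge-paths offer no choice. The main obstacle I anticipate is cleanly defining the rank function through the recursive construction so that it transparently certifies the distance lower bound; once this bookkeeping is in place, existence and uniqueness fall out directly from the inductive hypothesis and Lemma~\ref{lem:properties-of-DblBase}.
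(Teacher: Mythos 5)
Your proposal is correct and follows essentially the same inductive strategy as the paper: exhibit the candidate path, derive the recurrence $d_k = (2\ell-1)d_{k-1} + 2\ell(2\ell-1)^{k-1}$ and solve it, then reduce uniqueness to the uniqueness of shortest paths in $\DblBase$ (Lemma~\ref{lem:properties-of-DblBase}(2)) together with the inductive uniqueness of the port-to-port paths inside each $\Graph(u_i)$. The one ingredient you make explicit that the paper leaves implicit is the grading (rank) function $\rho$ certifying that every $\Layer{0}$-to-$\Layer{2\ell}$ path has length at least $D$ and therefore cannot backtrack between layers; the paper simply asserts that ``any alternative shortest path would have to visit consecutive layers without backing up,'' relying on $\Graph_k$ being a layered graph in which every edge joins adjacent layers --- so your rank function is a cleaner formalization of the same idea rather than a different route.
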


\begin{proof}
We prove the claim by induction.  
Let $d_k$ be the length of shortest paths for $\Pairs_k$ in $\Graph_k$.
Uniqueness of shortest paths in $\Graph_1$ is obvious, and $d_1$ is clearly 1.
In the construction of $\Graph_k$, shortest paths for pairs in $\Pairs(\DblBase[p])$
traverse $2\ell$ edges via $2\ell-1$ interior layers.  Each edge becomes a path of length $(2\ell-1)^{k-1}$
and each interior vertex becomes a copy of $\Graph_{k-1}$.
Suppose a pair $(u_0,u_{2\ell})\in\Pairs(\DblBase[p])$
is preserved in $\Pairs_k$ and let $(u_0,u_1,\ldots,u_{2\ell})$ be the shortest path in $\DblBase$.
By the definition of $\Pairs_k$ and the inductive hypothesis, there are unique shortest
paths from the given input port to the given output port in $\Graph(u_1),\Graph(u_2),\ldots,\Graph(u_{2\ell-1})$.
Each passes through $(2\ell-1)^{k-2}$ complete bipartite graphs and each has length exactly $d_{k-1}$.  
Any alternative shortest path would have to visit consecutive layers without backing up to earlier layers,
and would therefore visit $u_0, \Graph(u_1'),\ldots,\Graph(u_{2\ell-1}'), u_{2\ell}$, 
for some $(u_1',\ldots,u_{2\ell-1}')\neq (u_1,\ldots,u_{2\ell-1})$.  
This, however, violates Lemma~\ref{lem:properties-of-DblBase}(2) on the uniqueness of shortest
paths between pairs in $\Pairs(\DblBase)$.  We have the recurrence
\begin{align*}
d_1 &= 1\\
d_k &= (2\ell-1)d_{k-1}   +   2\ell(2\ell-1)^{k-1},
\end{align*}
which has the closed form $d_k = (2(k-1)\ell + 1)(2\ell-1)^{k-1}$.
\end{proof}

\begin{definition} \label{def: critical}
An edge $e$ is \emph{critical} for a pair $(u_0,u_{2\ell})\in \Pairs_k$ if it lies in a copy of $\Graph_1$ (a complete bipartite graph), and it is on the unique shortest $u_0$--$u_{2\ell}$ path.
\end{definition}

\begin{lemma}\label{lem:removing-critical-edges}
Let $\Graph_k'$ be $\Graph_k$, with all critical edges for $(u_0,u_{2\ell})$ removed.
Then $\dist_{\Graph_k'}(u_0,u_{2\ell}) \ge \dist_{\Graph_k}(u_0,u_{2\ell}) + 2(2\ell-1)^{k-1}$.
\end{lemma}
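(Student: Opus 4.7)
The proof I plan is by induction on $k$, mirroring the recursive construction of $\Graph_k$. For the base case $k=1$, $\Graph_1 = K_{p,p}$ and the pair $(u_0,u_{2\ell})$ is joined by its single critical edge of length $1$; removing it forces the path through two intermediate vertices, raising the distance to $3$, an increase of $2 = 2(2\ell-1)^0$.

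For the inductive step, assume the claim for $k-1$ and let $P'$ be any $u_0$--$u_{2\ell}$ path in $\Graph_k$ with all critical edges removed. The central idea is to associate $P'$ with its \emph{macro-walk} $W$ in $\DblBase$: the sequence of $\DblBase$-vertices whose replacement subgraphs $\Graph(\cdot)$ (or endpoints $u_0,u_{2\ell}$) the path $P'$ visits. Each edge of $W$ forces $P'$ to traverse a subdivided path of length $(2\ell-1)^{k-1}$, so
\[
|P'| \;\ge\; |W|\cdot (2\ell-1)^{k-1} \;+\; \sum_{\text{interior visits}} \bigl(\text{inside length in }\Graph(\cdot)\bigr).
\]
Layer parity in $\DblBase$ forces $|W|\in\{2\ell, 2\ell+2, 2\ell+4,\ldots\}$.

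When $|W|=2\ell$, Lemma~\ref{lem:properties-of-DblBase}(2) forces $W=(u_0,u_1,\ldots,u_{2\ell})$, the unique shortest walk. Each interior visit traverses $\Graph(u_i)$ from its $\pi(a)$-th to its $\pi(b)$-th port (or reversed, if $i$ is even). By construction of $\Pairs_k$, the pair $(\pi(a),\pi(b))$ lies in $\Pairs_{k-1}$, and the critical edges removed inside $\Graph(u_i)$ are precisely the critical edges of $\Graph_{k-1}$ for that pair. The inductive hypothesis gives inside length $\ge d_{k-1}+2(2\ell-1)^{k-2}$ per visit; summing over $2\ell$ bridges and $2\ell-1$ interior visits and applying the recurrence $d_k=(2\ell-1)d_{k-1}+2\ell(2\ell-1)^{k-1}$ from Lemma~\ref{lem: unique paths} yields $|P'|\ge d_k+2(2\ell-1)^{k-1}$ exactly.

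The remaining case $|W|\ge 2\ell+2$ is the main obstacle. Bridges alone pay an extra $\ge 2(2\ell-1)^{k-1}$, but a naive accounting worries that the detour might enter interior copies at non-critical port pairs, where critical-edge removal may not increase inside distance. To handle this, I plan to decompose $W$ into an underlying ``backbone'' walk plus inserted backtracks and macro-detours. For a backtrack at some $u_i$ on the original shortest walk, the triangle inequality inside $\Graph(u_i)$ combined with the inductive lower bound on the critical $(\pi(a),\pi(b))$-distance shows that the two resulting inside segments together cost at least as much as the straight-through visit, while the backtrack also pays $2(2\ell-1)^{k-1}$ additional bridges. For a macro-detour that replaces $u_i$ with alternative $\DblBase$-vertices, the extra bridges already dominate the inside contribution ``saved'' by skipping the inductive bound at $\Graph(u_i)$. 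A careful case analysis then verifies $|P'|\ge d_k+2(2\ell-1)^{k-1}$ in every sub-case; confirming this cleanly, via a unified potential-style argument that tracks bridge and inside contributions in tandem, is the step I expect to be the hardest part of the argument.
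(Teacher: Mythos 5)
Your Case~1 ($|W|=2\ell$) matches the paper's argument exactly and is correct. But for Case~2 you have identified a genuine difficulty in your own accounting, and you have not resolved it; the paper resolves it by a simpler observation that your framework doesn't use.

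The key fact is that $\Graph_k$ is itself a \emph{layered} graph: $u_0$ sits in layer $0$, $u_{2\ell}$ in layer $d_k$, and every edge of $\Graph_k$ (whether inside a subdivided path or inside a copy of $\Graph_{k-1}$) joins two consecutive layers. This is immediate by induction on $k$, since $\DblBase$ is layered, each $\Graph_{k-1}$ is layered, and subdivided edges are paths; Lemma~\ref{lem: unique paths} already records that $\dist_{\Graph_k}(u_0,u_{2\ell})$ equals the layer span $d_k$. Consequently, any $u_0$--$u_{2\ell}$ walk in $\Graph_k$ has length exactly $d_k$ plus twice its total descent in this layering. Now if the macro-walk $W$ in $\DblBase$ backtracks even once (which is forced whenever $|W|>2\ell$, by parity and Lemma~\ref{lem:properties-of-DblBase}(2)), the corresponding path in $\Graph_k$ must descend through an entire subdivided edge of length $(2\ell-1)^{k-1}$ before climbing back, so its length is at least $d_k+2(2\ell-1)^{k-1}$ with no inspection of interior port pairs at all. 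Your worry about the path ``saving'' by entering interior copies at non-critical ports is real within the accounting $|P'|\ge |W|(2\ell-1)^{k-1}+\sum(\text{inside})$, because $\sum(\text{inside})$ can indeed fall below $(2\ell-1)d_{k-1}$ when the walk reverses direction (a reversal visit has inside cost as low as $0$). Closing that gap by a backbone-plus-detour decomposition or a potential argument would require exactly the kind of direction-change bookkeeping you flagged as hard, and even then you would end up re-deriving what the layering makes one-line. So the missing idea is not a more refined decomposition of $W$; it is to switch from per-component accounting to the global layer potential in $\Graph_k$, which handles Case~2 without any case analysis and then lets Case~1 (where $W$ is forced to be the unique shortest walk and the inductive hypothesis contributes $2(2\ell-1)^{k-2}$ of descent in each of the $2\ell-1$ interior copies) give the same total.
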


\begin{proof}
Let $(u_0,u_1,\ldots,u_{2\ell})$ be the unique shortest path from $u_0$ to $u_{2\ell}$ in $\DblBase$.
In $\Graph_k'$, if we take a path that does {\em not} pass through $\Graph(u_1),\ldots,\Graph(u_{2\ell-1})$
then at some point it must move back to an earlier layer (from, say, layer $i$ to layer $i-1$) before advancing
forward again (from layer $i-1$ to $i$, and onward to $2\ell$).  Since each edge in $\DblBase$ corresponds
to a path of length $(2\ell-1)^{k-1}$ in $\Graph_k$, such a detour increases the path length by at least $2(2\ell-1)^{k-1}$.
On the other hand, if we do take a path passing through $\Graph(u_1),\ldots,\Graph(u_{2\ell-1})$ then
it must use the same input/output ports as the unique shortest $u_0$--$u_{2\ell}$ path.  By the inductive hypothesis,
the additive stretch inside each of these subgraphs must be $2(2\ell-1)^{k-2}$.  (This is true when $k=2$ as well,
since in this case $\Graph(u_1),\ldots,\Graph(u_{2\ell-1})$ are complete bipartite graphs, and removing the critical edge
increases the distance from 1 to 3 in each one.)  Thus, the total additive stretch is at least 
$(2\ell-1)\cdot 2(2\ell-1)^{k-2} = 2(2\ell-1)^{k-1}$.
\end{proof}

\begin{lemma} \label{lem:pair-distances}
The shortest paths in $\Graph_k$ for two pairs in $\Pairs_k$ share no critical edges.  As a consequence,
any spanner of $\Graph_k$ with fewer than $|\Pairs_k|$ edges must stretch some pair of vertices at distance
$d = (2(k-1)\ell+1)(2\ell-1)^{k-1}$ by an additive $2(2\ell-1)^{k-1} \approx \f{2}{(k-1)^{1-1/k}}\cdot d^{1-\f{1}{k}}$.
\end{lemma}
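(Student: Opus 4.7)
The plan has three parts: (a) prove that critical-edge sets of distinct pairs in $\Pairs_k$ are pairwise disjoint, by induction on $k$; (b) combine this disjointness with Lemma~\ref{lem:removing-critical-edges} via a pigeonhole argument to get the spanner lower bound; and (c) do a quick asymptotic calculation to express the additive term $2(2\ell-1)^{k-1}$ in the claimed $d^{1-1/k}$ form.

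For (a), the base case $k=1$ is immediate: in the complete bipartite graph $\Graph_1$ each pair in $\Pairs_1$ is joined by a distinct edge which is its unique critical edge. For the inductive step, suppose two distinct pairs $(u_0,u_{2\ell}),(u_0',u_{2\ell}')\in\Pairs_k$ shared a critical edge $e$. Since $e$ lies in a deeply nested copy of $\Graph_1$, there is a \emph{unique} outermost substitution-subgraph $\Graph(u_i)$ (a copy of $\Graph_{k-1}$ or $\RevGraph_{k-1}$) that contains $e$, and the corresponding interior vertex $u_i$ of $\DblBase$ must lie on both $\DblBase$-shortest paths. By the definition of $\Pairs_k$, the restrictions of $Q_{(u_0,u_{2\ell})}$ and $Q_{(u_0',u_{2\ell}')}$ to $\Graph(u_i)$ are the unique shortest paths of pairs $(\pi(a),\pi(b)),(\pi(a'),\pi(b'))\in\Pairs_{k-1}$ (by Lemma~\ref{lem: unique paths} applied to $\Graph_{k-1}$), both containing the critical edge $e$. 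The inductive hypothesis then forces these $\Pairs_{k-1}$-pairs to coincide, so $(\pi(a),\pi(b))=(\pi(a'),\pi(b'))$, and since $\pi$ is injective $\{a,b\}=\{a',b'\}$. Finally, the uniqueness clause of Lemma~\ref{lem:properties-of-DblBase}(2) says that a $\DblBase$-shortest path passing through $u_i$ with prescribed alternating labels is uniquely determined, so the two $\DblBase$-paths coincide — contradicting distinctness of the original $\Pairs_k$-pairs.

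For (b), if a spanner $H\subseteq E(\Graph_k)$ contained at least one critical edge per pair, then by disjointness $|H|\ge|\Pairs_k|$. Contrapositively, any $H$ with $|H|<|\Pairs_k|$ must miss \emph{all} critical edges of some pair $(u_0,u_{2\ell})\in\Pairs_k$, and then Lemma~\ref{lem:removing-critical-edges} yields $\dist_H(u_0,u_{2\ell})\ge d+2(2\ell-1)^{k-1}$. For (c), from $d=(2(k-1)\ell+1)(2\ell-1)^{k-1}$ one reads off $d\sim (k-1)(2\ell)^k$, hence $(2\ell)^{k-1}\sim d^{1-1/k}/(k-1)^{1-1/k}$, and substituting gives $2(2\ell-1)^{k-1}\approx \frac{2}{(k-1)^{1-1/k}}\, d^{1-1/k}$, as claimed.

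The main obstacle is the inductive step in part (a): the delicate point is to show that a single shared critical edge at the deepest level of nesting forces agreement all the way up to the $\DblBase$-level. This works because (i) the nested substitution structure identifies a unique outermost $\Graph(u_i)$ containing $e$, so both pairs must use the \emph{same} vertex $u_i$; (ii) the definition of $\Pairs_k$ (only keeping pairs with $(\pi(a),\pi(b))\in\Pairs_{k-1}$) is exactly what ensures the restriction to $\Graph(u_i)$ is a valid $\Pairs_{k-1}$-pair so that the inductive hypothesis applies; and (iii) injectivity of $\pi$ together with Lemma~\ref{lem:properties-of-DblBase}(2) lifts the conclusion back to uniqueness at the $\DblBase$-level. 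Once this chain of implications is in place, the spanner lower bound is an immediate pigeonhole consequence.
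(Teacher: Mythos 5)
Your proof is correct and takes essentially the same route as the paper: induction on $k$ using uniqueness of the label-determined $\DblBase$ paths, followed by a pigeonhole argument via Lemma~\ref{lem:removing-critical-edges}. The only cosmetic difference is that you phrase the inductive step as a proof by contradiction (assume a shared critical edge, propagate to a shared $\Pairs_{k-1}$-pair, hence shared labels, hence identical $\DblBase$-paths), whereas the paper argues directly that distinct pairs intersecting the same $\Graph(u_i)$ carry distinct label pairs and therefore distinct $\Pairs_{k-1}$-pairs; these are logically equivalent, and you correctly identified the same three ingredients (uniqueness of the outermost $\Graph(u_i)$ containing $e$, injectivity of $\pi$, and the label-determines-path property of $\DblBase$) that the paper relies on.
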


\begin{proof}
The proof is by induction on $k$; it clearly holds when $k=1$.
For $k\ge 2$, each pair $(u_0,u_{2\ell}) \in \Pairs_k$ is identified with a pair of labels $a,b \in \Labels$, 
which determines the input/output ports of $\Graph(u_1),\ldots,\Graph(u_{2\ell-1})$ 
used in the shortest $u_0$--$u_{2\ell}$ path. 
No other pair $(u_0',u_{2\ell}')$ whose shortest path intersects some $\Graph(u_i)$ 
can be associated with the same two labels, hence it must enter and exit $\Graph(u_i)$ at different
input/output ports than $(u_0,u_{2\ell})$.  By the inductive hypothesis, $(u_0,u_{2\ell})$ and $(u_0',u_{2\ell}')$
share no critical edges in any $\Graph(u_i)$, and therefore no critical edges in $\Graph_k$.
\end{proof}

We now analyze the size and density of $\Graph_k[p]$.  Let $n_k[p]$ and $m_k[p]$ be the number of vertices and edges in $\Graph_k[p]$.  The construction of $\Graph_k$ gives the following recursive definition.
\begin{align*}
\nBase[p]		&= (\ell+1)p		& \mBase[p] 		&\ge \frac{\ell p^2}{\Loss_{\ell}(p)}\\
\nDblBase[p]	&= (2\ell+1)p		& \mDblBase[p]		&\ge \frac{2\ell p^{3/2}}{\Loss_{\ell}(\sqrt{p})},  \;\; \mDblBase[p] \le 2\ell p^{3/2}\\
n_1[p]		&= 2p			& m_1[p]			&= p^2\\
n_k[p]		&= 2p + (2\ell-1)p\cdot n_{k-1}\left[|\Labels[\sqrt{p}]|\right] + (2\ell-1)^{k-1} \cdot \mDblBase[p] \;\;\;\;\;\;\;
							& m_k[p]			&\ge (2\ell-1)p\cdot m_{k-1}\left[|\Labels[\sqrt{p}]|\right]\\
\end{align*}

\begin{lemma} \label{lem: lb calculations}
For all $k \ge 1$ and sufficiently large $p$, 
$n_k[p] \le c(2\ell)^{k} p^{2-\frac{1}{2^{k-1}}}$
and 
$m_k[p] \ge (2\ell-1)^{k-1} p^2 / \Loss^{2(k-1)}$, where $\Loss=\Loss_{\ell}(\sqrt{p})$.
\end{lemma}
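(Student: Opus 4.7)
The plan is to establish both bounds by induction on $k$, using the given recurrences together with the key facts $|\Labels[\sqrt p]| \le \sqrt p/2$, $|\Labels[\sqrt p]| \ge \sqrt p/\Loss$ (where $\Loss = \Loss_\ell(\sqrt p)$), and $\mDblBase[p] \le 2\ell p^{3/2}$. The base case $k=1$ is immediate: $n_1[p]=2p$ satisfies $n_1[p]\le c(2\ell)p$ for any $c\ge 1/\ell$, and $m_1[p]=p^2$ equals $(2\ell-1)^{k-1}p^2/\Loss^{2(k-1)}$ when $k=1$.

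For the edge lower bound, the inductive step applies $m_k[p]\ge (2\ell-1)p\cdot m_{k-1}[p']$ with $p'=|\Labels[\sqrt p]|\ge \sqrt p/\Loss$. Here I would use monotonicity of $\Loss_\ell(\cdot)$: since $\sqrt{p'}\le \sqrt p$, the loss appearing in the inductive bound on $m_{k-1}[p']$ is at most $\Loss$. Thus by IH, $m_{k-1}[p']\ge (2\ell-1)^{k-2}(p')^2/\Loss^{2(k-2)}\ge (2\ell-1)^{k-2}(p/\Loss^2)/\Loss^{2(k-2)}$, and multiplying by $(2\ell-1)p$ yields the desired $(2\ell-1)^{k-1}p^2/\Loss^{2(k-1)}$.

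For the vertex upper bound, substitute $p'\le \sqrt p/2$ into the IH: $n_{k-1}[p']\le c(2\ell)^{k-1}(\sqrt p/2)^{2-1/2^{k-2}}\le \tfrac12 c(2\ell)^{k-1}p^{1-1/2^{k-1}}$, so the recursive term contributes at most $\tfrac12 c(2\ell)^kp^{2-1/2^{k-1}}$. The second term $(2\ell-1)^{k-1}\mDblBase[p]\le (2\ell)^kp^{3/2}$, and the critical algebraic observation is that $\tfrac32\le 2-1/2^{k-1}$ for every $k\ge 2$, with equality at $k=2$; therefore this contribution is at most $(2\ell)^kp^{2-1/2^{k-1}}$. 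The remaining $2p$ term is absorbed for $p$ sufficiently large. Choosing $c\ge 4$ closes the induction.

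I do not expect a real obstacle here: the only thing to watch is that the $\mDblBase[p]$ contribution, which scales as $p^{3/2}$ independent of $k$, does not dominate the target exponent $p^{2-1/2^{k-1}}$. This would fail only for $k=1$, which is handled separately as the base case, and the inequality becomes strictly favorable for $k\ge 3$, so the accounting is clean. The induction for $m_k$ is essentially a direct telescoping once one observes that the loss factor does not grow upon recursion to smaller parameter, and this relies on the non-decreasing property of $\Loss_\ell$ stated in Lemma~\ref{lem: pair set size}.
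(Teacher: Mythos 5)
Your proposal is correct and follows essentially the same inductive argument as the paper: both bound the vertex count by the recurrence with $|\Labels[\sqrt p]|\le\sqrt p/2$, observe that the $\mDblBase$ contribution $\sim p^{3/2}$ is dominated by $p^{2-1/2^{k-1}}$ for $k\ge 2$, and prove the edge lower bound by telescoping with $|\Labels[\sqrt p]|\ge\sqrt p/\Loss$ together with the monotonicity of $\Loss_\ell$. The only difference is cosmetic: the paper absorbs the additive $2p$ term by replacing $(2\ell-1)p$ with $(2\ell)p$ in the recurrence (allowing $c=2$), whereas you track it separately and settle for $c\ge 4$; both choices close the induction.
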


\begin{proof}
The bounds clearly hold when $k=1$.  Assuming the claim holds inductively for $n_{k-1}$, we have
\begin{align*}
n_k[p] &\leq (2\ell)p\cdot n_{k-1}\left[\sqrt{p}/2\right] + (2\ell-1)^{k-1} \mDblBase[p]		& (|\Labels[\sqrt{p}]| \le \sqrt{p}/2)\\
	&\leq (2\ell)p\cdot \left[c(2\ell)^{k-1} (\sqrt{p}/2)^{2-\frac{1}{2^{k-2}}}  \right]	+ (2\ell-1)^{k-1} 2\ell p^{3/2}\\ 
	&< (c/2^{2-\frac{1}{2^{k-2}}})(2\ell)^{k} p^{2-\frac{1}{2^{k-1}}} + (2\ell)^{k} p^{3/2}\\		
	&< c(2\ell)^{k}p^{2-\frac{1}{2^{k-1}}}									& (\mbox{for, say, $c=2$)}\\
\intertext{and assuming the claim holds inductively for $m_{k-1}$ we have,}
m_k[p] &\ge (2\ell-1)p\cdot m_{k-1}\left[\sqrt{p} / \Loss \right] 						& (|\Labels[\sqrt{p}]| \ge \sqrt{p}/\Loss_{\ell}(\sqrt{p}))\\
	&\ge (2\ell-1)p \cdot (2\ell-1)^{k-2} (\sqrt{p} / \Loss)^2 / \Loss^{2(k-2)}					& \mbox{($\Loss_{\ell}(\cdot)$ nondecreasing)}\\
	&\ge (2\ell-1)^{k-1} p^2 / \Loss^{2(k-1)}.							
\end{align*}
We are mainly interested in cases in which $\ell$ is not too large, i.e. $\ell,\Loss = (n_k[p])^{o(1)}$.
In this case the density
of $\Graph_k[p]$ is $m_k[p]/n_k[p] \ge (n_k[p])^{\f{1}{2^{k}-1} - o(1)}$.
\end{proof}

\begin{theorem}\label{thm:lb-sublinearadditive} {\bf (Sublinear Additive Spanner Lower Bounds)}
For any integer $k\ge 2$ and a sufficiently small constant $c_k = O(1/k)$, 
any spanner construction with stretch function bounded 
by $f(d) \le d+c_k d^{1-\f{1}{k}} + \tilde{O}(1)$ has size 
$\Omega(n^{1+\frac{1}{2^{k+1}-1}-o(1)})$ in the worst case.
\end{theorem}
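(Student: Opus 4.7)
The plan is to invoke the recursive construction at level $k+1$—that is, to use $\Graph_{k+1}[p]$—and then argue that the hypothesized stretch function is even more restrictive than the stretch bound provided by Lemma~\ref{lem:pair-distances} at that level. Throughout, I choose the level parameter $\ell$ to be a slowly growing function of $n$ (e.g., $\ell = \lceil \log n \rceil$) so that $\ell = n^{o(1)}$ and, by the Behrend-type bound, $\Loss_\ell(\sqrt{p}) = 2^{O(\sqrt{\log p \log \ell})} = n^{o(1)}$ as well.

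By Lemma~\ref{lem:pair-distances} instantiated at level $k+1$, any spanner of $\Graph_{k+1}$ with fewer than $|\Pairs_{k+1}|$ edges must stretch some pair at distance $d = (2k\ell+1)(2\ell-1)^k = \Theta_k(\ell^{k+1})$ by an additive $2(2\ell-1)^k = \Theta_k(\ell^k)$. I must verify that the hypothesized stretch $f(d) \le d + c_k d^{1-1/k} + \tilde{O}(1)$ strictly violates this bound. Noting $d^{1-1/k} = \Theta_k(\ell^{(k+1)(k-1)/k}) = \Theta_k(\ell^{k - 1/k})$, the quantity $c_k d^{1-1/k}$ is a factor $\Omega(\ell^{1/k}) = \omega(1)$ smaller than $2(2\ell-1)^k$, and the $\tilde{O}(1) = \mathrm{polylog}(n)$ slack is likewise dwarfed by $2(2\ell-1)^k$ once $\ell$ grows slightly faster than a constant (e.g., $\ell = \Theta(\log n)$ handles this for any fixed $k$). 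Hence any spanner satisfying the theorem's stretch hypothesis must contain at least $|\Pairs_{k+1}|$ edges.

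I then convert this pair count into a density lower bound via Lemma~\ref{lem: lb calculations} at level $k+1$. Setting $n := n_{k+1}[p] \le c(2\ell)^{k+1} p^{2 - 1/2^{k}}$ and solving for $p$ gives $p \ge \paren{n/(2\ell)^{k+1}}^{2^k/(2^{k+1}-1)}$. The pair count satisfies $|\Pairs_{k+1}| \ge p^2/\Loss^{2k} = p^{2 - o(1)}$ since $\Loss = n^{o(1)}$. Substituting yields
\[
|\Pairs_{k+1}| \;\ge\; n^{\,(2-o(1))\cdot 2^k/(2^{k+1}-1)} \;=\; n^{\,1 + 1/(2^{k+1}-1) - o(1)},
\]
which is precisely the exponent asserted by the theorem.

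The main delicate point is the joint calibration of $\ell$: it must simultaneously (i) be large enough that the intrinsic stretch $2(2\ell-1)^k$ dominates both $c_k d^{1-1/k}$ and the polylogarithmic $\tilde{O}(1)$ slack—so that the contrapositive of Lemma~\ref{lem:pair-distances} fires—and (ii) be small enough that the $(2\ell)^{k+1}$ and $\Loss^{2k}$ overheads in the density computation contribute only $n^{o(1)}$ factors. Any $\ell$ in the window $\omega(1) \le \ell \le n^{o(1)}$, such as $\ell = \Theta(\log n)$, satisfies both constraints; the verification amounts to tracking the $o(1)$ terms carefully through each substitution, which is routine once $\ell$ has been fixed in this range.
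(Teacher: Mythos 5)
Your proposal is correct, and it uses the same machinery as the paper (the hierarchy $\Graph_k$, Lemma~\ref{lem:pair-distances}, Lemma~\ref{lem: lb calculations}, and a polylogarithmic choice of $\ell$), but instantiated one level higher: you run the argument on $\Graph_{k+1}$, whereas the paper's proof runs it on $\Graph_k$. This shift has a real consequence. The paper's instantiation shows that any spanner meeting the stretch hypothesis must contain $|\Pairs_k| = n^{1+\f{1}{2^{k}-1}-o(1)}$ edges---a \emph{stronger} exponent than the one stated in the theorem (and the one consistent with Theorem~\ref{thm:lb-DO} and Figure~\ref{fig:prior-work})---but it needs $c_k < 2/(k-1)^{1-1/k}$, since at level $k$ the forced additive error $2(2\ell-1)^{k-1}$ and the allowance $c_k d^{1-1/k}$ are of the same order $\Theta_k(\ell^{k-1})$ and only the constants separate them. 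Your instantiation at level $k+1$ proves exactly the exponent $1+\f{1}{2^{k+1}-1}$ as literally stated, and because the forced error $\Theta_k(\ell^{k})$ exceeds the allowance $\Theta_k(c_k\ell^{k-1/k})$ by a factor $\ell^{1/k}=\omega(1)$, your argument never uses the smallness of $c_k$: it works for any constant (indeed for any allowance that is $o(d^{1-\f{1}{k+1}})$). So each version buys something: the paper's gives the sharper size bound, yours gives the stated bound under a weaker hypothesis on the constant.

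Two small points to tighten. First, Lemma~\ref{lem: pair set size} bounds only the \emph{expected} size of $\Pairs_{k+1}$ over the random port permutations, so you should fix a choice of permutations achieving at least the expectation before invoking Lemma~\ref{lem:pair-distances}. Second, your parenthetical claim that $\ell=\Theta(\log n)$ suffices is not quite right if the $\tilde{O}(1)$ slack hides, say, $(\log n)^{100}$ and $k$ is small; as in the paper, $\ell$ should be taken to be a sufficiently large power of $\log n$ (your stated calibration constraint already allows this, and any $\ell=(\log n)^{O(1)}$ keeps the $(2\ell)^{k+1}$ and $\Loss^{2k}$ overheads at $n^{o(1)}$).
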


\begin{proof}
Let $\Graph_k[p]$ be the input graph with respect to some sufficiently large $\ell = (\log p)^{O(1)}$. 
For this parameterization $\Loss_{\ell}(p) = 2^{O(\sqrt{\log p\log\log p})} = p^{o(1)}$, we have that $m_k[p]$ and the size of the pair-set $\Pairs_k$
are both $n^{1+\frac{1}{2^{k}-1}-o(1)}$, where $n=n_k[p]$.   Any spanner with size less than $|\Pairs_k|$
must stretch some pair at distance $d_k = (2(k-1)\ell + 1)(2\ell-1)^{k-1}$ to $d_k + 2(2\ell-1)^{k-1}$, which is strictly
greater than $d_k + c_k d^{1-\f{1}{k}} + \tilde{O}(1)$ when $c_k < 2/(k-1)^{1-1/k}$ is sufficiently small 
and $\ell$ sufficiently large to make the
$\tilde{O}(1)$ error comparatively negligible.
\end{proof}

\begin{remark}
Since the diameter of $\Graph_k[p]$ is $O(d_k)$, any emulator
for $\Graph_k[p]$ {\em on the same vertex set} (i.e., without Steiner points) 
can be converted to a spanner with at most an $O(d_k) = n^{o(1)}$ blowup in the number of edges.
Thus, Theorem~\ref{thm:lb-sublinearadditive} applies to this class of emulators.
The argument breaks down for (Steiner) emulators since we can
preserve all distances for pairs in $\Pairs_k$ with just $O(n)$ edges, simply by replacing
all bipartite cliques (copies of $\Graph_1$) with stars.  See Theorem~\ref{thm:lb-DO}
for a lower bound that applies to emulators with Steiner points.
\end{remark}

\begin{theorem}\label{thm:lb-onepluseps} {\bf ($(1+\epsilon,\beta)$-Spanner Lower Bounds)}
Any $(1+\epsilon,\beta)$ spanner construction with worst-case size at most $n^{1+\f{1}{2^{k+1}-1} - \delta}$, $\delta>0$,
has $\beta = \Omega \paren{\f{1}{\epsilon(k-1)}}^{k-1}$.
\end{theorem}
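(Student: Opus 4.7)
The plan is to reuse the lower bound graph $\Graph_k[p]$ constructed in Section~\ref{sect:lower-bound}, but with the parameter $\ell$ chosen as a function of $\epsilon$ rather than as a slowly growing function of $p$. Specifically, I would set $\ell = \lfloor 1/(4(k-1)\epsilon)\rfloor$ (adjusting the constant $4$ later). Lemma~\ref{lem:pair-distances} states that any spanner of $\Graph_k[p]$ with fewer than $|\Pairs_k|$ edges must stretch some pair $(u_0, u_{2\ell}) \in \Pairs_k$ at original distance $d_k = (2(k-1)\ell+1)(2\ell-1)^{k-1}$ by an additive error of at least $2(2\ell-1)^{k-1}$.

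For a $(1+\epsilon,\beta)$-spanner this stretch must be consistent with $(1+\epsilon)d_k + \beta$, so
\[
\beta \;\ge\; 2(2\ell-1)^{k-1} - \epsilon \, d_k \;=\; (2\ell-1)^{k-1}\bigl[\,2 - \epsilon(2(k-1)\ell + 1)\,\bigr].
\]
With $\ell = \lfloor 1/(4(k-1)\epsilon)\rfloor$, the bracketed quantity is at least $1$ (for small enough $\epsilon$ that $\ell \ge 1$), giving $\beta \ge (2\ell-1)^{k-1} = \Omega\bigl(1/((k-1)\epsilon)\bigr)^{k-1}$, which is the claimed bound.

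It remains to verify that the size hypothesis of the theorem really does force the spanner to have fewer than $|\Pairs_k|$ edges. By Lemma~\ref{lem: lb calculations} and the same calculation as in the proof of Theorem~\ref{thm:lb-sublinearadditive}, the pair-set size is $|\Pairs_k| = n^{1+\f{1}{2^k-1}-o(1)}$, where the $o(1)$ factor is controlled by $\Loss_{\ell}(\sqrt{p})^{O(k)}$. Since $1+\f{1}{2^k-1} > 1+\f{1}{2^{k+1}-1}$ by a gap of $\Theta(2^{-k})$, any spanner of size at most $n^{1+\f{1}{2^{k+1}-1}-\delta}$ does indeed have fewer than $|\Pairs_k|$ edges once $n$ is large enough. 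The only mild constraint is that $\ell = \lfloor 1/(4(k-1)\epsilon)\rfloor$ must satisfy $\ell = p^{o(1)}$ so that $\Loss_{\ell}(\sqrt{p}) = p^{o(1)}$ via the Behrend-type bound; this holds as long as $1/\epsilon = n^{o(1)}$, which is implicit in the regime of interest.

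The main delicate step is matching up the two indexings: the construction $\Graph_k$ gives a pair-set of size $n^{1+1/(2^k-1)}$, while the theorem is stated with the smaller size threshold $n^{1+1/(2^{k+1}-1)-\delta}$ (the size of the matching Thorup--Zwick $(1+\epsilon,\beta)$ upper bound). This is not a real obstacle---the theorem hypothesis is simply strictly stronger than what we need to invoke Lemma~\ref{lem:pair-distances}---but it is worth flagging since one might otherwise be tempted to use $\Graph_{k+1}$, which would yield the slightly different bound $\Omega(1/(k\epsilon))^k$ rather than the stated $\Omega(1/((k-1)\epsilon))^{k-1}$.
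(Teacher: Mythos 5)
Your proposal is correct and follows essentially the same route as the paper's own proof: both take $\Graph_k[p]$ with $\ell = \Theta(1/((k-1)\epsilon))$, invoke Lemma~\ref{lem:pair-distances} to obtain a pair at distance $d_k = (2(k-1)\ell+1)(2\ell-1)^{k-1}$ stretched by an additive $2(2\ell-1)^{k-1}$, and subtract $\epsilon d_k$ to lower bound $\beta$. The paper chooses $\ell$ minimal with $1/(2(k-1)\ell+1) \le \epsilon$ and argues somewhat informally that $\beta = \Omega(\epsilon d_k)$; your slightly smaller choice of $\ell$ makes the bracketed quantity $2 - \epsilon(2(k-1)\ell+1) \ge 1$ explicitly, which is a minor tidying-up but not a different argument. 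Your closing observation about the index mismatch (the hypothesis uses exponent $1+\f{1}{2^{k+1}-1}$ while $|\Pairs_k| = n^{1+\f{1}{2^k-1}-o(1)}$ suffices, and instantiating $\Graph_{k+1}$ would give $\Omega(1/(k\epsilon))^k$) matches the paper's own remarks following the theorem.
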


\begin{proof}
Let $\Graph_k[p]$ be the input graph with respect to an $\ell \approx \f{1}{\epsilon}$ to be chosen shortly.
Any spanner with size $n^{1+\f{1}{2^{k+1}-1} - \delta} < |\Pairs_k|$ stretches 
a pair at distance
$d_k = (2(k-1)\ell + 1)(2\ell-1)^{k-1}$ to 
$d_k + 2(2\ell-1)^{k-1} = d_k(1+\f{2}{2(k-1)\ell+1})$.
We choose $\ell\ge 2$ to be  minimal such that 
$\f{1}{2(k-1)\ell+1} \le \epsilon$, that is, $\ell = \ceil{\f{1-\epsilon}{2\epsilon(k-1)}}$
and the additive stretch is roughly $2\epsilon d_k$.  
In order for this to be a $(1+\epsilon,\beta)$-spanner
we would need 
\[
\beta = \Omega(\epsilon d_k) = \Omega((2\ell-1)^{k-1}) 
					= \Omega\paren{\paren{2\ceil{\fr{1-\epsilon}{2\epsilon(k-1)}} - 1}^{k-1}}.
\]
\end{proof}

Theorem~\ref{thm:lb-onepluseps} shows that the existing $(1+\epsilon, O(k/\epsilon)^{k-1})$-spanners
with size $O((k/\epsilon)^{O(1)}n^{1+\f{1}{2^{k+1}-1}})$ are optimal in the following sense.  If $k$ is constant
then we cannot improve $\beta$ by more than a constant factor $\approx (k^2)^{k-1}$ without
increasing the exponent to $1+\f{1}{2^k-1}-o(1)$.  Moreover, any constant reduction in the
exponent increases $\beta$ to $\Theta(1/(k\epsilon))^k$.  Once again, the argument of 
Theorem~\ref{thm:lb-onepluseps} applies to $(1+\epsilon,\beta)$-emulators that do not use Steiner points.

\begin{theorem}\label{thm:lb-DO} {\bf (Distance Oracle/Emulator Lower Bounds)}
Consider any data structure for the class of $n$-vertex undirected graphs
that answers approximate distance queries.
If its stretch function is:
\begin{itemize}
\item $f(d) \le d + c_k d^{1-\f{1}{k}} + \tilde{O}(1)$ for an integer $k$ and a sufficiently small constant $c_k < 2/(k-1)^{1-1/k}$, or
\item $f(d) \le (1 + \eps)d + \beta$ where $\beta = o\paren{\paren{\f{1}{\epsilon(k-1)}}^{k-1}}$
\end{itemize}
then on some graph, the data structure occupies at least $n^{1+\f{1}{2^k-1} -o(1)}$ bits of space.
\end{theorem}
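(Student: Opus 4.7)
The plan is to lift Theorems~\ref{thm:lb-sublinearadditive} and~\ref{thm:lb-onepluseps} from spanners to unconstrained distance oracles via an information-theoretic encoding argument, in the same spirit as Abboud and Bodwin's $4/3$ lower bound. The key observation is that the spanner proofs already produce, for each pair $(u,v)\in\Pairs_k$, an independently ``switchable'' set of edges (its critical edges) whose presence or absence changes $\dist(u,v)$ by an amount that exceeds the permitted approximation slack. Packaging all these switches together gives an exponentially large family of graphs that the oracle must distinguish.

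First I would construct the hard family. Fix $\Graph_k[p]$ from Section~\ref{sect:lower-bound} with $\ell$ chosen exactly as in Theorem~\ref{thm:lb-sublinearadditive} (for the first bullet) or Theorem~\ref{thm:lb-onepluseps} (for the second bullet). For each $S\subseteq\Pairs_k$ let $\Graph_k^S$ be the graph obtained from $\Graph_k[p]$ by deleting every edge that is critical (Definition~\ref{def: critical}) for some pair in $S$. By Lemma~\ref{lem:pair-distances} the critical-edge sets of distinct pairs are disjoint, so this operation is well-defined and the family $\{\Graph_k^S\}_{S\subseteq\Pairs_k}$ consists of $2^{|\Pairs_k|}$ distinct graphs, all on the same vertex set.

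Next I would show that the oracle's answers on pairs in $\Pairs_k$ recover $S$. Fix $(u,v)\in\Pairs_k$. If $(u,v)\notin S$ then, by Lemma~\ref{lem:pair-distances}, no edge of the unique shortest $u$--$v$ path lying in a copy of $\Graph_1$ has been deleted, and since the non-critical edges of that path are likewise untouched, Lemma~\ref{lem: unique paths} gives $\dist_{\Graph_k^S}(u,v)=d_k$. If instead $(u,v)\in S$, then $\Graph_k^S$ is a subgraph of the ``remove the critical edges of $(u,v)$'' graph of Lemma~\ref{lem:removing-critical-edges}, so $\dist_{\Graph_k^S}(u,v)\ge d_k+2(2\ell-1)^{k-1}$. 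In both stated stretch regimes the oracle's estimate $\approxdist(u,v)$ must lie strictly below $d_k+2(2\ell-1)^{k-1}$ in the first case (since $f(d_k)<d_k+2(2\ell-1)^{k-1}$ by the same arithmetic already used in Theorems~\ref{thm:lb-sublinearadditive} and~\ref{thm:lb-onepluseps}) and at least $d_k+2(2\ell-1)^{k-1}$ in the second case (since $\approxdist\ge \dist$). Hence $\approxdist(u,v)$ reveals membership of $(u,v)$ in $S$.

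Finally I would close with a pigeonhole / encoding argument: since the oracle's output on the $|\Pairs_k|$ queries from $\Pairs_k$ is a function of its bit string alone, and that output determines $S$, the bit string must take $2^{|\Pairs_k|}$ distinct values across the family. Thus on some graph in the family the oracle occupies at least $|\Pairs_k|$ bits, which by Lemmas~\ref{lem: pair set size} and~\ref{lem: lb calculations} is $n^{1+\f{1}{2^k-1}-o(1)}$ with $n=n_k[p]$. The only real obstacle is the quantitative separation already handled for spanners; once we have that and the disjointness of critical edges (Lemma~\ref{lem:pair-distances}), the oracle version is essentially a bookkeeping upgrade, since Steiner points and arbitrary encodings cannot help the oracle distinguish graphs whose pairwise distances actually coincide on the queried pairs.
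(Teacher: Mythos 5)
Your proposal is correct and follows essentially the same route as the paper's proof: both construct the $2^{|\Pairs_k|}$-size family by deleting critical-edge sets for arbitrary $S \subseteq \Pairs_k$, use disjointness of critical edges (Lemma~\ref{lem:pair-distances}) together with Lemmas~\ref{lem: unique paths} and~\ref{lem:removing-critical-edges} to separate the two distance outcomes beyond the allowed stretch, and conclude by pigeonhole that the oracle needs $|\Pairs_k| = n^{1+\frac{1}{2^k-1}-o(1)}$ bits. The only cosmetic difference is that you phrase the final step as an explicit decoding of $S$ from the oracle's answers on $\Pairs_k$, whereas the paper argues that any two graphs in the family must receive distinct encodings; these are the same counting argument.
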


\begin{proof}
The following proof strategy was employed by \Althofer{} et al.~\cite{Althofer+93} to bound the 
size of emulators.  It was also used by \Matousek~\cite{Matousek96} for bounding low-distortion embeddings into 
$l_{\infty}^{d}$, and by Thorup and Zwick~\cite{TZ05} and Abboud and Bodwin~\cite{AbboudB16} to 
bound the size of approximate distance data structures.

Fix a graph $\Graph_k = \Graph_k[p]$ with pair set $\Pairs_k$.  For any subset $\Pairs'\subseteq \Pairs_k$
let $G(\Pairs')$ be obtained from $\Graph_k$ be removing {\em all} the critical edges for each pair in $\Pairs'$,
and let $\mathcal{G} = \{G(\Pairs') \;|\; \Pairs' \subseteq \Pairs_k\}$ be the class of $2^{|\Pairs_k|}$ such graphs.
Fix any two graphs $G_A,G_B \in \mathcal{G}$.  There must exist {\em some} pair $(u,v)\in \Pairs_k$ such 
that $G_A$ contains all of the critical edges for $(u,v)$ whereas $G_B$ contains none of them.
By Lemma~\ref{lem:removing-critical-edges} we have 
\begin{align*}
\dist_{G_A}(u,v) &= d_A \bydef (2(k-1)\ell + 1)(2\ell-1)^{k-1}\\
\dist_{G_B}(u,v) &\geq d_B \bydef (2(k-1)\ell + 1)(2\ell-1)^{k-1} + 2(2\ell-1)^{k-1}
\end{align*}
If $f(d_A) < d_B$ then no single data structure (bit string) can be used to encode two distinct graphs $G_A,G_B\in \mathcal{G}$,
implying that the data structures for $\mathcal{G}$ must use at least $\log_2(2^{|\Pairs_k|})$ bits, on average.
If $\ell = (\log p)^{O(1)} = (\log n)^{O(1)}$ is sufficiently large and $c_k$ sufficiently small then
\[
f(d_A) \;=\; d_A + c_k d_A^{1-\f{1}{k}} + \tilde{O}(1) < d_B \;\approx\; d_A + \fr{2}{(k-1)^{1-1/k}} \cdot d_A^{1-\f{1}{k}}.
\]
For these parameters $|\Pairs_k| = n^{1+\f{1}{2^k-1} -o(1)}$, where the $n^{-o(1)}$ factor is $2^{-O(\sqrt{\log n\log\log n})}$.
We extend this argument to the case of $(1+\eps, \beta)$ type stretch using an identical argument to the one given in Theorem \ref{thm:lb-onepluseps}.
\end{proof}

\section{New Upper Bounds on $(1+\epsilon,\beta)$-Spanners}\label{sect:new-upper-bounds}

Thorup and Zwick~\cite{TZ06} gave a very simple randomized 
construction of an emulator with size $O(kn^{1-\frac{1}{2^{k+1}-1}})$ 
and stretch function $f(d) = d + O(kd^{1-1/k} + 3^k)$.  Alternatively, one can view
this as a $(1+\epsilon, O(k/\epsilon)^{k-1})$-emulator for {\em every} $\epsilon>0$,
where the optimal choice of $\epsilon$, as a function of $d$, is $\epsilon = \Theta(k/d^{1/k})$.

\subsection{The Thorup-Zwick Emulator}

The Thorup-Zwick emulator is parameterized by an integer $k\ge 2$.  Let $G=(V,E)$ be the input graph.
One samples vertex sets $V = V_0 \supset V_1 \supset V_2 \cdots \supset V_{k}$
where vertices in $V_i$ are promoted to $V_{i+1}$ with probability $q_{i+1}/q_i$, 
so $\E[|V_i|] = q_i n$.  
Define $p_i(u)$ to be the closest $V_i$-vertex to $u$, breaking ties in a consistent manner.
Define $\Ball(u,r) = \{v \;|\; \dist(u,v) \le r\}$ to be the set of vertices inside the radius-$r$ ball centered at $u$
and let $\Ball_i(u)$ be short for $\Ball(u, \dist(u,p_i(u))-1)$.
For $i\geq k+1$, $p_i(u)$ does not exist and $\Ball_i(u)$ is the entire graph, by definition.
The emulator edge set consists of $E_0 \cup E_1 \cup \cdots \cup E_{k}$,
where $E_i$ is defined as follows.
\[
E_i = \left\{(u,v) \;|\; u,v\in V_i \mbox{ and } v\in \Ball_{i+1}^{\ }(u)\right\}  \;\cup\; \left\{(u,p_{i+1}^{\ }(u)) \;|\; u\in V\right\}.
\]
The length of all emulator edges is precisely the distance between their endpoints in $G$.  Since $|\Ball_{i+1}(u)|$ 
is $q_{i+1}^{-1}$ in expectation, the expected number of edges contributed by $E_i$ is $n + nq_i^2/q_{i+1}$, 
for $i < k$, and is $(nq_{k})^2$ when $i=k$.  Setting $q_i = n^{-\f{2^i-1}{2^{k+1}-1}}$ makes
the size of the emulator $O(kn^{1+\f{1}{2^{k+1}-1}})$ in expectation.
In order to obtain a $d + O(kd^{1-\f{1}{k}})$-type 
stretch bound for all distances 
$d \le D$, it actually suffices to restrict $E_i$ to pairs at distance at most 
$(r+2)^i$, where $r = D^{1/k}$.
Letting $P(u,v)$ be any shortest path from $u$ to $v$, the subgraph 
$S_{TZ}(k,r) = (V, E_0'\cup E_1'\cup\cdots\cup E_{k}')$ is 
a spanner, where
\[
E_i' = \bigcup_{\substack{(u,v)\in E_i \; :\\ v\in \Ball(u,(r+2)^i)}} P(u,v).
\]
As we show in Sections~\ref{sect:even-sparser} and \ref{sect:stretch-analysis}, 
the spanner $S_{TZ}(k,r)$ behaves exactly like the emulator for all distances up to $D$,
i.e., it has stretch function $d+O(kd^{1-\f{1}{k}})$ for all sufficiently large $d\le D$.\footnote{Thorup and Zwick~\cite[p. 809]{TZ06}
also noted that their emulator can be converted to a spanner, but their sketch of how to do this was incorrect.}
However, choosing the optimum sampling probabilities as a function of $r,k,n$ is no longer trivial.  
Since each path in $E_i$ contributes $(r+2)^i$ edges, the spanner size is on the order of 
$kn$ (for paths of the form $P(u,p_i(u))$) plus
\[
\frac{n}{q_1} + \frac{nq_1^2r}{q_2} + \frac{nq_2^2r^2}{q_3} + \cdots \frac{nq_{k-1}^2r^{k-1}}{q_{k}} + (nq_{k})^2r^{k}.
\]
Assuming this sum is minimized when $E_0',E_1',\ldots,E_{k}'$ contribute equally, 
we have the following equalities:
\begin{align*}
q_2 &= rq_1^3								& \mbox{(balancing $E_0'$ and $E_1'$)}\\
q_3 &= r^2 q_2^2 q_1 						& \mbox{(balancing $E_0'$ and $E_2'$)}\\
&\vdots\\
q_{k} &= r^{k-1} q_{k-1}^2q_1 					& \mbox{(balancing $E_0'$ and $E_{k-1}'$)}\\
\intertext{If $q_i$ is constrained to be of the form $n^{-g(i)} r^{-h(i)}$, these equalities are satisfied when}
g(i) 				&= 2g(i-1) + g(1)			& \mbox{(for $i\ge 2$)}\\
				&= (2^i-1)g(1)				& \mbox{(by induction)}\\
\mbox{and } \; h(i)	&= 2h(i-1) + h(1) - (i-1)		& \mbox{(for $i\ge 2$)}\\
				&= (2^i-1)h(1) - 2^i + (i+1)		& \mbox{(by induction)}\\
\intertext{So $q_{k} = n^{-(2^{k}-1)g(1)} r^{-(2^{k}-1)h(1) + 2^{k} - (k+1)}$.  Plugging this equality
into the expression for $|E_{k}'|$ and balancing with $|E_0'|$, we have,
\[
|E_{k}'| \,=\, (nq_{k})^2 r^{k} \,=\, n^{2-2(2^{k}-1)g(1)} r^{-2[(2^{k}-1)h(1) - 2^{k} + (k+1)] + k} \,=\, n^{1+g(1)} r^{h(1)}  \,=\, |E_0'|, 
\]
which is minimized when
}
g(1) &= \f{1}{2^{k+1}-1}\\
h(1) &= \f{2^{k+1} - (k+2)}{2^{k+1}-1}.
\end{align*}
For example, when $k=2$ and $h(1)=4/7$ this leads to a $d+O(\sqrt{d})$-spanner for distances 
$d \le D = r^2$ having size
$O(r^{4/7}n^{8/7}) = O(D^{2/7}n^{8/7})$.  
Since $h(1)$ is strictly less than 1 for any fixed $k$, the spanner size is always 
$o(rkn^{1+\f{1}{2^{k+1}-1}}) = o(D^{1/k}kn^{1+\f{1}{2^{k+1}-1}})$.

\subsection{Even Sparser $(1+\epsilon,\beta)$-Spanners}\label{sect:even-sparser}

In order to form an even sparser spanner we substitute for $E_1'$ a subgraph whose size has no dependence on $r$ but
preserves the relevant distances well enough, up to an additive +2 error.  The following theorem is proved using 
the same path-buying algorithm for constructing additive 6-spanners~\cite{BKMP10,Pettie-Span09}.
The algorithm begins with the subgraph $E_0'$ and supplements it with an $\tilde{E}_1$ to guarantee +2 
stretch for each $u,v\in V_1$ that were connected by an edge in $E_1$.

\begin{theorem}\label{thm:path-buying} (see~\cite{BKMP10} and \cite{Pettie-Span09})
Suppose $V_1,V_2$ are sampled with probability $q_1$ and $q_2$, with $q_2 < q_1$. 
Then there is an edge-set $\tilde{E}_1$ with expected size $O(nq_1^2/q_2)$ such that if
$u,v\in V_1$ and $v \in \Ball_2(u)$, then
\[
\dist_{E_0' \cup \tilde{E}_1}(u,v) \le \dist(u,v) + 2.
\]
\end{theorem}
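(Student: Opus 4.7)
I would adapt the path-buying paradigm of~\cite{BKMP10}. Let $\Pairs_{1,2} = \{(u,v) \in V_1 \times V_1 : v \in \Ball_2(u)\}$ be the set of target pairs. A standard argument gives $\E[|\Ball_2(u)|] \le 1/q_2$ (the vertex closer to $u$ than $p_2(u)$ belongs to $V_2$ only with probability $q_2$), and each such $v$ belongs to $V_1$ with probability $q_1$, so $\E[|\Pairs_{1,2}|] = \Theta(nq_1^2/q_2)$. This expected pair count matches the bound we wish to prove on $|\tilde{E}_1|$, so our algorithm can afford to spend amortized $O(1)$ edges per pair.

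First I order $\Pairs_{1,2}$ in non-decreasing order of $\dist_G(u,v)$ and process the pairs sequentially, maintaining $\tilde{E}_1$ (initially empty) together with a set of ``discharged'' $V_1$-cluster centers. For each pair $(u,v)$, let $P(u,v)$ denote its canonical shortest path; its \emph{cost} $c(u,v)$ is the number of edges of $P(u,v)$ absent from $E_0' \cup \tilde{E}_1$, and its \emph{value} $\mu(u,v)$ is the number of $V_1$-cluster centers $p_1(w)$, for $w \in P(u,v)$, that are not yet discharged. If $c(u,v) \le \mu(u,v)$, add the missing edges of $P(u,v)$ to $\tilde{E}_1$ and mark all visited $V_1$-clusters as discharged; otherwise skip.

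For correctness, the goal is $\dist_{E_0' \cup \tilde{E}_1}(u,v) \le \dist_G(u,v) + 2$ for every $(u,v) \in \Pairs_{1,2}$. The bought case is immediate. When $P(u,v)$ is skipped, $\mu(u,v) < c(u,v)$ forces some cluster $C_x = p_1^{-1}(x)$ crossed by $P(u,v)$ to have been discharged by an earlier bought path $P(u',v')$, and processing order gives $\dist_G(u',v') \le \dist_G(u,v)$. Letting $w \in P(u,v) \cap C_x$ and $w' \in P(u',v') \cap C_x$, I construct a detour concatenating the spokes $P(w,x), P(w',x)$ (present in $E_0'$), the bought path $P(u',v')$, and endpoint spokes $P(u, p_1(u))$ and $P(v, p_1(v))$; the cluster-diameter inequality $\dist(w,x) \le \dist(w,u)$ (valid since $u \in V_1$ is a candidate cluster center for $w$) then absorbs all overhead into a $+2$ additive term. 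For the size bound, the total cost incurred satisfies $|\tilde{E}_1| = \sum_{(u,v) \text{ bought}} c(u,v) \le \sum_{(u,v) \text{ bought}} \mu(u,v) \le |V_1|$, since each $V_1$-cluster is discharged at most once; hence $\E[|\tilde{E}_1|] = O(q_1 n) \le O(nq_1^2/q_2)$ since $q_2 \le q_1$.

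\textbf{Main obstacle.} The delicate part is pinning the additive constant at exactly $+2$ rather than $+4$ or $+6$. A naive triangle-inequality analysis of the detour easily yields the weaker $+6$ bound familiar from the BKMP additive-$6$ spanner. Tightening it to $+2$ leverages both endpoints lying in $V_1$ (so their own spokes have length zero), together with careful bookkeeping of the positions of $w,w'$ along their respective paths: the processing order $\dist_G(u',v') \le \dist_G(u,v)$ prevents the substituted segment from being longer than the replaced one, and the cluster-diameter inequality bounds the entry/exit detours through $x$ by the lengths of the original prefix and suffix. If this precise alignment proves problematic, the fallback is to charge value against $V_2$-clusters (giving a coarser but still $O(nq_1^2/q_2)$-size bound), then boost via a second round of path-buying to recover the $+2$ stretch.
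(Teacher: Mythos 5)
Your overall framework (path-buying on the pair set $\mathcal{P} = \{\{u,v\}\subset V_1 : v \in \Ball_2(u) \text{ or } u\in\Ball_2(v)\}$, whose expected size $\Theta(nq_1^2/q_2)$ supplies the edge budget) is the same as the paper's, but your charging scheme and detour argument diverge from the BKMP analysis in ways that do not hold up.

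First, your value measure $\mu(u,v)$---the number of \emph{undischarged} $V_1$-cluster centers touching $P(u,v)$, each discharged at most once---would yield $|\tilde{E}_1|\le|V_1|=O(q_1 n)$, which is strictly below $O(nq_1^2/q_2)$ since $q_2<q_1$. That should already be a red flag. The value used by the paper (inherited from the BKMP path-buying analysis) is a \emph{pair-based, dynamic} count: the number of pairs $\{x,y\}\in\mathcal{P}$ with $x,y$ adjacent to $P(u,v)$ whose current spanner distance would strictly improve if $P(u,v)$ were added. A single pair can be recharged across several bought paths; the accounting is that each pair in $\mathcal{P}$ is ultimately charged for $O(1)$ edges, giving $|\tilde{E}_1|=O(|\mathcal{P}|)$, not $O(|V_1|)$.

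Second, the inference ``$\mu(u,v)<c(u,v)$ forces some cluster crossed by $P(u,v)$ to have been discharged'' is unjustified as stated. Nothing in the definitions prevents every cluster met by $P(u,v)$ from being undischarged while there are simply fewer of them than missing edges. You would need an auxiliary lemma (present in the BKMP analysis) bounding the number of non-$E_0'$ edges on a shortest path by the number of adjacent clusters, and even granting that, the discharge-once rule does not obviously supply a previously bought path that is usefully aligned with $P(u,v)$.

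Third, the detour you describe---$P(w,x)$, $P(w',x)$, $P(u',v')$, plus the length-zero spokes at $u,v$---is not a $u$--$v$ walk. Routing $u\to w\to x\to w'$ and then along $P(u',v')$ lands at $u'$ or $v'$, and there is no argument connecting that endpoint back to $v$. You correctly flag the $+2$ accounting as the delicate part, but the concatenation as written never closes the loop, so the correctness step is genuinely missing rather than merely loosely sketched.
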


\begin{proof} (sketch)
We assume the reader is familiar with the path-buying algorithm and its analysis~\cite{BKMP10}.
Let $\mathcal{P} \subset {V_1 \choose 2}$ be the pairs for which we are guaranteeing good stretch,
i.e., $\{u,v\} \in \mathcal{P}$ if $u\in \Ball_2(v)$ or $v\in \Ball_2(u)$.  Since $|\Ball_2(u)|$ is $1/q_2$ in expectation, 
$|\mathcal{P}|$ is $O(nq_1^2/q_2)$ in expectation.  For each $\{u,v\} \in \mathcal{P}$ we evaluate $P(u,v)$ and buy
it (set $\tilde{E}_1 \leftarrow \tilde{E}_1 \cup P(u,v)$) 
if its current value exceeds its cost.  The value is the number of pairs $\{x,y\} \in \mathcal{P}$
with $x,y$ adjacent to $P(u,v)$ for which $\dist_{E_0'\cup \tilde{E}_1 \cup P(u,v)}(x,y) < \dist_{E_0'\cup \tilde{E}_1}(u,v)$.
It is argued by the pigeonhole principle that any path $P(u,v)$ not bought has $\dist_{E_0' \cup \tilde{E}_1}(u,v) \le \dist(u,v)+2$,
and that each pair in $\mathcal{P}$ is charged for $O(1)$ edges in $\tilde{E}_1$.
\end{proof}

We sample vertex sets $V = V_0 \supset V_1 \supset \cdots \supset V_{k}$ 
as before and construct the spanner $S(k,r)$ with edge set 
$E_0' \cup \tilde{E}_1 \cup E_2' \cup \cdots \cup E_{k}'$,
where $\tilde{E}_1$ is the edge set from Theorem~\ref{thm:path-buying}.
The expected size of the entire spanner is therefore
\[
\frac{n}{q_1} + \frac{nq_1^2}{q_2} + \frac{nq_2^2r^2}{q_3} + \cdots + \frac{nq_{k-1}^2r^{k-1}}{q_{k}} + (nq_{k})^2r^{k}.
\]
Letting $q_i = n^{-\f{2^i-1}{2^{k+1}-1}} r^{-h(i)}$, we balance the contribution of $E_0',\tilde{E}_1,E_2',\ldots,E_{k}'$
by having $h$ satisfy the following.
\begin{align*}
					h(2) &= 3h(1)    					& \mbox{(balancing $\tilde{E}_1$ and $E_0'$)}\\
\mbox{and for $i\ge 3$, }\, h(i) &= 2h(i-1) + h(1) - (i-1)			& \mbox{(balancing $E_{i-1}'$ and $E_0'$)}\\
			     &= (2^i-1)h(1) - 3\cdot 2^{i-2} + (i+1)			& \mbox{(by induction)}
\end{align*}
Following similar calculations, it follows that the spanner size is minimized when 
\[
h(1)= \f{3\cdot 2^{k-1} - (k+2)}{2^{k+1}-1}.
\]
We shall prove shortly that this spanner is, indeed, a $d + O(kd^{1-\f{1}{k}} + 3^k)$-spanner.  
For example, when $k=1$ we have $h(1)=2/7$, 
so it is a $d+O(\sqrt{d})$-spanner for all $d\le D \le r^2$ with size $O(r^{2/7}n^{8/7}) = O(D^{1/7}n^{8/7})$.  
For any fixed $k$, $h(1) < 3/4$, so the spanner has size $o(D^{\frac{3}{4k}} kn^{1+\f{1}{2^{k+1}-1}})$.

\begin{remark}
We were able to substitute $\tilde{E}_1$ for $E_1'$ without disturbing the exponent $1+\f{1}{2^{k+1}-1}$ of the spanner,
but only because the path-buying algorithm buys $O(nq_1^2/q_2)$ additional edges when initialized with the edge set $E_0'$.
In general we can use~\cite[Thm.~4.2]{Pettie-Span09} to substitute an $\tilde{E}_i$ for $E_i'$, but its size
is $O(n\sqrt{q_i/q_{i+1}})$.  This improves the exponent attached to $r$ but worsens the exponent attached to $n$.
For example, balancing $E_2'$ and $E_0'$ lets us put $q_3 = (q_1)^7r^{O(1)}$,
whereas balancing $\tilde{E}_2$ and $E_0'$ forces $q_3 = (q_1)^5$.
\end{remark}

\subsection{Stretch Analysis}\label{sect:stretch-analysis}

We analyze the stretch of the spanner $S=S(k,r)$ with edge set $E_0' \cup \tilde{E}_1 \cup E_2' \cup \cdots \cup E_{k}'$.
We will first consider two vertices $u,v$ at distance at most $\ell^i$, for some integers $\ell\ge 2, i\ge 0$.
We will assume for the time being that $r=\infty$ and calculate specific quantities related to the spanner distance $\dist_S(u,v)$
without considering the constraints imposed by a finite $r$.  Once these quantities are calculated,
it will be clear that the analysis goes through, so long as $\ell \le r$.
The pair $u,v$ can be either {\em complete} or {\em incomplete} (or both), as explained in the following definition.

\begin{definition}\label{def:succfail}
Define $\{\Succ{\ell}{i}, \, \Fail{\ell}{i}\}_{\ell \in [2,r], i\ge 0}$ to be integers such that for all $u,v$ with 
$\dist(u,v) \le \ell^i$, at least one of the following inequalities holds.  Here $S=S(k,r)$ is the spanner.
\begin{align*}
\dist_{S}(u,v) &\le \dist(u,v) + \Succ{\ell}{i}				& \mbox{(``$u\cdots v$ is {\em complete}'')}\\
\dist_{S}(u, p_{i+1}(u)) &\le \Fail{\ell}{i}				& \mbox{(``$u\cdots v$ is {\em incomplete}'')}
\end{align*}
\end{definition}

\begin{lemma}\label{lem:succfail-recursive}
The following values for $\{\Succ{\ell}{i}, \, \Fail{\ell}{i}\}_{\ell\in [2,r], i\ge 0}$ satisfy 
Definition~\ref{def:succfail}.
\begin{align*}
\Succ{\ell}{0} &= 0			& \mbox{ for all $\ell$}\\										
\Fail{\ell}{0} &= 1			& \mbox{ for all $\ell$}\\												
\Succ{\ell}{1} &= 6			& \mbox{ for all $\ell$}\\
\Fail{\ell}{1} &= \ell+3			& \mbox{ for all $\ell$}\\
\Succ{\ell}{i} &= \min\left\{\begin{array}{l}
\ell\cdot \Succ{\ell}{i-1}\\
(\ell-1)\cdot \Succ{\ell}{i-1} + 4\cdot \Fail{\ell}{i-1}	
\end{array}\right. 									& \mbox{ for all $\ell$ and $i\geq 2$}\\
\Fail{\ell}{i} &= \ell^i + 3\cdot \Fail{\ell}{i-1}					& \mbox{ for all $\ell$ and $i\ge 2$}
\end{align*}
\end{lemma}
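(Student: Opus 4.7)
The plan is to prove Lemma~\ref{lem:succfail-recursive} by induction on $i$, handling the two base cases $i = 0$ and $i = 1$ directly and then deducing the recurrences for $i \ge 2$ via a path-partition argument.

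For $i = 0$: since $\dist(u,v) \le 1$, either $v \in \Ball_1(u)$ and the edge $(u,v)$ lies in $E_0'$ (complete, stretch $0$), or $\dist(u, p_1(u)) \le \dist(u,v) \le 1$ so the pivot edge $(u, p_1(u)) \in E_0'$ realizes $\dist_S(u, p_1(u)) \le 1 = \Fail{\ell}{0}$. For $i = 1$: I would route both endpoints to their level-$1$ pivots through $E_0'$ at cost $O(1)$, then use Theorem~\ref{thm:path-buying} to connect those pivots via $\tilde{E}_1$ with at most $+2$ stretch, provided they lie within each other's $\Ball_2$. Summing the two pivot detours, the path-buying slack, and a triangle-inequality correction yields $\Succ{\ell}{1} = 6$; when the $\Ball_2$-containment fails, the closer $V_2$-vertex $p_2(p_1(u))$ is reachable within $\ell + 3$ in $S$, giving $\Fail{\ell}{1} = \ell + 3$.

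For the inductive step $i \ge 2$: I would partition a shortest $u$--$v$ path (of length $\le \ell^i$) into $\ell$ contiguous segments of length $\le \ell^{i-1}$, with breakpoints $u = y_0, y_1, \ldots, y_\ell = v$. By the inductive hypothesis, each segment is either complete or incomplete. If all $\ell$ segments are complete, summing stretches yields the first branch $\ell \cdot \Succ{\ell}{i-1}$ of the $\min$ in $\Succ{\ell}{i}$. Otherwise, letting $s^\star$ be the smallest index of an incomplete segment, I can reach $x = p_i(y_{s^\star-1}) \in V_i$ at additional cost $\Fail{\ell}{i-1}$, then employ an edge of $E_i'$---either between two $V_i$-vertices, or the pivot edge $(x, p_{i+1}(x))$---to leap forward to a $V_i$-vertex near $v$, and symmetrically undo the $V_i$-detour on the $v$ side. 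The remaining $\ell - 1$ complete segments contribute $(\ell-1)\cdot\Succ{\ell}{i-1}$, and the four pivot legs (reaching $V_i$ from the path, jumping through $E_i'$, returning to the path, and absorbing one residual correction) total $4\Fail{\ell}{i-1}$, matching the second branch of the $\min$. An analogous derivation---now chaining through three failure legs on the way from $u$ to a nearby $V_{i+1}$-vertex via $(x, p_{i+1}(x)) \in E_i'$---produces $\Fail{\ell}{i} = \ell^i + 3\Fail{\ell}{i-1}$, with the $\ell^i$ term absorbing the graph distance itself and each $\Fail{\ell}{i-1}$ charging for one pivot-chase.

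The main obstacle is the bookkeeping of constants: every rerouting leg pays an additive cost via triangle inequality on graph distances, and each invocation of an edge of $E_i'$ must also verify the corresponding $\Ball_{i+1}$-containment condition. In particular, one must argue carefully that the $V_i$-vertex reached on the $v$ side is close enough to $x$ (within $\Ball_{i+1}$) to make the $E_i'$ edge available---this is where the minimality of $s^\star$ and the $\ell$-segment partitioning interact. A secondary difficulty is the $i = 1$ base case, which cannot reuse the inductive machinery and so requires its own careful tracking of the $+2$ from $\tilde{E}_1$ against the $+1$ pivot edges of $E_0'$.
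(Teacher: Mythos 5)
Your approach is essentially the one the paper uses: partition the shortest $u$--$v$ path into $\ell$ contiguous segments of length $\ell^{i-1}$, classify each segment as complete or incomplete via the inductive hypothesis, and in the has-incomplete case detour through the level-$i$ pivots of the boundary vertices, splitting on whether the $v$-side pivot lies in $\Ball_{i+1}$ of the $u$-side pivot (success branch) or not (failure branch).  Your ``smallest index of an incomplete segment'' together with the symmetric treatment on the $v$-side corresponds precisely to the paper's choice of a complete prefix of length $s$ and a complete suffix of length $s'$ with $s+s'\le\ell-1$, and your $i=1$ base case reproduces the same computation the paper folds into the general step with a $\{+2\}$ correction for $\tilde{E}_1$.

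Two points worth tightening.  First, you describe the two branches of $\Succ{\ell}{i}$ as being combined by $\min$, following the lemma statement, but the proof actually needs a \emph{max}: the all-complete bound $\ell\cdot\Succ{\ell}{i-1}$ and the has-incomplete bound $(\ell-1)\Succ{\ell}{i-1}+4\Fail{\ell}{i-1}$ are alternatives determined by the instance, not choices of the algorithm, so the reported $\Succ{\ell}{i}$ must dominate both.  (The paper's own proof text already says ``maximum''; the $\min$ in the statement is a typo, harmless downstream because the second branch dominates whenever $\Succ{\ell}{i-1}\le4\Fail{\ell}{i-1}$, which the closed-form lemma verifies.)  Second, your accounting of the constants is vague in exactly the places you flag.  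The four $\Fail{\ell}{i-1}$ terms come from: one to reach $p_i(u_s)$ from $u_s$, two in the triangle-inequality bound $\dist(p_i(u_s),p_i(u_{\ell-s'}))\le\Fail{\ell}{i-1}+(\ell-s-s')\ell^{i-1}+\Fail{\ell}{i-1}$ which sets the weight of the $E_i'$ path, and one to return from $p_i(u_{\ell-s'})$ to $u_{\ell-s'}$.  The failure branch uses only three because the relevant quantity is $\dist(u,p_{i+1}(u))\le\dist(u,p_i(u_s))+\dist(p_i(u_s),p_i(u_{\ell-s'}))$, bounded by $(\ell-s')\ell^{i-1}+3\Fail{\ell}{i-1}$, with $s'=0$ giving the worst case; the spanner then realizes this distance because $P(u,p_{i+1}(u))\subseteq S$ by construction, not via an $E_i'$ hop from $x$ as you suggest.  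Filling in these chains exactly would turn your sketch into the paper's proof.
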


\begin{proof}
In the base case $(i=0)$, we have $\ell^0 = 1$, so $u$ and $v$ are adjacent in the input graph.
If $(u,v)\in E_0'$ then $\dist_{S}(u,v)=1$ and if $(u,v)\not\in E_0'$ then it must be that $\dist(u,p_1(u))=1$,
so $\Succ{\ell}{0}=0, \Fail{\ell}{0}=1$ satisfy Definition~\ref{def:succfail} for all $\ell$.

\begin{figure}
\centering
\scalebox{.5}{\includegraphics{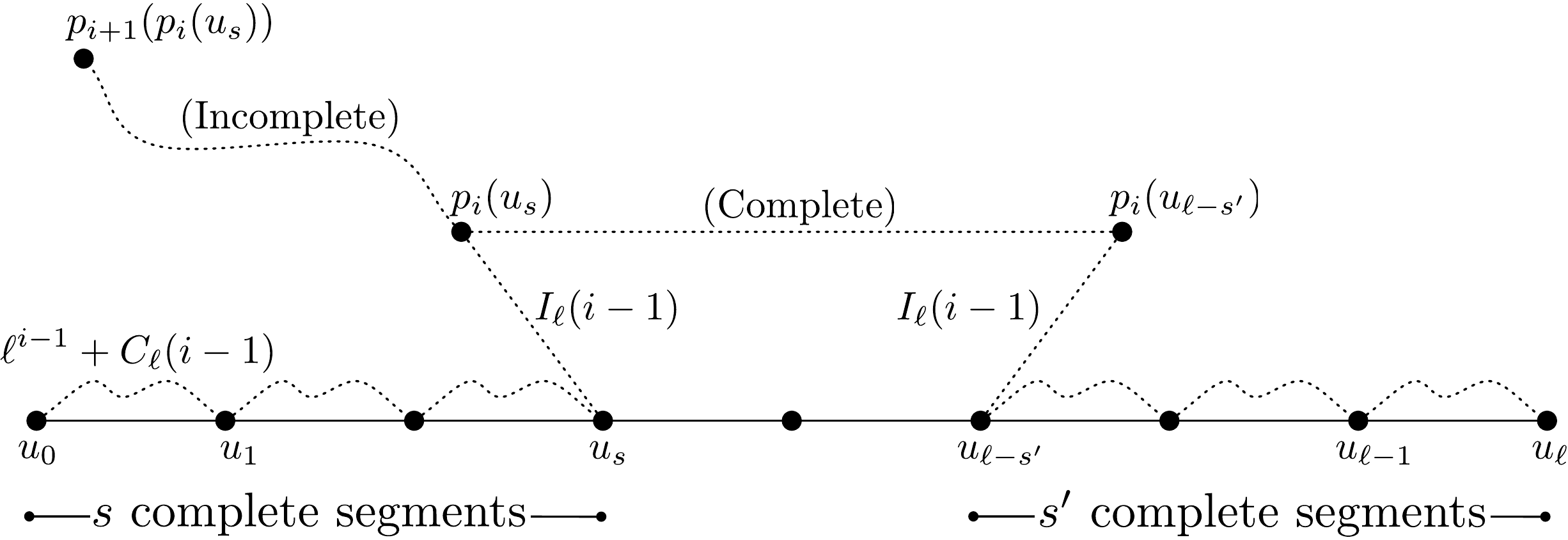}}
\caption{\label{fig:Complete-Incomplete}\small The shortest path from $u=u_0$ 
to $v=u_{\ell}$ has length $\ell^i$; it is partitioned into {\em segments} of length $\ell^{i-1}$.
A segment $(u_j,u_{j+1})$ is {\em complete} if $S$ contains a path of length $\ell^{i-1} + \Succ{\ell}{i-1}$
from $u_j$ to $u_{j+1}$ and {\em incomplete} if the distance from $u_j$ to $p_i(u_j)$ is at most $\Fail{\ell}{i-1}$.
If all segments are complete (not depicted) then $\dist_{S}(u_0,u_{\ell}) \le \dist(u_0,u_{\ell}) + \ell \cdot \Succ{\ell}{i-1}$.
If only the first $s$ segments and last $s'$ segments are complete and $p_i(u_{\ell-s'})$ lies in the ball 
$\Ball_{i+1}(p_i(u_s))$ then $S$ contains a path from $u_0$ to $u_{\ell}$ with length 
$\dist(u_0,u_{\ell}) + (s+s')\Succ{\ell}{i-1} + 4\Fail{\ell}{i-1}$.
On the other hand, if $p_i(u_{\ell-s'})$ lies outside $\Ball_{i+1}(p_i(u_s))$ then this gives
a bound on the distance from $p_i(u_s)$ to $p_{i+1}(p_i(u_s))$, and therefore a bound on $\dist(u_0,p_{i+1}(u_0))$.
From these cases we derive recursive expressions for $\Succ{\ell}{i}$ and $\Fail{\ell}{i}$.}
\end{figure}

When $i>0$, partition the shortest path from $u$ to $v$ into at most $\ell$ segments with length $\ell^{i-1}$, 
and let $u_j$ be the vertex on the path at distance $j\ell^{i-1}$ from $u$.  For the sake of simplicity, assume $\dist(u,v)=\ell^i$, so $v= u_{\ell}$.
Each segment from $u_j$ to $u_{j+1}$
is classified as either {\em complete} or {\em incomplete}.  If all segments are complete then 
$\dist_{S}(u,v) \le \dist_{S}(u,v) \le \dist(u,v) + \ell\cdot \Succ{\ell}{i-1}$.
If there is at least one incomplete segment, let there be $s$ complete segments on a prefix of the path 
and $s'$ complete segments on a suffix of the path, where $s+s' \le \ell-1$.  It follows that
\begin{align*}
\dist_{S}(u,p_{i}(u_s)) &\le \dist_{S}(u,u_s) + \dist_{S}(u_s,p_i(u_s))\\
					&\le \dist(u,u_s) + s\cdot \Succ{\ell}{i-1} + \Fail{\ell}{i-1}\\
\dist_{S}(v,p_{i}(u_{\ell-s'})) &\le \dist_{S}(v,u_{\ell-s'}) + \dist_{S}(u_{\ell-s'},p_i(u_{\ell-s'})) \\
						&\le \dist(v,u_{\ell-s'}) + s'\cdot \Succ{\ell}{i-1} + \Fail{\ell}{i-1}.
\intertext{If $p_i(u_{\ell-s'}) \not\in \Ball_{i+1}(p_i(u_s))$ then}
\dist_{S}(u,p_{i+1}(u)) &\le \dist(u,p_i(u_{s})) + \dist(p_i(u_s), p_i(u_{\ell-s'})) \\
				&\le (\ell-s')\ell^{i-1} + 3\Fail{\ell}{i}\\
				&\le \ell^i + 3\Fail{\ell}{i}		& \mbox{worst case when $s'=0$.}
\end{align*}
and the path from $u$ to $v$ is {\em incomplete}.  On the other hand, if $p_i(u_{\ell-s'}) \in \Ball_{i+1}(p_i(u_s))$
then $S$ contains a shortest (or nearly shortest, if $i=1$) path from $p_i(u_s)$ to $p_i(u_{\ell-s'})$, so 
\begin{align*}
\dist_{S}(u,v) &\le \dist_{S}(u,p_i(u_s)) + \dist_{S}(p_i(u_s), p_i(u_{\ell-s'})) \zero{+ \dist_{S}(p_i(u_{\ell-s'}), v)} \\
		&\le [s(\ell^{i-1} + \Succ{\ell}{i-1}) + \Fail{\ell}{i-1}]		& \mbox{from $u$ to $p_i(u_s)$}\\
		&\hcm[.5] + [(\ell-s-s')\ell^{i-1} + 2\Fail{\ell}{i-1} \;\; \{+2\}]	& \mbox{from $p_i(u_s)$ to $p_i(u_{\ell-s'})$}\\
		&\hcm[.5] + [s'(\ell^{i-1} + \Succ{\ell}{i-1}) + \Fail{\ell}{i-1}]	& \mbox{from $p_i(u_{\ell-s'})$ to $v$}\\
		&\le \dist(u,v) + (\ell-1)\Succ{\ell}{i-1} + 4\Fail{\ell}{i-1} \;\; \{+2\} &\mbox{worst case when $s+s'=\ell-1$}
\end{align*}
where the $\{+2\}$ is only present if $i=1$.
We satisfy Definition~\ref{def:succfail} by 
setting $\Succ{\ell}{1}=6, \Fail{\ell}{1}=\ell+3$, and, for $i\ge 2$,
$\Fail{\ell}{i} = \ell^i + 3\Fail{\ell}{i-1}$ and 
$\Succ{\ell}{i}$ to be the maximum of $\ell\cdot \Succ{\ell}{i-1}$ and $(\ell-1)\Succ{\ell}{i-1}+4\Fail{\ell}{i-1}$.
\end{proof}

We now find closed form bounds for $\Succ{\ell}{i}$ and $\Fail{\ell}{i}$.

\begin{lemma}\label{lem:succfail-closedform}
The values defined inductively in Lemma~\ref{lem:succfail-recursive} satisfy the following (in)equalities.
\begin{align*}
\Fail{2}{i} &= 3^{i+1} -  2^{i+1}\\
\Succ{2}{i} &\le 3^{i+1}\\
\Fail{3}{i} &= (i+1)3^i\\
\Succ{3}{i} &\le 4i3^i\\
\intertext{Define $c_{\ell} = \ell/(\ell-3)$.  For all $\ell\ge 4$ and $i\ge 1$,}
\Fail{\ell}{i} &\le c_{\ell} \ell^i		\\		
\Succ{\ell}{i} &\le \min\left\{\begin{array}{l}
4c_{\ell} \ell^i\\
(4c_{\ell}i+2)\ell^{i-1}
\end{array}\right.						
\end{align*}
\end{lemma}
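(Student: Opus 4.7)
The plan is to prove all six bounds by induction on $i$, handling the three regimes $\ell = 2$, $\ell = 3$, and $\ell \geq 4$ separately. The bounds on $\Fail{\ell}{i}$ come first because the recurrence $\Fail{\ell}{i} = \ell^i + 3\Fail{\ell}{i-1}$ is autonomous (it does not involve $\Succ{}{}$), and the bounds on $\Succ{\ell}{i}$ will then follow by selecting the appropriate branch of the $\min$ in the recurrence for $\Succ{\ell}{i}$.

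\emph{Step 1: bounds on $\Fail{\ell}{i}$.} For $\ell=2$, I would verify that the closed form $3^{i+1} - 2^{i+1}$ satisfies the recurrence with base $\Fail{2}{1} = 5$: plugging in, $2^i + 3(3^i - 2^i) = 3^{i+1} - 2^{i+1}$. For $\ell=3$ the same idea with base $\Fail{3}{1} = 6$ gives $(i+1)3^i = 3^i + 3\cdot i\cdot 3^{i-1}$. For $\ell\ge 4$, I would induct on $i$ using $c_\ell = \ell/(\ell-3)$: the base case $i=1$ reduces to $(\ell+3)(\ell-3)\le\ell^2$, and the inductive step computes
\[
\Fail{\ell}{i} \;\le\; \ell^i + 3 c_\ell \ell^{i-1} \;=\; \ell^{i-1}\cdot\frac{\ell(\ell-3)+3\ell}{\ell-3} \;=\; c_\ell \ell^i.
\]

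\emph{Step 2: bounds on $\Succ{\ell}{i}$.} The key observation is that the three regimes prefer different branches of the $\min$ in Lemma~\ref{lem:succfail-recursive}. For $\ell=2$, I would use only the first branch, $\Succ{2}{i} \le 2 \Succ{2}{i-1}$, giving $\Succ{2}{i} \le 6\cdot 2^{i-1} = 3\cdot 2^i \le 3^{i+1}$. For $\ell=3$, I would use the second branch and the tight closed form for $\Fail{3}{i-1}$: inductively,
\[
\Succ{3}{i} \;\le\; 2\cdot 4(i-1)3^{i-1} + 4\cdot i\cdot 3^{i-1} \;=\; (12i - 8)3^{i-1} \;\le\; 4i\cdot 3^i.
\]
For $\ell\ge 4$ I establish the two claimed bounds by separate inductions. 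The first bound $\Succ{\ell}{i} \le 4c_\ell \ell^i$ follows immediately from the first branch $\Succ{\ell}{i} \le \ell\cdot \Succ{\ell}{i-1}$ together with the base case $\Succ{\ell}{1} = 6 \le 4c_\ell\ell$. The second bound $\Succ{\ell}{i} \le (4c_\ell i + 2)\ell^{i-1}$ follows from the second branch combined with Step~1: expanding
\[
(\ell-1)(4c_\ell(i-1)+2)\ell^{i-2} + 4c_\ell\ell^{i-1}
\]
and comparing with $(4c_\ell i + 2)\ell^{i-1}$, the required inequality reduces to $4c_\ell(1-i)\le 2$, which holds trivially for $i\ge 1$.

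\emph{Where the difficulty lies.} The arithmetic for the bounds on $\Fail{\ell}{i}$ and the first bound on $\Succ{\ell}{i}$ is essentially mechanical. The only genuinely finicky step is the inductive verification of the $(4c_\ell i + 2)\ell^{i-1}$ bound for $\ell\ge 4$: one must correctly combine the second branch of the recurrence with the previously established $\Fail{\ell}{i-1} \le c_\ell \ell^{i-1}$ bound, and simplify the resulting expression carefully to see that the additional $+2\ell^{i-1}$ slack absorbs the mismatch. A secondary subtlety is that the bound on $\Succ{2}{i}$ is loose by a factor $(3/2)^i$; the weaker form $3^{i+1}$ is chosen so that all three regimes admit a uniform statement useful in the subsequent stretch analysis.
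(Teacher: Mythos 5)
Your proposal is correct and matches the paper's approach: induction on $i$ in the three regimes $\ell=2$, $\ell=3$, and $\ell\ge 4$, establishing the $\Fail{\ell}{i}$ bound first and then bounding a single branch of the $\min$ in the $\Succ{\ell}{i}$ recurrence. The only cosmetic differences are that you derive the $4c_\ell\ell^i$ bound for $\ell\ge 4$ from the first branch $\ell\cdot\Succ{\ell}{i-1}$ while the paper uses the second branch $(\ell-1)\Succ{\ell}{i-1}+4\Fail{\ell}{i-1}$ (both give the same conclusion), and that you spell out the $\ell\in\{2,3\}$ cases which the paper leaves as an exercise.
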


\begin{proof}
All bounds are established by induction on $i$.  The cases when $\ell \in \{2,3\}$ are left as an exercise.
When $\ell \ge 4$ the base cases $i\in\{0,1\}$ clearly hold.  For incomplete paths and $i\ge 2$ we have
\begin{align*}
\Fail{\ell}{i} &= \ell^i + 3\cdot \Fail{\ell}{i-1} 				& \mbox{(by definition)}\\
		&\le \ell^i(1 + 3c_{\ell}/\ell)		\;\le\; c_\ell \ell^i	& \mbox{(induction hypothesis, $c_{\ell} = \f{\ell}{\ell-3}$)}
\intertext{and for complete paths we have two cases,}
\Succ{\ell}{i} &= (\ell-1)\Succ{\ell}{i-1} + 4\Fail{\ell}{i-1}	& \mbox{(by definition)}\\
		&\le (\ell-1)4c_{\ell}\ell^{i-1} + 4c_{\ell}\ell^{i-1} 	& \mbox{(1st induction hypothesis)}\\
		&= 4c_{\ell}\ell^i\\
\mbox{and }\; &\le (\ell-1)(4c_{\ell}(i-1)+2)\ell^{i-2} + 4c_{\ell}\ell^{i-1} 	& \mbox{(2nd induction hypothesis)}\\
		&\le (4c_{\ell} i+2)\ell^{i-1}.
\end{align*}
\end{proof}

Observe that when we check whether $p_i(u_{\ell-s'}) \in \Ball_{i+1}(p_i(u_s))$, $i\ge 2$,
the distance between $p_i(u_{\ell-s'})$ and $p_i(u_{s})$ is maximized when $s=s'=0$;
it is at most 
\[
\ell^i + 2\Fail{\ell}{i-1} = \ell^i + 2c_{\ell}\ell^{i-1}  \; < (\ell+2)^i.
\]
Thus, as long as $\ell \le r$, the criterion $p_i(u_{\ell-s'}) \in \Ball(p_i(u_s), (r+2)^i)$ will also hold.
This retroactively justifies the constraint $\ell \le r$ in Lemma~\ref{lem:succfail-recursive}.

\begin{theorem}
The spanner $S(k,r)$ has size $O(r^hkn^{1+\f{1}{2^{k+1}-1}})$, where $h = \frac{3\cdot 2^{k-1} - (k+2)}{2^{k+1}-1} < 3/4$.
Its stretch changes as a function of the distance $d$ being approximated.
\begin{itemize}
\item For $d\ge 2^{k}$ it is a multiplicative $O((3/2)^k)$-spanner.
\item For $d\ge 3^{k}$ it is a multiplicative $O(k)$-spanner.
\item For $d\ge \ell^{k}$, $\ell\in[4,k)$, it is a multiplicative 
	$(5+O(1/\ell))$-spanner, and when $\ell \in [k,r]$ 
	it is a multiplicative $(1 + (4k+O(1))/\ell)$-spanner.
\end{itemize}
$S(k,r)$ is a $(1+\epsilon, ((4k+O(1))/\epsilon)^{k-1})$-spanner for every $\epsilon$ such that 
$(4k+O(1))/\epsilon < r$.  Its stretch function can also be expressed as
$f(d) = d + (4+o(1))kd^{1-\f{1}{k}} + 3^k$ for {\em all} $d \le r^k$,
and $f(d) = d + (4+o(1))kd/r$ for larger $d$, where the $o(1)$s go to zero as $d$ increases.
\end{theorem}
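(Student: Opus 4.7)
The theorem bundles four claims; I would prove them in sequence, leaning on the machinery already built. For the size bound, nothing new is needed beyond the bookkeeping of Section~\ref{sect:even-sparser}: with the sampling probabilities $q_i = n^{-(2^i-1)/(2^{k+1}-1)} r^{-h(i)}$ and the recurrence on $h$ derived there (balancing $\tilde{E}_1, E_2', \dots, E_{k-1}'$ against $E_0'$), imposing the final balance $|E_k'| \asymp |E_0'|$ pins down $h(1) = (3\cdot 2^{k-1} - (k+2))/(2^{k+1}-1)$, and each of the $O(k)$ edge sets contributes $O(r^{h(1)} n^{1+1/(2^{k+1}-1)})$ edges in expectation, summing to the stated total.

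For the stretch, the engine is Lemma~\ref{lem:succfail-recursive}, together with one small but crucial observation: at the top recursion level $i = k$, the check $p_k(u_{\ell - s'}) \in \Ball_{k+1}(p_k(u_s))$ cannot fail, because $\Ball_{k+1}(\cdot)$ is the entire vertex set by convention (since $V_{k+1} = \emptyset$). Consequently every pair at distance $d \le \ell^k$ is \emph{complete} at level $k$, giving the unconditional bound $\dist_S(u,v) \le d + \Succ{\ell}{k}$ for every $\ell \in [2, r]$. Feeding the four closed-form cases of Lemma~\ref{lem:succfail-closedform} into $1 + \Succ{\ell}{k}/d$ with $d \ge \ell^k$ yields the four multiplicative regimes advertised: $\ell = 2$ gives $1 + 3^{k+1}/2^k = O((3/2)^k)$; $\ell = 3$ gives $1 + 4k = O(k)$; $\ell \in [4, k)$ via $\Succ{\ell}{k} \le 4 c_\ell \ell^k$ gives $5 + O(1/\ell)$; and $\ell \in [k, r]$ via $\Succ{\ell}{k} \le (4 c_\ell k + 2) \ell^{k-1}$ gives $1 + (4k + O(1))/\ell$.

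The last two claims are then just tunings of $\ell$. For the $(1+\epsilon, \beta)$ form with $(4k + O(1))/\epsilon < r$, I would take the smallest integer $\ell$ satisfying $(4 c_\ell k + 2)/\ell \le \epsilon$, so $\ell = \Theta((4k + O(1))/\epsilon)$: then $d \ge \ell^k$ is covered by the multiplicative analysis above, while $d < \ell^k$ is absorbed into $\beta \bydef \Succ{\ell}{k}$, bounded by $((4k+O(1))/\epsilon)^{k-1}$ via Lemma~\ref{lem:succfail-closedform}. For the sublinear additive form, given $d \le r^k$ take $\ell = \lceil d^{1/k} \rceil \le r$; the closed form yields $\Succ{\ell}{k} \le (4 + o(1)) k d^{1-1/k}$ for large $d$, and for small $d$ the constant-$\ell$ regimes contribute at most the $+3^k$ slack absorbed from $\Succ{2}{k}$; for $d > r^k$ one falls back on $\ell = r$ and the final multiplicative regime to get $f(d) \le d + (4 + o(1)) k d/r$.

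The only subtle step in the whole argument is the top-level termination of the recursion: ruling out the ``incomplete'' branch at level $i = k$ by the $\Ball_{k+1}$ convention, and verifying that the paths invoked by the recursion at each lower level actually live in the relevant $E_i'$. The latter uses the inequality $\ell^i + 2\Fail{\ell}{i-1} \le (\ell + 2)^i \le (r+2)^i$ already noted at the end of Section~\ref{sect:stretch-analysis}, which is precisely why the constraint $\ell \le r$ suffices throughout. Everything else reduces to the arithmetic already present in Lemmas~\ref{lem:succfail-recursive} and~\ref{lem:succfail-closedform}.
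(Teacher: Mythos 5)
Your observation that $\Ball_{k+1}(\cdot) = V$ forces the ``complete'' branch at the top recursion level is exactly the engine of the paper's proof, and your size and parameter-tuning steps match. However, the central multiplicative calculation has a genuine gap: you invoke the bound $\dist_S(u,v) \le d + \Succ{\ell}{k}$ (which, by Definition~\ref{def:succfail}, is only valid for $d \le \ell^k$) and then in the very next breath estimate the ratio ``with $d \ge \ell^k$.'' These two constraints are consistent only at the single point $d = \ell^k$, so your argument certifies the stretch only for $d$ an exact $k$-th power. The obvious repair---pick $\ell = \lceil d^{1/k} \rceil$ so that $d \le \ell^k$ legitimately---does not work when $\ell$ is a small constant: you then only know $d \ge (\ell-1)^k$, and the ratio $\Succ{\ell}{k}/d$ picks up a factor $(\ell/(\ell-1))^k$. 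For instance, for $d$ just above $3^k$ one would be forced to use $\ell = 4$, giving $\Succ{4}{k}/d \ge 16 \cdot 4^k / 4^k \cdot (4/3)^{-k}$, i.e.\ $\Omega((4/3)^k)$, which is exponentially worse than the claimed $O(k)$.

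The paper avoids this by \emph{not} computing $\Succ{\ell}{k}$ at the top level. Instead it fixes $\ell = \lfloor d^{1/k} \rfloor$, partitions the shortest $u$--$v$ path into $\lceil d/\ell^{k-1} \rceil$ segments of length $\ell^{k-1}$ (this count can exceed $\ell$, since $d$ may be as large as $(\ell+1)^k - 1$), and then applies the level-$(k-1)$ quantities $\Succ{\ell}{k-1}$ and $\Fail{\ell}{k-1}$ to each segment. The additive cost in the worst case is $\bigl(\lceil d/\ell^{k-1}\rceil - 1\bigr)\Succ{\ell}{k-1} + 4\Fail{\ell}{k-1}$, and the crucial point is that $\lceil d/\ell^{k-1}\rceil \cdot \Succ{\ell}{k-1}$ is proportional to $d \cdot (\Succ{\ell}{k-1}/\ell^{k-1})$ rather than to $\ell^k \cdot (\Succ{\ell}{k}/\ell^k)$---the denominator $d$ cancels cleanly rather than approximately. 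You need to run the top-level segment-counting argument explicitly (as the paper does) rather than folding it into a single invocation of $\Succ{\ell}{k}$; the recursion's internal levels are fine via Lemmas~\ref{lem:succfail-recursive} and~\ref{lem:succfail-closedform}, and it is only the last step, where $d$ is not required to be a perfect power of $\ell$, that must be unwound by hand.
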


\begin{proof}
Let $\dist(u,v) \ge 2^{k}$ be the distance to be approximated and $\ell = \floor{d^{\f{1}{k}}}$, where $\ell \le r$.
Partition the shortest path $P(u,v)$ into intervals of length precisely $\ell^{k-1}$, with at most one shorter interval.
Since $d < (\ell+1)^{k}$, there are between $\ell$ and $\floor{(\ell+1)(1+1/\ell)^{k-1}}$ intervals.
If all intervals are complete then $\dist_S(u,v) \le d + \ceil{\frac{d}{\ell^{k-1}}}\Succ{\ell}{k-1}$.  
If at least one is incomplete then $\dist_S(u,v) \le d + (\ceil{\frac{d}{\ell^{k-1}}}-1)\Succ{\ell}{k-1} + 4\Fail{\ell}{k-1}$.
If $\ell \in [3,k-1]$ then according to Lemma~\ref{lem:succfail-closedform}, 
$\Succ{\ell}{k-1}=4\Fail{\ell}{k-1}$ and we are indifferent between these two possibilities.
If $\ell \ge k$ or $\ell=2$ then $\Succ{\ell}{k-1} < 4\Fail{\ell}{k-1}$, so the second case is worse.
When $\ell=2$ we have
\begin{align*}
\lefteqn{d + \paren{\ceil{\frac{d}{2^{k-1}}}-1}\Succ{2}{k-1} + 4\Fail{2}{k-1}}\\
&<  d + \paren{\ceil{\frac{d}{2^{k-1}}}-1}3^{k} + 4\cdot 3^{k}\\
&\le d(1 + 3(3/2)^{k-1}) + 4\cdot 3^{k}\\
\intertext{So $S(k,r)$ is a multiplicative $O((3/2)^{k-1})$-spanner for $d\ge 2^k$.  This is a non-trivial multiplicative stretch.  Traditional multiplicative stretch spanners with size $n^{1+\f{1}{2^{k+1}-1}}$ size stretch some pairs by a factor of $2^{k+2}-3$.
When $\ell=3$ we have}
\lefteqn{d + \paren{\ceil{\frac{d}{3^{k-1}}}-1}\Succ{3}{k-1} + 4\Fail{3}{k-1}}\\
&<	d + \paren{\ceil{\frac{d}{3^{k-1}}}-1}4(k-1)3^{k-1}	+ 4k3^{k-1}\\
&\le d(1 + 4(k-1)/3 + 4k/3)				& \mbox{(since $3^{k-1} \le d/3$)}
\intertext{Thus $S(k,r)$ functions as a multiplicative $O(k)$-spanner when $d\ge 3^{k}$.  When $\ell \in [4,k)$,}
\lefteqn{d + \paren{\ceil{\frac{d}{\ell^{k-1}}}-1}\Succ{\ell}{k-1} + 4\Fail{\ell}{k-1}}\\
&< 	d + \paren{\ceil{\frac{d}{\ell^{k-1}}}-1}4c_{\ell} \ell^{k-1} + 4c_{\ell}\ell^{k-1}\\
&< 	d(1 + 4c_{\ell}  + 4c_{\ell}/\ell) \\
&= d\paren{1+\f{4(\ell+1)}{\ell-3}} = (5+O(\fr{1}{\ell}))d & \mbox{(since $\ell^{k-1} \le d/\ell$, $c_\ell = \ell/(\ell-3)$)}
\intertext{The multiplicative stretch of $S(k,r)$ tends to $5$ as $d$ increases from $3^{k}$ to $(k-1)^{k}$.
When $\ell \ge k$ we have}
\lefteqn{d + \paren{\ceil{\frac{d}{\ell^{k-1}}}-1}\Succ{\ell}{k-1} + 4\Fail{\ell}{k-1}}\\
&\le d + \paren{\ceil{\frac{d}{\ell^{k-1}}}-1}(4c_{\ell}(k-1)+2)\ell^{k-2} \zero{\:+\: 4c_{\ell}\ell^{k-1}}\\
&\le d\paren{1+\frac{4c_\ell k + 2}{\ell}}		& \mbox{$\ell^{k-1} \le d/\ell$}
\end{align*}
When $d \ge \ell^k \ge k^k$
the multiplicative stretch is $1 + (1+o(1))4k/\ell$, 
where the $o(1) = O(1/\ell)$ tends to zero as $\ell$ increases.
When $\ell \ge (4c_{\ell}k + 2)/\epsilon$ the multiplicative stretch becomes $1+\epsilon$.

One may confirm that by setting $\ell = \floor{d^{\f{1}{k}}}$, in all the cases above
the stretch function of $S(k,r)$ can be expressed as 
$f(d) = d + O(kd^{1-\f{1}{k}} + 3^k)$, for $\ell^k \le d \le r^k$,
and $f(d) = d + O(kd/r)$ for $d \ge r^k$.
The leading constants in the terms $O(kd^{1-\f{1}{k}})$ and $O(kd/r)$ tend to $4$ as $d$ increases.
\end{proof}

Setting $r = (4k + O(1))/\epsilon$, we obtain a $(1+\epsilon, O(k/\epsilon)^{k-1})$-spanner with size 
$O((k/\epsilon)^h kn^{1+\f{1}{2^{k+1}-1}})$.
This spanner is sparsest when $\epsilon>0$ is a fixed constant and $k = \log_2\log_2 n - O(1)$: 
it is then a $(1+\epsilon, ((4+o(1))\log\log n)^{\log\log n-O(1)})$-spanner with size $O(n(\log\log n)^{7/4})$.
When $k=\log\log n$ it is possible to reduce the size of this spanner to 
$O(kn + nr^{3/4}) = O(n(\log\log n + (\epsilon^{-1}\log\log n)^{3/4}))$.
The $kn$ term reflects the cost of the paths $\{P(u,p_i(u))\}_{u\in V, i\in [1,k]}$.  
Rather than equalize the remaining contribution of $E_0',\ldots,E_{k}'$, one chooses 
the sampling probabilities such that $|\tilde{E}_1|$ and $|E_0'|$ are balanced 
and $|E_2'|,|E_3'|,\ldots,|E_{k}'|$ decay geometrically.

Even sparser $(1+\epsilon,\beta)$-spanners are known, but they have slightly worse tradeoffs.  Pettie~\cite{Pettie-Span09}
constructed a $(1+\epsilon, O(\epsilon^{-1}\log\log n)^{\log \log n})$-spanner
with size $O(n\log\log(\epsilon^{-1}\log\log n))$.

\section{Lower Bounds for Hopsets}\label{sect:hopset}

In this section, we show lower bounds on the tradeoffs between $\beta$ and $\eps$ in $(\beta,\epsilon)$-hopsets, 
subject to an upper bound on the number of edges in the hopset.
We begin by making some minor modifications to the construction of the lower bound graphs $\{\Graph_k\}_k$
from Section~\ref{sect:lower-bound}, then prove lower bounds on hopsets for $\Graph_k$.

\subsection{A New Construction of $\Graph_k$}

In the base case $k=1$, redefine $\Graph_1[p]$ to be a copy of $\Base[p]$ on $\ell+1$ layers (rather than a biclique), 
each edge of which has unit length.  Naturally $\Pairs_1$ is $\Pairs(\Base[p])$.  
Rather than have $p^2$ edges and $p^2$ pairs in its pair-set, 
the new $\Graph_1[p]$ has $p^{2-o(1)}$ edges and $p^{2-o(1)}$ pairs in $\Pairs_1$, when $\ell = p^{o(1)}$.
The graph $\Graph_k[p]$ is formed as before, by taking a copy of 
$\DblBase[p]$ and replacing each vertex in an interior layer with a standard or reversed 
copy of $\Graph_{k-1}[p']$, where $p' = p/\Loss_\ell(\sqrt{p})$.  
Rather than subdivide edges of $\DblBase[p]$
into paths of length $(2\ell-1)^{k-1}$, we leave them as is, but give them weight $(2\ell-1)^{k-1}$.
The construction of $\Pairs_k$ from $\Pairs_{k-1}$ is exactly as in Section~\ref{sect:lower-bound}.
When $\ell = p^{o(1)}$, the size of $\Graph_k[p]$ and $\Pairs_k[p]$ only differ from the 
old $\Graph_k[p]$ and $\Pairs_k[p]$ (from Section~\ref{sect:lower-bound}) by $p^{o(1)}$ factors.

\begin{lemma} \label{lem:hopset-unique-paths}
If $(u,v) \in \Pairs_k$ then there is a \emph{unique} shortest path from $u$ to $v$ in $\Graph_k$.
If $k=1$ the path has length exactly $\ell$ and if $k\ge 2$ the path has length exactly $(2k-1)\ell (2\ell-1)^{k-1}$
and passes through $2\ell-1$ copies of $\Graph_{k-1}$.
\end{lemma}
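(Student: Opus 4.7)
The plan is to prove both parts by induction on $k$, closely following the proof of Lemma~\ref{lem: unique paths}. The only change in the construction is that inter-copy subdivided paths have been replaced by weighted edges of the same length $(2\ell-1)^{k-1}$, so the shortest-path analysis carries over essentially unchanged. The base case $k=1$ is immediate: $\Graph_1[p] = \Base[p]$ and $\Pairs_1 = \Pairs(\Base[p])$, and the $\ell$-average-free labeling of $\Base[p]$ directly yields a unique shortest path of length $\ell$ between each pair.

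For the inductive step, fix $(u,v) = (u_0,u_{2\ell}) \in \Pairs_k \subseteq \Pairs(\DblBase[p])$ and let $D_k = (2k-1)\ell(2\ell-1)^{k-1}$ denote the target distance. Lemma~\ref{lem:properties-of-DblBase}(2) supplies a unique shortest path $u_0,\ldots,u_{2\ell}$ in $\DblBase$ with alternating labels $a, b \in \Labels[\sqrt{p}]$, and by the definition of $\Pairs_k$, $(\pi(a),\pi(b)) \in \Pairs_{k-1}$. I would first exhibit the canonical path in $\Graph_k$: it traces the $2\ell$ inter-copy weighted edges and, within each of the $2\ell-1$ interior copies $\Graph(u_1),\ldots,\Graph(u_{2\ell-1})$, follows the inductively unique shortest path between its $\pi(a)$- and $\pi(b)$-ports of length $(2k-3)\ell(2\ell-1)^{k-2}$ (evaluating to $\ell$ when $k=2$). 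Summing yields total length
\[
2\ell(2\ell-1)^{k-1} + (2\ell-1)(2k-3)\ell(2\ell-1)^{k-2} = (2k-1)\ell(2\ell-1)^{k-1} = D_k,
\]
matching the claim, with exactly $2\ell-1$ copies of $\Graph_{k-1}$ traversed.

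For uniqueness I would consider any $u$-$v$ walk $W$ in $\Graph_k$ and project it to a walk $W'$ in $\DblBase$ by contracting maximal intra-copy segments. Since $\DblBase$ is layered with edges only between adjacent layers, $W'$ has length $\ge 2\ell$, with equality iff $W'$ is monotone; in that case Lemma~\ref{lem:properties-of-DblBase}(2) forces $W' = (u_0,\ldots,u_{2\ell})$ and the inductive hypothesis forces $W$ to be the canonical path. The main obstacle is the non-monotone case $|W'| = 2\ell + 2j$ with $j\ge 1$: the inter-copy weight alone exceeds the canonical weight by $2j(2\ell-1)^{k-1}$, and one must rule out that this excess is offset by reduced intra-copy cost. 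In the extreme subcase where every extra edge-pair is a \emph{bounce} (enter and exit a copy via the same $\DblBase$ edge, contributing zero intra-copy cost), the walk still executes all $2\ell-1$ canonical crossings, so $|W| = D_k + 2j(2\ell-1)^{k-1} > D_k$. For non-bouncing detours, a label-based accounting shows each non-canonical interior visit forces positive intra-copy cost of at least $2(2\ell-1)^{k-2}$ (the entry and exit ports lie on the same side of a $\Graph_{k-1}$ copy, requiring at least two inter-copy edges inside $\Graph_{k-1}$ to connect them); combined with the extra inter-copy cost, this establishes $|W| > D_k$ in all cases. The weight $(2\ell-1)^{k-1}$ on inter-copy edges is chosen precisely to close this accounting.
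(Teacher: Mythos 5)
Your base case, your explicit canonical path, and your length bookkeeping all match the paper's recurrence $d_1 = \ell$, $d_k = (2\ell-1)d_{k-1} + 2\ell(2\ell-1)^{k-1}$, which solves to $(2k-1)\ell(2\ell-1)^{k-1}$, so the quantitative part of the argument is right. The paper's own proof is one sentence long and simply defers to the reasoning of Lemma~\ref{lem: unique paths}, so you are filling in details the paper leaves implicit. The structure --- induction, identify the canonical path, use uniqueness of the $\DblBase$ shortest path plus the inductive hypothesis for the interior copies --- is the same.

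Where your write-up diverges is the non-monotone case, and there it is more laborious than it needs to be and slightly underjustified. The ``bounce versus non-bounce'' dichotomy and the claim that a non-canonical interior visit forces intra-copy cost $\ge 2(2\ell-1)^{k-2}$ (which you describe as ``a label-based accounting'' without proving it) are extra machinery. The cleaner observation --- implicit in Lemma~\ref{lem: unique paths}'s phrase ``visit consecutive layers without backing up'' and in its remark that the canonical path length equals the distance from first to last layer --- is that $\Graph_k$ is a weighted layered graph: there is a potential function $x$ on vertices (the overall layer index) such that every edge, whether a unit edge inside a base copy or a weight-$(2\ell-1)^{k-1}$ inter-copy edge, satisfies $|x(u) - x(v)| = w(u,v)$. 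Hence the length of any $u_0$--$u_{2\ell}$ walk is at least $x(u_{2\ell}) - x(u_0) = d_k$, with equality if and only if the walk never decreases $x$, i.e., is monotone. That disposes of the non-monotone case in one line and avoids the need to track intra-copy costs against extra inter-copy weight separately; your casework then collapses to the monotone case, which you have already handled correctly via $\DblBase$ uniqueness and the inductive hypothesis on $\Pairs_{k-1}$.
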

\begin{proof}
The proof follows the same lines as Lemma \ref{lem: unique paths}.  Let $d_k$ be the distance between the 
input ports and output ports in $\Graph_k$.  Then
\begin{align*}
d_1 &= \ell\\
d_k &= (2\ell-1)d_{k-1} + 2\ell (2\ell-1)^{k-1}
\end{align*}
and the claim follows by induction on $k$.
\end{proof}

\subsection{Simplifying the Hopset $H$}

Consider a hopset $H$ for $\Graph_k$.  In order to simplify the arguments to come we will manipulate 
$H$ so that it satisfies certain structural properties. 

\begin{definition}
Let $H$ be a hopset of $\Graph_k$.
\begin{enumerate}
\item An edge $(u,v) \in H$ has \emph{order $i$}, $1\le i \le k$, if $u$ and $v$ are contained in a single copy of $\Graph_i$ within $\Graph_k$.
\item Suppose $(u,v)\in H$ has order $i$.  If $u$ and $v$ are in adjacent copies of $\Graph_{i-1}$ 
(or $u$ is in a copy of $\Graph_{i-1}$ and $v$ is an adjacent input/ouput port of the copy of $\Graph_i$ containing it) 
then $(u,v)$ is \emph{short}.  Otherwise $(u,v)$ is \emph{long}.
\end{enumerate}
\end{definition}

Later it will be convenient to assume that $H$ contains only long edges.  
Lemma~\ref{lem:simple-h} shows that short
edges can be expunged from $H$ without affecting $\beta$ and $|H|$ by more than a constant factor.

\begin{lemma} \label{lem:simple-h}
Let $H$ be a $(\beta, \eps)$ hopset for $\Graph_k$.
Then there is a $(O(k\beta), \eps)$ hopset $H'$ for $\Graph_k$
containing only long edges, with $|H'| \le 2|H|$.
\end{lemma}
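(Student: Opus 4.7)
The plan is to construct $H'$ by keeping the long edges of $H$ and inserting at most one new long edge per short edge of $H$, then argue that any $\beta$-hop approximate-shortest path in $G \cup H$ can be converted, edge-by-edge, into an $O(k\beta)$-hop approximate-shortest path in $G \cup H'$. Concretely, for each short edge $e = (u,v) \in H$ of order $i$, so that $u$ and $v$ lie in adjacent copies $C_1, C_2$ of $\Graph_{i-1}$ inside some copy $C$ of $\Graph_i$, I would define a long replacement $e^\star = (u, v^\star)$ by extending $v$ one step further along the canonical layered structure of $C$ (as given by $\DblBase[p]$) into a non-adjacent copy $C_3$ of $\Graph_{i-1}$. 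The weight of $e^\star$ is set to $\dist_{\Graph_k}(u, v^\star)$ as required. Setting $H' := \{\text{long edges of } H\} \cup \{e^\star : e \in H \text{ short}\}$ immediately yields $|H'| \le 2|H|$.

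To show $H'$ is a $(O(k\beta), \eps)$-hopset, I would fix any pair $(s,t)$ and take a $\beta$-hop path $P$ in $G \cup H$ with total weight at most $(1+\eps)\dist(s,t)$. I would construct $P'$ in $G \cup H'$ by keeping every long edge of $P$ and replacing each short edge $e$ of order $i$ in $P$ by $e^\star$ followed by a ``backtracking'' subpath from $v^\star$ back to $v$. The backtracking uses only graph edges and the long replacements at orders strictly less than $i$ (which lie in $H'$ by construction). Using the recursive product structure of $\Graph_k$ and Lemma~\ref{lem:hopset-unique-paths}, the backtrack at each level needs only $O(1)$ hops, so by induction on $i$ the simulation of a single short edge uses at most $O(i) \le O(k)$ hops, and $|P'| \le O(k\beta)$ in total.

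The main obstacle is controlling the \emph{weight} of $P'$, since each backtrack of an overshoot adds $\dist_{\Graph_k}(v, v^\star)$ to the local path length. My plan to handle this is twofold. First, by choosing $v^\star$ canonically along the unique shortest $u$-to-output-port direction in $C$, the overshoot $\dist_{\Graph_k}(v, v^\star)$ is bounded by the weight of a single $\DblBase$-edge at scale $i$, namely $O((2\ell-1)^{i-1})$, so it is small compared to the total weight of $P$ whenever $P$ actually uses several such hops. Second, and more importantly, whenever $P$ continues past $v$ in the forward direction of $C$, I would arrange the backtrack to coincide with a prefix of the next edges of $P$ so that no extra weight is incurred; when $P$ veers off instead, I would choose $v^\star$ by extending in the reverse direction (into $C_1$) rather than forward. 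A careful case analysis should then show that the net added weight is zero, preserving the $(1+\eps)$-stretch guarantee exactly. The delicate part is showing that these two kinds of canonical extensions can be chosen consistently across all short edges without blowing up $|H'|$ past $2|H|$, which amounts to verifying that the charging of short edges to their replacements is at most $2$-to-$1$.
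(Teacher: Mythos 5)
Your approach is genuinely different from the paper's, and it has a gap you already sense but cannot close.

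The paper does not ``overshoot and backtrack.'' It exploits the fact that the two adjacent copies $C_1, C_2$ of $\Graph_{i-1}$ containing $u$ and $v$ are joined by a \emph{unique} graph edge $(\hat u,\hat v)$, and it replaces the short hopset edge $(u,v)$ with the two hopset edges $(u,\hat u)$ and $(\hat v, v)$. A path formerly using $(u,v)$ now uses the three hops $(u,\hat u),(\hat u,\hat v),(\hat v, v)$. Because any shortest $u$--$v$ path in $\Graph_k$ must cross via $(\hat u,\hat v)$, we have $\dist(u,\hat u)+w(\hat u,\hat v)+\dist(\hat v,v)=\dist(u,v)$, so the total weight is unchanged and the $(1+\eps)$ guarantee is preserved with no case analysis. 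Each of $(u,\hat u)$ and $(\hat v,v)$ may still be short at order $i-1$, but since $\hat u,\hat v$ are ports of copies of $\Graph_{i-1}$ and lie in no copy of $\Graph_{i-2}$, each subsequent recursive step spawns only \emph{one} new edge; so the two chains each terminate in a single long edge, yielding $|H'|\le 2|H|$ and at most $O(k)$ hops per original hopset edge.

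Your ``extend $v$ to $v^\star$ in a non-adjacent copy, then backtrack'' scheme has two concrete problems. First, the weight: any replacement path $u\to v^\star\to v$ has length at least $\dist(u,v^\star)+\dist(v^\star,v)>\dist(u,v)$, so weight is strictly inflated. Your proposed fix --- choose the overshoot direction so the backtrack merges with the prefix of the next edges of the path --- cannot apply when $(u,v)$ is the \emph{last} edge of the $\beta$-hop path (i.e.\ $v=t$); in that case there is nothing to merge with and the inflation is unavoidable, breaking $(1+\eps)$. Even when the path continues, the next edge of $P$ need not head toward $v^\star$, and there is no argument that a ``canonical'' choice of $v^\star$ always aligns with what $P$ actually does. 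Second, the hop count: the backtrack from $v^\star$ (in a non-adjacent copy $C_3$) back to $v$ traverses at least one full $\DblBase$-edge plus travel inside a $\Graph_{i-1}$ copy, which in the underlying graph is many hops; your claim that this takes $O(1)$ hops per level, using ``long replacements at orders $<i$,'' is not supported --- those replacement edges were introduced for other short edges and there is no reason they cover the specific path your backtrack needs. The paper sidesteps both issues by routing through the unique connecting port pair $(\hat u,\hat v)$, which lies on the actual shortest $u$--$v$ path, so no backtracking and no weight inflation occur.
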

\begin{proof}
Let $(u,v) \in H$ be an order $i$ short edge connecting adjacent copies of $\Graph_{i-1}$, and let $(\hat{u},\hat{v})$
be the edge joining these copies.  See Figure~\ref{fig:hopset}.  Replace $(u,v)$ in $H$ with edges $(u,\hat{u}), (\hat{v},v)$.
Any path formerly using $(u,v)$ can now use three edges in its place: $(u,\hat{u}), (\hat{u},\hat{v}), (\hat{v},v)$.
Observe that $(u,\hat{u}), (\hat{v},v)$ are order $i-1$ edges, which may be \emph{short} order $i-1$ edges.
If $(u,\hat{u})$ and/or $(\hat{v},v)$ are short, recursively process them in the same way.  
Whereas processing $(u,v)$ spawned two edges, processing $(u,\hat{u})$ or $(\hat{v},v)$ spawns a single
edge since $\hat{u}$ and $\hat{v}$ are input/output ports of copies of $\Graph_{i-1}$, and not contained in any copy of $\Graph_{i-2}$.
Thus, after this recursive process completes, each original edge $(u,v)$ is simulated by 
a path with at most $O(k)$ hops.
\begin{figure}
\centering
\scalebox{.4}{\includegraphics{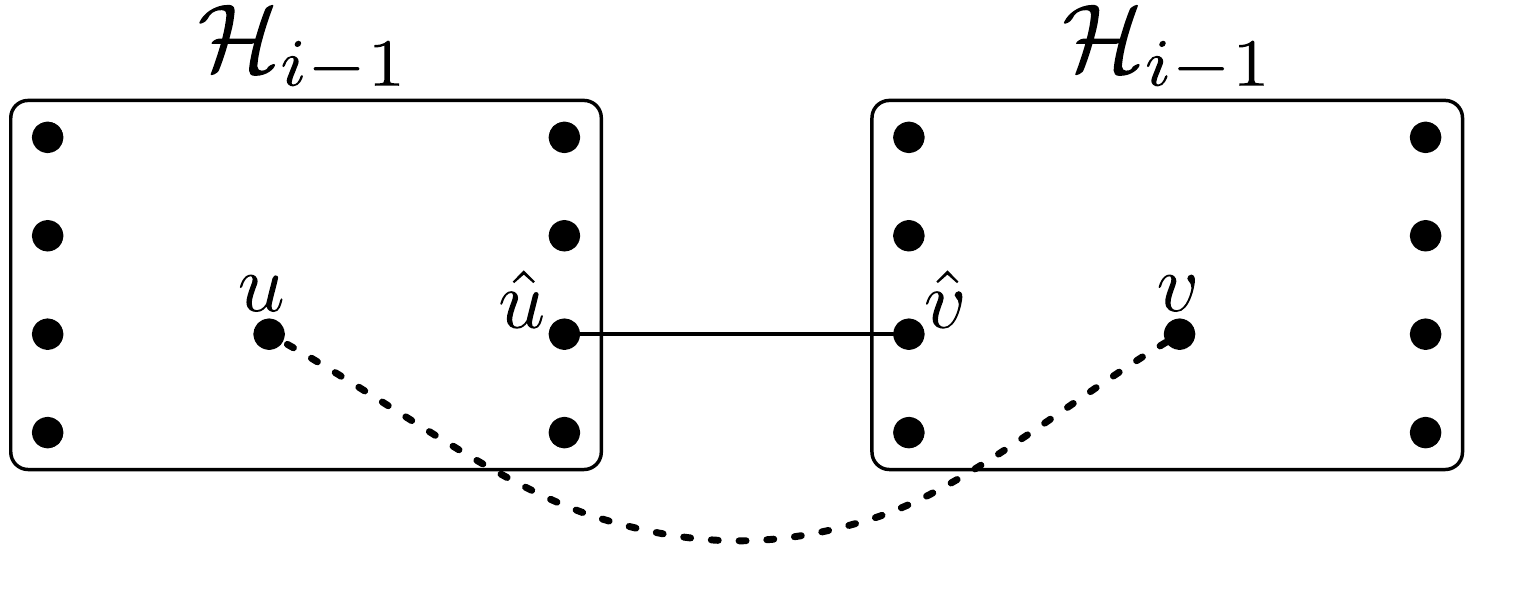}}\\
\scalebox{.4}{\includegraphics{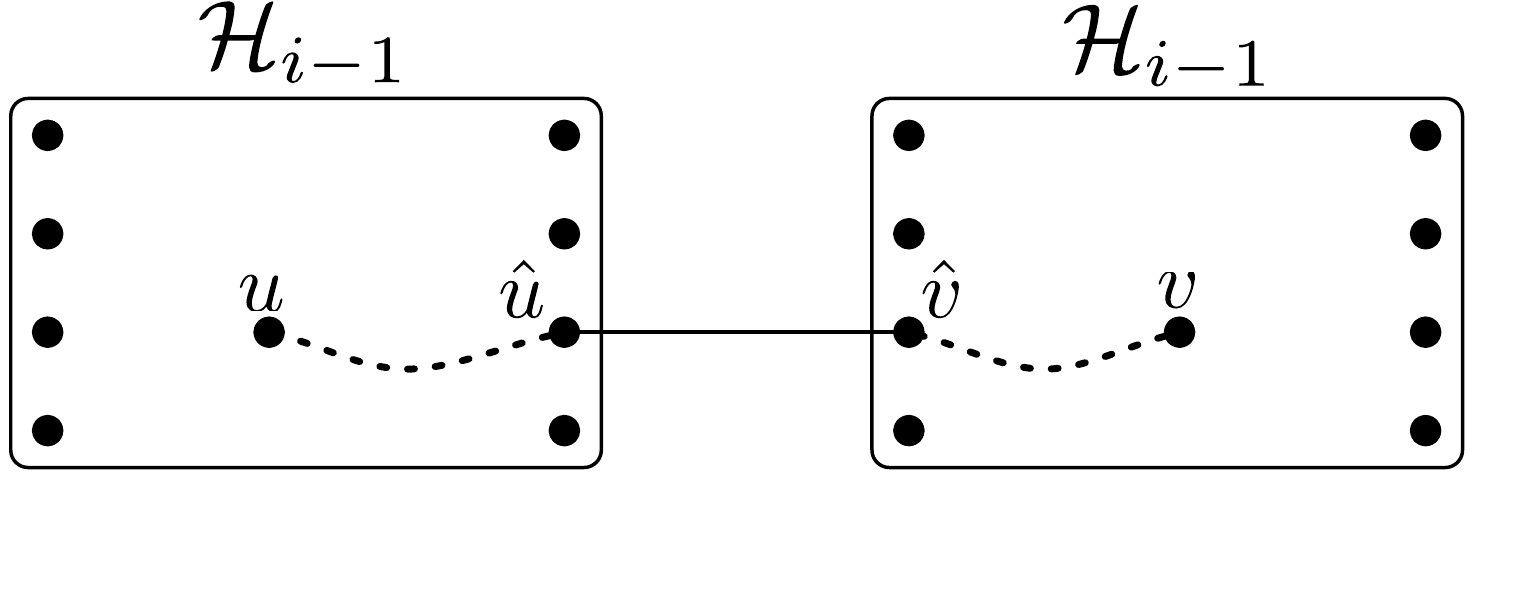}}
\caption{\label{fig:hopset}Above: an order $i$ short edge $(u,v)$ joining two vertices in {\em adjacent}
copies of $\Graph_{i-1}$, the edge joining these copies being $(\hat{u},\hat{v})$.  Below: replacing $(u,v)$ with three hops $(u,\hat{u}),(\hat{u},\hat{v}),(\hat{v},v)$.  If (order $i-1$) edges $(u,\hat{u}),(\hat{v},v)$ are still short, then are processed recursively.  Note: $\hat{u}$ and $\hat{v}$ are input/output ports in copies of $\Graph_{i-1}$; only $u$ and $v$ may be contained in copies of $\Graph_{i-2}$.}
\end{figure}
\end{proof}

Henceforth we only consider hopsets for $\Graph_k$ that contain only long edges.

\subsection{Tradeoffs Between $\beta$ and $\epsilon$}

We next assign \emph{ownership} of each long order $i$ edge $(x, y) \in H$ to a pair 
in $\Pairs_k$.  Suppose that $G_{i-1}^x$ and $G_{i-1}^y$ are the copies of $\Graph_{i-1}$ containing $x$ and $y$, respectively.
Let $P \in \Pairs_k$ \emph{own} $(x,y)$ if the unique shortest path for $P$ intersects both $G_{i-1}^x$ and $G_{i-1}^y$.  
It is not obvious how to assign ownership over short edges.
Lemma~\ref{lem:hopset-own} motivates our procedure for expunging short edges by showing
that each remaining long edge is owned by at most one pair in $\Pairs_k$.

\begin{lemma}\label{lem:hopset-own}
Each long edge $(x, y)\in H$ is owned by at most one pair in $\Pairs_k$.
\end{lemma}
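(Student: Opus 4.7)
The plan is to exploit the telescoping structure of the recursive construction. Fix a long order-$i$ edge $(x,y)\in H$, and let $G_j$ be the unique copy of $\Graph_j$ inside $\Graph_k$ that contains both $x$ and $y$, for $i\le j\le k$; this yields a chain $G_{i-1}^x,G_{i-1}^y\subseteq G_i\subseteq G_{i+1}\subseteq\cdots\subseteq G_k=\Graph_k$. (If $x$ or $y$ is an input/output port of $G_i$ rather than an interior vertex sitting inside a copy of $\Graph_{i-1}$, I treat it as its own ``degenerate'' copy at layer $\Layer{0}$ or $\Layer{2\ell}$ of the $\DblBase$ underlying $G_i$.) For any owning pair $P\in\Pairs_k$, the unique shortest path of $P$ must pass through every $G_j$, and within each $G_j$ it traces the unique shortest path of some sub-pair $(\mu_j,\nu_j)\in\Pairs_j$. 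I will show by induction on $j$ that $(\mu_j,\nu_j)$ is determined by $(x,y)$ alone; taking $j=k$ pins $P=(\mu_k,\nu_k)$ uniquely.

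The base case $j=i$ is the heart of the proof. The sub-path inside $G_i$ visits specific vertices of the underlying $\DblBase$ at interior layers $\Layer{j_x}$ and $\Layer{j_y}$ corresponding to $G_{i-1}^x$ and $G_{i-1}^y$ (or boundary layers for port endpoints). Because $(x,y)$ is long, either $|j_x-j_y|\ge 2$, in which case the two visited vertices of $\DblBase$ determine two distinct points on each of the underlying $\Base^0$- and $\Base^1$-paths, or else $|j_x-j_y|=1$ with no connecting edge in the underlying $\DblBase$, in which case no shortest path of any $\Pairs(\DblBase)$-pair can traverse both copies at all, so the set of owning pairs is empty and there is nothing to prove. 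In the first case I will use the fact that a shortest path in $\Base[\cdot]$ (built from an $\ell$-average-free set) carries a constant edge label and is pinned down by that label together with any one point along it: two distinct points recover each of the two $\Base$-paths uniquely, which fixes the pair in $\Pairs(\DblBase)$ and therefore $(\mu_i,\nu_i)\in\Pairs_i$.

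The inductive step is much easier. Given that $(\mu_j,\nu_j)$ is already fixed, the path of $P$ through $G_{j+1}$ must enter the fixed sub-copy $G_j$ at port $\mu_j$ and leave at port $\nu_j$. Inverting the (fixed) port-assignment permutation used at level $j+1$ forces the two alternating labels of the $\DblBase$-path associated to $(\mu_{j+1},\nu_{j+1})$, and the fixed position of $G_j$ inside that $\DblBase$ provides a single anchor vertex on the labeled path; together these identify $(\mu_{j+1},\nu_{j+1})$ uniquely. Iterating up to $j=k$ closes the induction.

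The main obstacle I expect is not the algebraic core, which mirrors the uniqueness arguments already carried out for the lower-bound graphs in Lemma~\ref{lem: unique paths} and Lemma~\ref{lem:pair-distances}, but rather the careful handling of the ``long but consecutive-layer with no edge'' case and of the boundary case where $x$ or $y$ is an input/output port of $G_i$ rather than an interior vertex. Both must be reconciled with the hopset simplification in Lemma~\ref{lem:simple-h} so that the definition of ``long'' indeed forces the two visited locations in $\DblBase$ to either lie at least two layers apart or to admit no path at all, making the base case dichotomy above exhaustive.
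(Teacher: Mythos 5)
Your argument is correct and follows essentially the same route as the paper's proof: both reduce ownership to the uniqueness of the $\Pairs_i$ sub-pair by reading the two alternating labels off the $\DblBase$ segment between $\bar{x}$ and $\bar{y}$ (possible exactly because a long edge forces that segment to contain $\geq 2$ edges or to admit no $\Pairs(\DblBase)$-path at all), and then note that the $\Pairs_i$ pair, together with the fixed nesting $G_i \subseteq G_{i+1} \subseteq \cdots \subseteq \Graph_k$ and the fixed port-assignment permutations, determines the $\Pairs_k$ pair; you simply make the upward induction explicit where the paper states it in one line. The only minor slip is that your stated base-case dichotomy omits the third possibility $j_x = j_y$ (same $\DblBase$ layer), which likewise admits no owning pair, and the paper reads the labels directly off the edges of the $\DblBase$ segment rather than reconstructing them from the two endpoint vertices, which is a bit cleaner but equivalent.
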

\begin{proof}
Suppose $(x, y)$ has order $i$.  Let $G_i$ be the copy of $\Graph_i$ containing $x,y$ and 
$G_{i-1}^x,G_{i-1}^y$ be the copies of $\Graph_{i-1}$ within $G_i$
containing $x$ and $y$.
Each pair in $\Pairs_k$ has a unique shortest path in $\Graph_k$; if it intersects $G_i$ then it enters and exits $G_i$ by a unique (input port, output port) pair,
which is included in $\Pairs_{i}$.
Thus, it suffices to prove that at most one pair in $\Pairs_i$ has a shortest path intersecting both $G_{i-1}^x, G_{i-1}^y$.
Since $(x,y)$ is long, $G_{i-1}^x,G_{i-1}^y$ are not adjacent, i.e., the corresponding nodes $\bar{x}, \bar{y}$ in $\DblBase$ are at distance at least 2.
In order for a path in $\Pairs_i$ to intersect $G_{i-1}^x,G_{i-1}^y$ the edges on the path between $\bar{x}$ and $\bar{y}$ must be labeled alternately with
two labels $a,b$.  The triple $\bar{x},a,b$ uniquely determines the input port and output port in $G_i$, and therefore uniquely determines 
a member of $\Pairs_k$ that owns $(x,y)$.  If the shortest path between $\bar{x}$ and $\bar{y}$ is not labeled alternately with two labels $a,b$, then no pair in $\Pairs_k$ owns $(x,y)$.
\end{proof}

If the size of the hopset $H$ is strictly less than $|\Pairs_k|$ then some pair in $\Pairs_k$ must not own any edges.
Lemma~\ref{lem:non-ownership-error} shows that for any pair with this property, it is impossible to get below additive
error $2(\ell+1)^{k-1}$ via a path having at most $(\ell-1)^k$ hops.

\begin{lemma} [Compare to Lemma \ref{lem:removing-critical-edges}] \label{lem:non-ownership-error}
Let $H$ be a hopset for $\Graph_k$ containing only long edges and let $(u, v) \in \Pairs_k$ be a pair
that owns no edges in $H$.
Then we have
\[
\dist^{((\ell-1)^k)}_{\Graph_k \cup H}(u, v) \ge \dist_{\Graph_k}(u, v) + 2(\ell+1)^{k-1}.
\]
\end{lemma}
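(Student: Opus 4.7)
The plan is to induct on $k$. For the base case $k=1$, the graph $\Graph_1=\Base[p]$ is an $(\ell{+}1)$-layer graph whose unique $u$--$v$ shortest path has exactly $\ell$ hops, so any walk of at most $(\ell-1)^1=\ell-1$ hops must traverse at least one $H$-edge. By the non-ownership hypothesis each such edge has at least one endpoint off the unique shortest path for $(u,v)$, and the uniqueness guaranteed by Lemma~\ref{lem:properties-of-DblBase}(2) forces any walk through an off-path vertex to contribute $\ge 2 = 2(\ell+1)^0$ extra weight.

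For the inductive step, let $(u_0,u_1,\ldots,u_{2\ell})=(u,\ldots,v)$ be the unique $\DblBase$-shortest path with labels alternating between some $a,b\in\Labels[\sqrt{p}]$, so that the unique $u$--$v$ path in $\Graph_k$ traverses the interior copies $\Graph(u_1),\ldots,\Graph(u_{2\ell-1})$ of $\Graph_{k-1}$ at the correct port pair $(\pi(a),\pi(b))\in\Pairs_{k-1}$. Fix any walk $W$ from $u$ to $v$ in $\Graph_k\cup H$ of at most $(\ell-1)^k$ hops, and associate $W$ with its \emph{$\DblBase$-projection}, the sequence of $\DblBase$-vertices whose $\Graph_{k-1}$-copies contain the successive vertices of $W$.

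I would split into a \emph{correct trajectory} case, in which the projection equals $(u_0,\ldots,u_{2\ell})$ and $W$ enters/exits each $\Graph(u_j)$ at the correct port pair, and a \emph{deviation} case otherwise. In the correct-trajectory case the sub-walk of $W$ inside each $\Graph(u_j)$ connects the pair $(\pi(a),\pi(b))\in\Pairs_{k-1}$, which by the nested structure of ownership inherits the non-ownership property from $(u,v)$. A pigeonhole on the total hop budget---at most $\ell-2$ of the $2\ell-1$ interior copies can consume more than $(\ell-1)^{k-1}$ hops---leaves at least $\ell+1$ copies whose sub-walks meet the inductive hop budget, each contributing additive error $\ge 2(\ell+1)^{k-2}$, for a total of $(\ell+1)\cdot 2(\ell+1)^{k-2}=2(\ell+1)^{k-1}$. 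In the deviation case, every $\DblBase$-edge of $\Graph_k$ carries weight $(2\ell-1)^{k-1}$, so a native $\DblBase$-detour costs at least $2(2\ell-1)^{k-1}\ge 2(\ell+1)^{k-1}$ by the uniqueness of $\DblBase$-shortest paths; a shortcut via an order-$k$ $H$-edge $(x,y)$ is not free because non-ownership puts one of $G_{k-1}^x,G_{k-1}^y$ off the unique path, and $W$ still pays a $\DblBase$-detour to reach it.

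The main obstacle I anticipate is the bookkeeping in the deviation case when $W$ cleverly weaves together several $H$-edges of mixed orders with wrong-port re-entries into otherwise-correct copies, so that no single deviation is obviously expensive. The key leverage is that non-ownership propagates recursively: $(u,v)$'s non-ownership forces non-ownership of each $\Pairs_{k'}$ pair ($k'<k$) that the walk interacts with, so at some level of the hierarchy $W$ must eventually pay a full $\DblBase$-detour of weight $\ge 2(2\ell-1)^{k-1}$, which majorizes the target $2(\ell+1)^{k-1}$.
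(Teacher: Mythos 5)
Your proposal follows essentially the same route as the paper's proof: an induction on $k$ whose base case exploits the uniqueness and parity of shortest paths in $\Base[p]$, and whose inductive step projects the hop-bounded walk onto $\DblBase$ and splits into the deviation case (projection is not the unique $\DblBase$-shortest path, costing $\geq 2(2\ell-1)^{k-1}$ extra in subdivided-edge weight, majorizing $2(\ell+1)^{k-1}$) and the on-track case (pigeonhole: at most $\ell-2$ of the $2\ell-1$ interior copies can overspend the hop budget $(\ell-1)^{k-1}$, so $\geq \ell+1$ copies each yield additive error $\geq 2(\ell+1)^{k-2}$ by the inductive hypothesis). You also explicitly flag that non-ownership of $H$-edges descends from $(u,v)\in\Pairs_k$ to the port pair $(\pi(a),\pi(b))\in\Pairs_{k-1}$ in each interior copy, a fact the paper uses tacitly; this is correct because the restriction of the unique $(u,v)$-shortest path to $G(u_j)$ is precisely the unique shortest path for that $\Pairs_{k-1}$ pair. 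The one place you could tighten the write-up is in the deviation case when the walk ``cleverly weaves together'' shortcuts: the paper resolves this cleanly by observing that \emph{any} projection $\hat P$ that is not the unique shortest $u$--$v$ path in $\DblBase$ must, by bipartiteness, have at least two extra $\DblBase$-edges of weight $(2\ell-1)^{k-1}$, which already produces the needed surplus regardless of which $H$-edges caused the deviation.
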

\begin{proof}
The proof is by induction over $k$.

\paragraph{Base Case.}
When $k=1$, $\Graph_1$ is a layered bipartite graph, so we either have
\[
\dist^{(\ell - 1)}_{\Graph_k \cup H}(u,v) = \dist_{\Graph_k}(u,v)
\]
or
\[
\dist^{(\ell - 1)}_{\Graph_k \cup H}(u,v) \ge \dist_{\Graph_k}(u,v) + 2
\]
so it suffices to rule out the former possibility.
We have $\dist_{\Graph_1}(u,v) = \ell$; thus, the shortest $u$--$v$ path in $\Graph_k \cup H$ using at most $\ell-1$ hops must include at least one edge in $H$.
All edges in $H$ have order $1$.  Since $(u,v)$ owns no edge in $H$, there is no edge $(x, y) \in H$ with $x, y$ on the unique shortest $u$--$v$ path.
It follows that $\dist^{(\ell-1)}_{\Graph_1 \cup H}(u,v) \ne \dist_{\Graph_1}(u,v)$ and the base case is complete.

\paragraph{Inductive Step.}
We now argue the inductive step.
Let $U$ be a $u$--$v$ path in $\Graph_k \cup H$ that uses at most $(\ell - 1)^k$ hops
and $P$ be the corresponding path in $\Graph_k$, i.e., the one obtained by replacing each $H$-edge in $U$ with a shortest path between its endpoints.
Finally, define $\hat{P}$ to be the projection on $P$ onto $\DblBase[p]$.
We consider two cases depending on whether $U$ uses at least one order $k$ edge from $H$ or not.

\paragraph{Inductive Step -- Case 1:} Suppose that $U$ includes an edge $(x, y) \in H$ of order $k$.
In this case we do not need the inductive hypothesis.
Since the pair $(u, v)$ does not own $(x, y)$, this means that $x$ (or $y$) is in a copy of 
$\Graph_{k-1}$ that is disjoint from the unique shortest $u$--$v$ path in $\Graph_k$.
Thus, $\hat{P}$ is not equal to the unique shortest $u$--$v$ path in $\DblBase$.
Since $\DblBase$ is bipartite, the length of $\hat{P}$ is at least $2 + \dist_{\DblBase}(u,v)$.
Each of these two edges has weight $(2\ell-1)^{k-1}$ in $\Graph_k$, so
$P$ (and $U$) have length at least 
$\dist_{\Graph_k}(u,v) + 2(2\ell - 1)^{k-1}$.
This same analysis applies whenever $\hat{P}$ is not identical to the shortest $u$--$v$ path in $\DblBase$.

\paragraph{Inductive Step -- Case 2:} Suppose that $U$ contains no edges of $H$ with order $k$.
By the above analysis, we can restrict our attention to the case when $\hat{P} = (u=u_0,u_1,\ldots,u_{2\ell} = v)$ 
is the shortest $u$--$v$ path in $\DblBase$.  Let $G(u_j)$ be the copy of $\Graph_{k-1}$ substituted for $u_j$ and $U_j$.
It follows that $U$ contains all the weighted edges joining consecutive $G(u_{j-1}),G(u_{j})$, and some paths $\{U_j\}$ 
joining an input port and output port of $G(u_j)$.  Moreover, these input/output port pairs must be in $\Pairs_{k-1}$.

Partition the $\{U_j\}_{1\le j\le 2\ell-1}$ based on whether their hop count is at most $(\ell-1)^{k-1}$
or at least $(\ell-1)^{k-1}+1$.  There can be at most $\floor{\frac{(\ell-1)^{k} - 2\ell}{(\ell-1)^{k-1}+1}} = \ell-2$ 
subpaths in the second category, meaning at least $(2\ell-1) - (\ell-2) = \ell+1$ of the subpaths in $\{U_j\}$ use at most $(\ell-1)^{k-1}$ hops.
Applying the inductive hypothesis to these subpaths, we have 
\[
\dist_{\Graph_k\cup H}^{((\ell-1)^{k})}(u,v) \ge \dist_{\Graph_k}(u,v) + (\ell+1)\cdot 2(\ell+1)^{k-2} = \dist_{\Graph_k}(u,v) + 2(\ell+1)^{k-1}.
\]
\end{proof}

We are finally ready to show:
\begin{theorem}\label{thm:hopset}
Fix a positive integer $k$ and parameter $\epsilon > 1/n^{o(1)}$.
Any construction of $(\beta, \eps)$-hopsets with size 
$n^{1 + \frac{1}{2^{k} - 1} - \delta}$, $\delta > 0$, 
has $\beta = \Omega_k\left( \frac{1}{\eps} \right)^{k}$.
\end{theorem}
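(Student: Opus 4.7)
The plan is to combine the three structural lemmas established for the (modified, weighted) graph $\Graph_k$ with the size bound $|\Pairs_k| = n^{1 + \f{1}{2^k-1} - o(1)}$ to force a contradiction for any hopset that is too small \emph{and} has $\beta$ too small simultaneously.

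First, I would fix the parameter $\ell$ as a function of $\epsilon$ to be chosen at the end, requiring $\ell = n^{o(1)}$ so that $\Loss_\ell(\sqrt p) = p^{o(1)}$ and hence $|\Pairs_k| = n^{1+\f{1}{2^k-1} - o(1)}$. Given a $(\beta, \epsilon)$-hopset $H$ for $\Graph_k$ of worst-case size at most $n^{1 + \f{1}{2^k-1} - \delta}$, I would apply Lemma~\ref{lem:simple-h} to replace $H$ by a hopset $H'$ containing only long edges, with $|H'| \le 2|H|$ and hop bound $\beta' = O(k\beta)$. Provided $\ell$ is chosen so that $n^{1+\f{1}{2^k-1}-\delta}$ is strictly less than $|\Pairs_k|/2$ (which holds for all large $n$ since $\delta > 0$), we have $|H'| < |\Pairs_k|$.

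Next, by Lemma~\ref{lem:hopset-own}, each long edge of $H'$ is owned by at most one pair of $\Pairs_k$, so the pigeonhole principle guarantees a pair $(u,v) \in \Pairs_k$ owning no edge of $H'$. Then I would invoke Lemma~\ref{lem:non-ownership-error}: assuming $\beta' \le (\ell-1)^k$, we obtain
\[
\dist^{(\beta')}_{\Graph_k \cup H'}(u,v) \;\ge\; \dist_{\Graph_k}(u,v) + 2(\ell+1)^{k-1}.
\]
Combining this with the hopset guarantee $\dist^{(\beta')}_{\Graph_k \cup H'}(u,v) \le (1+\epsilon)\dist_{\Graph_k}(u,v)$ and the exact distance $\dist_{\Graph_k}(u,v) = (2k-1)\ell(2\ell-1)^{k-1}$ from Lemma~\ref{lem:hopset-unique-paths} yields
\[
\epsilon \;\ge\; \f{2(\ell+1)^{k-1}}{(2k-1)\ell(2\ell-1)^{k-1}} \;=\; \Omega\!\paren{\f{1}{k\ell}}.
\]

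Finally, I would close the argument by contrapositive. Choose $\ell = \Theta(1/(k\epsilon))$ to be the largest value for which the inequality above fails; then any valid $(\beta,\epsilon)$-hopset with the stated size must violate the hypothesis $\beta' \le (\ell-1)^k$, i.e., must satisfy $\beta' = \Omega((1/(k\epsilon))^k)$, which after undoing the $O(k)$ blowup from Lemma~\ref{lem:simple-h} gives $\beta = \Omega_k(1/\epsilon)^k$. The assumption $\epsilon > 1/n^{o(1)}$ guarantees $\ell = n^{o(1)}$, so the pair-set size estimate used at the start is indeed $n^{1+\f{1}{2^k-1} - o(1)}$, completing the argument.

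The main obstacle I anticipate is purely bookkeeping: keeping the $O(k)$ factor from the short-edge reduction separate from the $k$-dependent constants hidden in $\Omega_k(\cdot)$, and ensuring the chosen $\ell$ is an integer at least $2$ so that both Lemma~\ref{lem:non-ownership-error} applies and the $\DblBase$ construction is non-degenerate. No new combinatorial ideas are required; essentially all the work is already done by Lemmas~\ref{lem:simple-h}, \ref{lem:hopset-own}, and \ref{lem:non-ownership-error}, and the proof amounts to instantiating them with the right choice of $\ell$.
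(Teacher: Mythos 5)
Your proposal is correct and follows the paper's own proof essentially verbatim: choose $\ell = n^{o(1)}$ so that $|\Pairs_k| = n^{1+\f{1}{2^k-1}-o(1)}$ exceeds the hopset size, find a pair owning no long edge, apply Lemma~\ref{lem:non-ownership-error} to lower-bound its $(\ell-1)^k$-hop distance, and compare with the $(1+\epsilon)$ guarantee and the exact distance from Lemma~\ref{lem:hopset-unique-paths} to force $\beta > (\ell-1)^k$ once $\ell = \Theta_k(1/\epsilon)$. If anything you are slightly more careful than the paper, since you explicitly invoke Lemma~\ref{lem:simple-h} and track the factor-$2$ size and $O(k)$ hop blowups, whereas the paper simply begins with ``Let $H$ be a $(\beta,\eps)$-hopset containing only long edges'' and absorbs these into the $\Omega_k$.
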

\begin{proof}
Construct $\Graph_k[p]$ with respect to an $\ell = p^{o(1)}$ to be determined, so $|\Pairs_k| = n^{1+\frac{1}{2^k-1} - o(1)}$.
Let $H$ be a $(\beta, \eps)$ hopset for $\Graph_k$ containing only long edges.
If $|H| \le n^{1 + \frac{1}{2^{k} - 1} - \delta}$ for some $\delta > 0$ then $|H| < |\Pairs_k|$, meaning 
some pair $(u,v)\in\Pairs_k$ owns no $H$ edges.
By Lemma \ref{lem:non-ownership-error}, we then have
\[
\dist^{(\ell-1)^k}_{\Graph_k \cup H}(u,v) = \dist_{\Graph_k}(u,v) + 2(\ell+1)^{k-1}.
\]
By Lemma \ref{lem:hopset-unique-paths}, 
$\dist_{\Graph_k}(u,v) = (2k-1)\ell (2\ell-1)^{k-1}$.
Thus the relative error $\hat{\epsilon}$ of any $(\ell-1)^{k}$-hop path is 
\[
\hat{\epsilon} = \frac{2(\ell+1)^{k-1}}{(2k-1)\ell(2\ell-1)^{k-1}} 
> \frac{2}{(2k-1)\ell 2^{k-1}}.
\]
We choose $\ell$ as a function of $k$ and $\epsilon$ so that 
$\ell < \frac{1}{2^{k-2}(2k-1)\epsilon}$, which implies $\hat{\epsilon} > \epsilon$.  
In order for $H$ to be a $(\beta,\epsilon)$-hopset for $\Graph_k$, it must be that $\beta > (\ell-1)^k = \Omega_k(1/\epsilon)^{k}$.
\end{proof}

Observe that Theorem~\ref{thm:hopset} implies several interesting corollaries: 
any $(\beta,\eps)$-hopset with $\beta=o(1/\epsilon)$ must have size $\Omega(n^{2-o(1)})$
and any such hopset with $\beta = o(1/\epsilon^2)$ must have size $\Omega(n^{4/3-o(1)})$.

\begin{remark}
The analysis of Lemma~\ref{lem:non-ownership-error} still has some slack in it, 
which introduces the $2^{k-1}$ factor error in Theorem~\ref{thm:hopset}.
A more careful analysis will most likely reduce the hop lower bound to $\beta = \Omega(1/(k\epsilon))^k$,
mimicking the dependency on $k$ from Theorems~\ref{thm:lb-onepluseps} and \ref{thm:lb-DO}.
\end{remark}

\begin{remark}
The construction of $\Graph_k$ from this section was essentially the same as that of Section~\ref{sect:lower-bound}, except in the base case.  Had we used Section \ref{sect:lower-bound}'s definition of $\Graph_1[p]$ (a $K_{p,p}$ biclique, rather than a copy of $\Base[p]$), Theorem~\ref{thm:hopset} would have arrived at a weaker lower bound on $\beta = \Omega_k(1/\epsilon)^{k-1}$.
\end{remark}

\section{Lower Bounds on Compressing High Girth Graphs\label{sect:girth-lbs}}

The density of the graph $\Graph_k$ constructed in Section~\ref{sect:lower-bound} 
comes exclusively from complete bipartite graphs (copies of $\Graph_1$), that is, $\Graph_k$ has girth 4.
This feature of the construction turns out to be absolutely essential.
Baswana et al.~\cite{BKMP10} showed that the class of graphs with girth (length of the shortest cycle)
larger than 4 contains additive spanners below the $4/3$ threshold.
For example, graphs with girth 5 contain additive $12$-spanners with size $O(n^{6/5})$.

\begin{theorem} [\cite{BKMP10}] \label{thm: bkmp girth}
For any integer $\gamma \ge 1$, any graph with girth at 
least $2\gamma+1$ contains an additive $6\gamma$-spanner on 
$O(n^{1 + \f{1}{2\gamma+1}})$ edges.
\end{theorem}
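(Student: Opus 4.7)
The plan is to adapt the sample--cluster--path-buying paradigm of the classical additive $6$-spanner~\cite{BKMP10}, scaling parameters to exploit the girth assumption. Girth helps in two complementary ways: (i) balls of radius $\gamma$ are cycle-free, so BFS trees rooted at sampled centers contain no ``wasted'' edges, and (ii) the Moore-type inequality $|B_\gamma(v)| \ge 1 + \deg(v)(\deg(v)-1)^{\gamma-1}$ forces any low-neighborhood region to have small degree.

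First, I would sample a set $S \subseteq V$ of cluster centers by including each vertex independently with probability $p = \Theta(n^{-2\gamma/(2\gamma+1)}\log n)$, so that $|S| = \tilde{O}(n^{1/(2\gamma+1)})$ and, with high probability, every ball of radius $\gamma$ of cardinality exceeding $(\log n)/p$ meets $S$. For each $s\in S$, include in the spanner a BFS tree of depth $\gamma$ rooted at $s$. Because the girth is at least $2\gamma+1$, the ball $B_\gamma(s)$ is genuinely a tree, so this BFS subgraph preserves distances inside $B_\gamma(s)$ exactly and contributes only $|B_\gamma(s)|-1 \le n-1$ edges. Summed over $s\in S$, the total so far is $\tilde O(n\cdot |S|) = \tilde O(n^{1+1/(2\gamma+1)})$.

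Second, call a vertex $v$ \emph{clustered} if $B_\gamma(v)\cap S \ne \emptyset$, and \emph{unclustered} otherwise. For unclustered $v$, the sampling bound forces $|B_\gamma(v)|$ to be small, whence the Moore inequality above forces $\deg(v)$ to be small as well; the unclustered vertices can thus be handled cheaply by adding all their incident edges (the girth bound applied to the induced subgraph keeps this within budget). A path-buying pass then adds, for each pair of clusters, the shortest intercluster path whenever it offers a nontrivial improvement over what the existing spanner provides, using the standard pigeonhole argument to keep the total number of bought edges within $\tilde O(n^{1+1/(2\gamma+1)})$. The stretch analysis for a pair $(u,v)$ partitions a shortest $u$--$v$ path into segments delimited by cluster-membership changes; distances inside a cluster tree are preserved exactly, while each of the (at most $O(\gamma)$) meaningful ``transitions'' between cluster trees costs $+6$ additive by the standard additive-$6$ calculation, yielding $+6\gamma$ overall.

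The main obstacle will be the \emph{simultaneous} balance in the second step: showing that the unclustered vertices and the bought paths together contribute only $\tilde O(n^{1+1/(2\gamma+1)})$ edges without sacrificing the $+6\gamma$ additive stretch guarantee. The balance is delicate because making $p$ larger inflates the cluster-tree cost, while making $p$ smaller inflates the unclustered degree bound, and the cluster radius $\gamma$ appears in the exponent of the Moore inequality; the particular choice $p = \Theta(n^{-2\gamma/(2\gamma+1)})$ is precisely what equalizes the contributions of cluster trees, edges at unclustered vertices, and bought paths, yielding the claimed exponent $1+1/(2\gamma+1)$.
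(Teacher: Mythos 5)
Your high-level outline matches the approach behind this theorem (which the paper imports from~\cite{BKMP10} and re-derives via Theorem~\ref{thm:bkmp-clustering} and Theorem~\ref{thm:girth-path-buying}): sample cluster centers, build radius-$\gamma$ cluster trees, use the girth/Moore bound to control the degree of unclustered vertices, and finish with a path-buying phase. However, your parameter choice is off in a way that breaks the balance. You set $p = \Theta(n^{-2\gamma/(2\gamma+1)}\log n)$ so that $|S| = \tilde{O}(n^{1/(2\gamma+1)})$, apparently so that the BFS trees cost $\tilde{O}(n\cdot|S|) = \tilde{O}(n^{1+1/(2\gamma+1)})$. But this is the wrong quantity to balance: if each vertex joins the cluster of its \emph{closest} sampled center, the clusters are disjoint and every clustered vertex contributes a single parent edge to its cluster tree, so the total tree cost is only $O(n)$ (this is the content of Theorem~\ref{thm:bkmp-clustering}, which gives $|E_0^\gamma| = O(n/q_1^{1/\gamma})$ dominated by the unclustered term). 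Because you undershoot $|S|$, your unclustered vertices become too high-degree: an unclustered $v$ has $|B_\gamma(v)| \le \tilde{O}(1/p) = \tilde{O}(n^{2\gamma/(2\gamma+1)})$, and the Moore inequality then only gives $\deg(v) \le \tilde{O}(n^{2/(2\gamma+1)})$, so adding all edges incident to unclustered vertices costs $\tilde{O}(n^{1+2/(2\gamma+1)})$ --- strictly worse than the target. The parenthetical appeal to ``the girth bound applied to the induced subgraph'' does not rescue this: the induced subgraph on unclustered vertices having few internal edges says nothing about the (potentially many) edges between unclustered and clustered vertices. The correct choice is $q_1 = n^{-\gamma/(2\gamma+1)}$, giving $|V_1| = n^{(\gamma+1)/(2\gamma+1)}$, unclustered degree $\le n^{1/(2\gamma+1)}$, unclustered edges $O(n^{1+1/(2\gamma+1)})$, and $|V_1|^2 = n^{1+1/(2\gamma+1)}$ so the path-buying budget is also in balance; the $O(n)$ cluster trees are negligible.

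Two further issues. First, the claim that $B_\gamma(s)$ ``is genuinely a tree'' because of the girth is false: a $(2\gamma+1)$-cycle through $s$ lies entirely within $B_\gamma(s)$, so the induced subgraph on the ball can contain cycles. What you actually need --- and all that the argument uses --- is that a BFS tree of depth $\gamma$ rooted at $s$ is a tree and gives each cluster vertex a length-$\le\gamma$ path to $s$; the stronger assertion is unnecessary and should be dropped. Second, the stretch accounting ``$O(\gamma)$ transitions, each costing $+6$'' is not how the $6\gamma$ arises. In the standard analysis one reduces to a single clustered-endpoint subpath $u'$--$v'$ with cluster centers $c_u, c_v$: the detours $u' \leadsto c_u$ and $c_v \leadsto v'$ each cost $+\gamma$, the triangle inequality moving from $\dist(u',v')$ to $\dist(c_u,c_v)$ costs $+2\gamma$, and the path-buying guarantee costs a further $+2\gamma$, for a total of $+6\gamma$ from a \emph{single} composition, not $O(\gamma)$ transitions.
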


In this section, we extend our lower bound technique to show that the exponent of Theorem~\ref{thm: bkmp girth}
is optimal.  More generally, we establish a hierarchy of tradeoffs for sublinear additive graph compression schemes 
that depend on $k$ and $\gamma$.  From a technical point of view, this section highlights two degrees of freedom
that were not used in Section~\ref{sect:lower-bound} or~\cite{AbboudB16}.  We use alternative base-case 
graphs (rather than bicliques $\Graph_1$) and form $\DblBase[p]$ from an {\em imbalanced} product 
of $\Base[p_1]$ and $\Base[p_2]$, where $p=p_1p_2$.
Our construction uses a slightly stronger, but equivalent, statement of the Girth Conjecture 
that asserts a lower bound on the degree rather than the total size.  

\begin{conjecture} [Girth Conjecture~\cite{Erdos63,BondyS74,Bollobas78}]
For any integer $\gamma \ge 1$, there exists a graph with $n$ vertices, girth $2\gamma + 2$, and 
minimum degree $\Omega(n^{1/\gamma})$.
\end{conjecture}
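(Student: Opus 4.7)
The statement is the Girth Conjecture in its minimum-degree formulation, which is a classical open problem in extremal graph theory; a complete proof is known only for $\gamma \in \{1, 2, 3, 5\}$. My plan is therefore to outline the strategy that resolves the verified cases, to describe the natural framework for attacking the general case, and to identify the fundamental obstruction that presently blocks a full resolution.

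First I would dispatch the known cases by exhibiting explicit extremal graphs. For $\gamma = 1$ the complete bipartite graph $K_{n/2, n/2}$ is $(n/2)$-regular with girth $4 = 2\gamma + 2$. For $\gamma \in \{2, 3, 5\}$ I would invoke the incidence graphs of balanced generalized $m$-gons of order $(q, q)$ with $m = \gamma + 1 \in \{3, 4, 6\}$, namely projective planes, symplectic quadrangles, and split Cayley hexagons over $\mathbb{F}_q$ (coming from the classical groups $\mathrm{PGL}_3$, $\mathrm{Sp}_4$, and $G_2$ respectively). For a prime power $q$, such an incidence graph is $(q+1)$-regular and bipartite on $\Theta(q^\gamma)$ vertices with girth exactly $2(\gamma + 1) = 2\gamma + 2$; choosing $q$ so that $n = \Theta(q^\gamma)$ yields minimum degree $\Theta(q) = \Theta(n^{1/\gamma})$ as required. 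The verifications of girth and regularity are purely combinatorial consequences of the incidence axioms.

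For a general $\gamma$, the plan is to pursue algebraic constructions via systems of polynomial equations over a finite field. Taking vertices to be points in $\mathbb{F}_q^t$ for an appropriate dimension $t$ and declaring adjacency by a carefully chosen list of polynomial identities, one tries to rule out every potential short closed walk by forcing an algebraic dependence inconsistent with the defining equations. This is the template behind the Lazebnik--Ustimenko--Woldar construction, which yields graphs of girth at least $2\gamma + 2$ with minimum degree $\Omega(n^{1/(\gamma + c \log \gamma)})$ for an absolute constant $c$. Closing the additive $c \log \gamma$ slack in the exponent is the key remaining task, and I would attempt it by searching for polynomial systems whose incidence variety has a more tightly controlled short-walk structure.

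The main obstacle is that no known construction matches $\Omega(n^{1/\gamma})$ for $\gamma \notin \{1, 2, 3, 5\}$; even $\gamma = 4$ (girth $10$) is open. By the Feit--Higman theorem, nondegenerate finite generalized $m$-gons exist only for $m \in \{3, 4, 6, 8\}$, and the $m = 8$ case is realized only as unbalanced octagons of order $(q, q^2)$ (from the Ree groups of type ${}^2F_4$), which give a minimum degree of only $\Theta(n^{1/11})$ rather than the $\Theta(n^{1/7})$ the conjecture demands. Thus the generalized-polygon route is intrinsically capped, and probabilistic deletion arguments via the Lovász Local Lemma lose polylogarithmic factors in the degree. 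A proof for general $\gamma$ would therefore require a genuinely new source of high-girth, high-degree graphs outside both of these frameworks; identifying such a construction is exactly the step where I expect the true difficulty to lie, and it is the reason I would not expect a short argument to settle the conjecture here.
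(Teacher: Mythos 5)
This statement is an open conjecture that the paper does not prove: it is stated as a hypothesis (used, e.g., in Theorem~\ref{thm: girth lower bounds}) and the paper only notes, as you do, that it is known for $\gamma \in \{1,2,3,5\}$ via the classical girth-$6$, $8$, and $12$ incidence constructions. Your treatment --- recognizing it as unproven, citing the verified cases, and explaining the Feit--Higman obstruction --- is accurate and consistent with how the paper handles it.
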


Our goal is to prove the following theorem.  Observe that by setting $k=2$, 
Theorem~\ref{thm: girth lower bounds} implies that the exponent of Theorem \ref{thm: bkmp girth} cannot be improved.

\begin{theorem} \label{thm: girth lower bounds}
Fix integers $\gamma \ge 1, k \ge 2$. Consider any data structure that answers approximate distance queries for the class of $n$-vertex undirected graphs with girth at least $2\gamma + 1$. 
Assuming the Girth Conjecture, if the stretch function of the data structure is
$$f(d) < d + c_{k,\gamma} d^{1 - 1/k}, \; \mbox{ for $c_{k,\gamma} \approx \fr{2}{(\gamma(k-1))^{1-1/k}}$ and $d$ sufficiently large}$$ 
then on some graph the data structure occupies at least $\Omega(n^{1 + \frac{1}{(\gamma + 1)2^{k-1} - 1} - o(1)})$ bits.
\end{theorem}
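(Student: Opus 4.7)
I would extend the Section~\ref{sect:lower-bound} framework by exploiting the two new degrees of freedom flagged in the section preamble: (i) replace the biclique base graph $\Graph_1$ with a bipartite layered graph $\Graph_1^{(\gamma)}$ derived from the Girth Conjecture, having $\gamma+1$ layers, $p$ vertices per layer, girth at least $2\gamma+2$, and $|\Pairs_1^{(\gamma)}| = \Theta(p^{1+1/\gamma})$ input/output pairs at distance exactly $\gamma$; and (ii) use an \emph{imbalanced} product to form $\DblBase^{(\gamma)}[p]$ from $\Base[p_1]$ and $\Base[p_2]$ with $p = p_1 p_2$ and the split $(p_1,p_2)$ tuned to the pair $(\gamma,k)$. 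From these ingredients I would recursively build $\Graph_k^{(\gamma)}[p]$ by the same substitution-and-random-port-permutation scheme as in Section~\ref{sect:lower-bound}, then finish by the $2^{|\Pairs_k^{(\gamma)}|}$-counting argument of Theorem~\ref{thm:lb-DO}.

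\paragraph{Steps in order.}
First, set up the base case: pairs in $\Pairs_1^{(\gamma)}$ have unique shortest paths of length $\gamma$ in $\Graph_1^{(\gamma)}$ because any cycle has length at least $2\gamma+2 > 2\gamma$, and removing any critical edge of such a path forces a detour through a cycle of length at least $2\gamma+2$, increasing the distance by at least $2$. Second, define $\Graph_k^{(\gamma)}[p]$ by substituting a copy of $\Graph_{k-1}^{(\gamma)}[p']$ (or $\RevGraph_{k-1}^{(\gamma)}[p']$) for each interior vertex of $\DblBase^{(\gamma)}[p]$ under a random port permutation $\pi$, subdividing each $\DblBase^{(\gamma)}$-edge into a path whose length equals the diameter of the inserted $\Graph_{k-1}^{(\gamma)}[p']$. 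Third, verify analogs of Lemmas~\ref{lem:properties-of-DblBase}, \ref{lem: pair set size}, \ref{lem: unique paths}, \ref{lem:removing-critical-edges}, \ref{lem:pair-distances}, and \ref{lem: lb calculations}, together with one new claim that the girth of $\Graph_k^{(\gamma)}$ is at least $2\gamma+1$: any cycle either lies inside a single substituted copy (covered inductively) or traverses at least one fully subdivided $\DblBase^{(\gamma)}$-edge, which alone contributes more than $2\gamma$ to the cycle. Fourth, run the counting argument of Theorem~\ref{thm:lb-DO}: the $2^{|\Pairs_k^{(\gamma)}|}$ graphs obtained by removing arbitrary subsets of critical-edge bundles are pairwise distinguishable by any data structure whose stretch satisfies $f(d_k^{(\gamma)}) < d_k^{(\gamma)} + 2(2\ell-1)^{k-1}$, which matches the theorem's $c_{k,\gamma} \approx 2/(\gamma(k-1))^{1-1/k}$ after substituting the shortest-path length $d_k^{(\gamma)} \approx \gamma(k-1)\ell \cdot (2\ell-1)^{k-1}$ (the factor $\gamma$ appears because the base level contributes distance $\gamma$ per substitution rather than $1$).

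\paragraph{Main obstacle.}
The crux is the balancing problem that pins down $p_1$, $p_2$, and the recursive port count $p'$ so that the joint recurrence for $|\Pairs_k^{(\gamma)}|$ and $n_k^{(\gamma)}$ yields the target denominator $(\gamma+1)2^{k-1}-1$. The Section~\ref{sect:lower-bound} balanced split $p_1 = p_2 = \sqrt{p}$ combined with a biclique base gives the denominator $2^k - 1$, matching $\gamma = 1$; for general $\gamma$ the product must be tilted so that the base-level density $p^{1+1/\gamma}$ propagates up through the recursion at the correct rate, roughly halving the exponent gap to $1$ at each recursive level but starting from a gap of $1/\gamma$ rather than $1$. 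A secondary concern is verifying girth after the random port permutation is applied: a short cycle could in principle be created by combining a short path inside a substituted $\Graph_{k-1}^{(\gamma)}$ with hops across $\DblBase^{(\gamma)}$-edges, but this is prevented so long as the subdivided paths between substituted copies exceed $2\gamma$ in length at every level, which holds whenever $\ell \gg \gamma$. Conditionality on the Girth Conjecture enters only through the base case $k=1$ and is inherited upward.
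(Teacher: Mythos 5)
Your high-level plan---swap the biclique base graph for a Girth Conjecture extremal graph, use an imbalanced $\Base[p_1]\times\Base[p_2]$ product, and finish with the counting argument of Theorem~\ref{thm:lb-DO}---is the same approach the paper takes, and your observation that one must also verify $\Graph_k^{\gamma}$ has girth at least $2\gamma+1$ is a worthwhile point that the paper does not address explicitly. However, there is a genuine gap in the base case that your sketch does not handle.

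In the $\gamma=1$ biclique, any two pairs in $\Pairs_1$ have disjoint shortest paths, so their critical edges are automatically disjoint, and removing the critical edges for pairs in $\Pairs'$ leaves every other pair's distance unchanged. This disjointness is what makes the $2^{|\Pairs_k|}$ graphs pairwise distinguishable (Lemma~\ref{lem:pair-distances}) and makes the other direction of the argument work (the graph $G(\Pairs')$ still realizes the original distance for pairs not in $\Pairs'$). In a Girth Conjecture graph with girth $2\gamma+2$, distinct $s$--$t$ paths of length $\gamma$ between $S$ and $T$ can and generally do share edges, so ``remove the unique shortest $s$--$t$ path'' does collateral damage to other pairs. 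Your sketch asserts that removing a critical edge increases the $s$--$t$ distance by at least $2$, which is fine, but it never argues that \emph{not} removing $(s,t)$'s critical edges leaves $\dist(s,t)$ intact after removing edges belonging to \emph{other} pairs. Without that second direction the encoding argument does not count $2^{|\Pairs_1^{\gamma}|}$ distinguishable graphs. The paper resolves this by extracting a thinned pair set $\Pairs_1^{\gamma}$ together with an injective assignment $\phi:\Pairs_1^{\gamma}\to E$ of \emph{single} required edges such that no pair's shortest path contains another pair's $\phi$-edge (Lemma~\ref{lem: girth pairs}); this costs a $\poly(\gamma)$ factor but restores the disjointness and gives Lemma~\ref{lem: girth distance equal}. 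Your proposal has no analog of this step, and the inductive analogs of Lemmas~\ref{lem:removing-critical-edges} and~\ref{lem:pair-distances} would not go through as stated.

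Two secondary imprecisions worth noting. First, you model $\Graph_1^{(\gamma)}$ as having $p$ input and $p$ output ports; when $\gamma$ is even, the BFS-ball counting behind Lemma~\ref{lem: hard girth graphs} forces $|S|\neq|T|$ (e.g.\ $|S|=n^{(\gamma+2)/(2\gamma)}$, $|T|=n^{1/2}$), and this port imbalance is precisely why the imbalanced product $\DblBase[p_1,p_2]$ is needed---and only at the $k=2$ level, since $\Graph_{k-1}^{\gamma}$ for $k\ge3$ already has $|S|=|T|$. Second, the paper subdivides $\DblBase$-edges into paths of length $\gamma(2\ell-1)^{k-1}$, carrying an explicit factor $\gamma$ to penalize shortcuts of length $<\gamma$ through copies of $\Graph_1^{\gamma}$ between port pairs not in $\Pairs_1^{\gamma}$; your ``length equals the diameter of the inserted $\Graph_{k-1}^{(\gamma)}$'' is a different (and for $k\ge3$ substantially larger) choice, and more importantly your sketch never identifies the shortcut issue that motivates the factor.
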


We remark that this theorem holds trivially for any super-constant $\gamma$ (with a sufficiently large $n^{-o(1)}$ factor), since the lower bound becomes $\Omega(n)$.
Thus, we treat $\gamma$ as a fixed constant throughout this section.
The remainder of this section constitutes a proof of Theorem~\ref{thm: girth lower bounds}.
We first make a simple observation about the hypothesized graphs from the Girth Conjecture.

\begin{observation}\label{lem: large girth sets}
Fix a $\gamma \ge 1$.  If $G$ has girth $2\gamma+2$ and minimum degree $\Omega(n^{1/\gamma})$, 
there are $\Omega(n^{d/\gamma})$ nodes at distance exactly $d$ from any node $u$, for any $0 \le d \le \gamma$.
\end{observation}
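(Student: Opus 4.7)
The plan is to exploit the observation that high girth forces the BFS tree from any vertex $u$ to actually be a tree up to depth $\gamma$, so balls grow geometrically at a rate governed by the minimum degree.

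First I would argue that the subgraph of $G$ induced on the ball $B_\gamma(u) = \{v : \dist(u,v) \le \gamma\}$ is a tree. Indeed, any cycle supported on $B_\gamma(u)$ has length at most $2\gamma$, but $G$ has girth $2\gamma+2$, so no such cycle can exist. Consequently the BFS layers $N_d(u) = \{v : \dist(u,v) = d\}$ partition $B_\gamma(u)$ and the only edges in the induced subgraph go between consecutive layers.

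Next I would establish the recursion $|N_d(u)| \ge (\delta - 1)\,|N_{d-1}(u)|$ for $1 \le d \le \gamma$, where $\delta = \Omega(n^{1/\gamma})$ is the minimum degree. Here the key point is that every vertex $v \in N_{d-1}(u)$ with $d - 1 < \gamma$ has all of its $\deg(v) \ge \delta$ neighbors inside $B_\gamma(u)$, and (by the tree structure) at most one of them, namely the BFS parent, lies in $N_{d-2}(u)$, while none can lie in the same layer $N_{d-1}(u)$. Hence at least $\delta - 1$ neighbors lie in $N_d(u)$. Moreover each vertex of $N_d(u)$ has a \emph{unique} parent in $N_{d-1}(u)$ (again because the ball is a tree), so summing over $N_{d-1}(u)$ gives the bound without any overcounting.

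Unrolling the recursion yields $|N_d(u)| \ge \delta(\delta-1)^{d-1} = \Omega(\delta^d) = \Omega(n^{d/\gamma})$ for every $0 \le d \le \gamma$, which is the desired conclusion. The argument is essentially routine; the only subtlety to watch is the boundary case $d = \gamma$, where one must verify that the tree structure on $B_\gamma(u)$ (not $B_{\gamma+1}(u)$) is still enough to count $N_\gamma(u)$ without double-counting, which it is because parents of vertices in $N_\gamma(u)$ lie in $N_{\gamma - 1}(u) \subseteq B_\gamma(u)$ and uniqueness of parents is determined entirely inside $B_\gamma(u)$.
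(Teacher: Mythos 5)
The paper states this as an \emph{Observation} and supplies no proof, so there is nothing to compare your argument against; what you have written is new content. Your overall approach --- the Moore-bound/BFS-tree argument showing that the ball $B_\gamma(u)$ induces a tree and then unrolling a $(\delta-1)$-branching recursion --- is the standard and correct way to prove this, and the counting (unique parent, at least $\delta-1$ children, geometric growth $|N_d(u)| \ge \delta(\delta-1)^{d-1} = \Omega(n^{d/\gamma})$ since $\gamma$ is a fixed constant) is sound.

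One step is stated imprecisely: the claim that ``any cycle supported on $B_\gamma(u)$ has length at most $2\gamma$'' is not a general fact about balls --- a long cycle can certainly lie inside a small ball in an arbitrary graph. The correct argument is the BFS one you implicitly have in mind: if $G[B_\gamma(u)]$ contained a cycle, then the BFS tree from $u$ (restricted to depth $\gamma$) would have a non-tree edge $(x,y)$ with $x,y \in B_\gamma(u)$, and the two tree paths from $\mathrm{lca}(x,y)$ to $x$ and $y$ together with $(x,y)$ form a simple cycle of length $\dist(u,x)+\dist(u,y)-2\,\mathrm{depth}(\mathrm{lca})+1 \le 2\gamma+1 < 2\gamma+2$, contradicting the girth hypothesis. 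The bound is $2\gamma+1$ rather than $2\gamma$, and more importantly it is obtained through the BFS structure rather than a blanket diameter bound; but since $2\gamma+1 < 2\gamma+2$, the conclusion (that $G[B_\gamma(u)]$ is a tree) still follows and the rest of your proof goes through unchanged.
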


Recall that the reason $\Graph_1[p]$ from Section~\ref{sect:lower-bound} was useful was because its edge-set was 
{\em covered} by {\em unique}, {\em disjoint} shortest paths between $p$ input ports and $p$ output ports.
We will prove something analogous for high-girth graphs, but with these properties weakened slightly, in ways that have no
adverse effect on the overall construction.

\begin{definition} [\cite{AbboudB16-SODA}]
In a graph $G$, we say that a pair of nodes $s, t$ \emph{requires} an edge $e$ if every shortest path from $s$ to $t$ includes $e$.
\end{definition}

\begin{lemma} \label{lem: hard girth graphs}
Fix integers $\gamma\ge 1$ and $0 \le i < \gamma$. Assuming the Girth Conjecture, 
there is a graph $H = (V, E)$ on $n$ vertices and $\Omega(n^{1 + 1/\gamma})$ edges 
with girth $2\gamma+2$, and disjoint node subsets $S, T \subseteq V$ of sizes
$|S| = \Omega(n^{i/\gamma}), |T| = \Omega(n^{(\gamma + 1 - i)/\gamma})$
such that each edge $e \in E$ is required by some $(s,t)\in S\times T$ with $\dist(s,t)=\gamma$.
\end{lemma}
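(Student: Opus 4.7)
The plan is to take $H$ to be the graph $G_0$ guaranteed by the Girth Conjecture, which has $n$ vertices, girth $2\gamma+2$, minimum degree $\Omega(n^{1/\gamma})$, and hence $\Omega(n^{1+1/\gamma})$ edges. The vertex count, edge count, and girth conditions are thus met for free, and the remaining task is to produce $S,T \subseteq V(G_0)$ of the required sizes and with the claimed coverage property.

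The key structural observation is that girth $2\gamma+2$ implies the ball of radius $\gamma-1$ about each vertex is a tree, and any two vertices at graph distance at most $\gamma$ are joined by a \emph{unique} shortest path (a second such path would close a cycle of length at most $2\gamma$). Therefore an edge $e=(u,v)$ is required by $(s,t)$ with $\dist(s,t)=\gamma$ iff for some $a \in \{0,\ldots,\gamma-1\}$ we have $s$ at distance $a$ from $u$ in the subtree of $u$ avoiding $v$, and $t$ at distance $\gamma-1-a$ from $v$ in the subtree of $v$ avoiding $u$. The condition $\dist(s,t)=\gamma$ follows automatically, because a strictly shorter $s$--$t$ path would combine with the witness path to form a cycle of length at most $2\gamma+1$. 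Applying Observation~\ref{lem: large girth sets} to these trees, the two subtrees supply $\Omega(n^{a/\gamma})$ candidate $s$'s and $\Omega(n^{(\gamma-1-a)/\gamma})$ candidate $t$'s per layer $a$.

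Next I would assign each vertex to one of $S$, $T$, or neither with probabilities $p_S = C(\log n)\cdot n^{i/\gamma - 1}$ and $p_T = C(\log n)\cdot n^{(1-i)/\gamma}$ for a sufficiently large constant $C$; for the boundary case $i=1$ where $p_T > 1$ I would instead deterministically set $T = V \setminus S$. Focus on the single layer $a = \gamma - i$: the expected number of $s$-candidates in $S$ and of $t$-candidates in $T$ are both $\Omega(\log n)$, and the corresponding indicator events are independent (distinct vertices), so a Chernoff bound gives $\Pr[e \text{ uncovered in this layer}] \le n^{-\Omega(C)}$. A union bound over the $O(n^{1+1/\gamma})$ edges leaves every edge simultaneously covered with probability $1-o(1)$, while standard binomial concentration on $|S|$ and $|T|$ gives $|S| = \Theta(n^{i/\gamma}\log n)$ and $|T| = \Theta(n^{(\gamma+1-i)/\gamma}\log n)$ with high probability. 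Intersecting these three high-probability events produces the desired $S$ and $T$, and the $\log n$ inflation of the sizes is absorbed harmlessly into the $n^{-o(1)}$ slack of Theorem~\ref{thm: girth lower bounds}.

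The main obstacle is the tightness of the coverage budget: $p_S p_T$ is pinned by the size constraints to $\Theta((\log n)\cdot n^{1/\gamma - 1})$, which is precisely (up to $\log$ factors) the density needed to cover all $\Omega(n^{1+1/\gamma})$ edges once each. There is no room to ``discard uncovered edges'' as a repair mechanism, because deleting an edge can lengthen the distance between some remaining pair and destroy the unique-shortest-path guarantee the construction rests on. Everything therefore rides on the Chernoff exponent beating the union bound for an appropriate choice of $C$. If one wishes to remove the $\log n$ overshoot on $|S|$ and $|T|$, the natural next tool is the Lovász Local Lemma, since each bad event (an edge being uncovered) depends only on vertex labels within its radius-$\gamma$ ball, yielding bounded dependency degree.
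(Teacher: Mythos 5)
Your proof is correct, but it takes a genuinely different and heavier route than the paper's, and one of your criticisms of the natural alternative is mistaken. The paper samples $S,T$ uniformly at the \emph{target} sizes (no $\log n$ boost), observes via Observation~\ref{lem: large girth sets} that any fixed edge $(u,v)$ is covered with \emph{constant} probability (since $|A||S|$ and $|B||T|$ are both $\Omega(n)$), and then simply \emph{discards} every uncovered edge; in expectation a constant fraction of $E$ survives, which is all the lemma asks for. You dismissed this route on the grounds that deleting an edge could ``destroy the unique-shortest-path guarantee'' for a surviving pair, but that objection does not hold up: if $(s,t)$ requires a retained edge $e$, then \emph{every} edge on the unique shortest $s$--$t$ path is likewise required by $(s,t)$ and hence also retained, so that entire path survives the discard; removing other edges can only increase distances and cannot create new paths, so both $\dist(s,t)=\gamma$ and uniqueness of the shortest path carry over to the pruned graph. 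Your variant---inflating $p_S,p_T$ by $\Theta(\log n)$ and using Chernoff plus a union bound so that \emph{all} edges are covered---is also sound: the $\log n$ overshoot in $|S|,|T|$ washes out in the $n^{-o(1)}$ slack, and the independence you invoke is legitimate because the girth forces the two candidate sets around $u$ and $v$ to be disjoint (a common vertex would close a cycle of length at most $\gamma$ through the edge $(u,v)$). But this buys nothing over the paper's cleaner expectation argument. Finally, your closing remark about the Lov\'{a}sz Local Lemma is off the mark: at the target sizes each bad event has \emph{constant} failure probability, while the dependency degree is $\Delta^{\Theta(\gamma)} = n^{\Theta(1)}$, so the LLL would not rescue you from the $\log n$ boost.
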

\begin{proof}
Let $H$ initially be any graph with minimum degree $\Omega(n^{1/\gamma})$ and girth $2\gamma+2$.
Sample node subsets $S, T$ independently and uniformly at random of the appropriate size
and let $P\subset S\times T$ be such that $(s,t)\in P$ if and only if $\dist(s,t) = \gamma$.
The shortest $s$--$t$ path is unique, due to $H$'s girth, 
so all its edges are required by $(s,t)$.  Discard from $H$ all edges not required by any pair in $P$.

We will now prove that any particular edge remains in $H$ with constant probability, so there exists some choice of $S,T$ for which
at least a constant fraction of the edges are retained.
Let us consider an arbitrary edge $(u, v)$ in $H$.
If there exist nodes $s \in S, t \in T$ such that $\dist_{H-\{(u,v)\}}(s, u) + \dist_{H-\{(u,v)\}}(t, v) = \gamma - 1$, 
then $\dist(s,t) = \gamma$ and the unique $s$--$t$ shortest path requires $(u,v)$.

Let $A$ be all vertices at distance exactly $\gamma-i$ from $u$ in $H-\{(u,v)\}$ and $B$ be all vertices at distance exactly $i-1$ from $v$
in $H-\{(u,v)\}$.  By Observation \ref{lem: large girth sets}, 
$|A| = \Omega(n^{\frac{\gamma - i}{\gamma}})$ 
and $|B| = \Omega(n^{\frac{i-1}{\gamma}})$.  Since $|A||S|$ and $|B||T|$ are both $\Omega(n)$,
with constant probability $A\cap S \neq \emptyset$ and $B\cap T \neq \emptyset$.
\end{proof}

\begin{lemma} \label{lem: girth pairs}
Let $H = (V, E)$ be a graph and $S, T \subseteq V$ be node subsets as described in Lemma \ref{lem: hard girth graphs}.
Then there exists a set $\Pairs_1^{\gamma} \subseteq S \times T$ of size $|\Pairs_1^{\gamma}| = \Omega(n^{1 + 1/\gamma} / \gamma^3)$, as well as a mapping $\phi: \Pairs_1^{\gamma} \to E$, with the following two properties:
\begin{itemize}
\item For each $(s,t) \in \Pairs_1^{\gamma}$, the pair $(s,t)$ requires the edge $\phi(s,t)$.
\item For each $(s, t) \in \Pairs_1^{\gamma}$, the unique shortest path from $s$ to $t$ in $H$ contains no edge $e$ such that $e=\phi(s',t')$
for some $(s',t')\neq (s,t)$.
\end{itemize}
\end{lemma}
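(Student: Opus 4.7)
The plan is to build $\Pairs_1^\gamma$ and $\phi$ in three steps: (i) reduce to a single ``position class'' of edges via pigeonhole, (ii) randomly subsample to ensure each surviving pair's shortest path contains only its own designated edge, and (iii) extract the surviving pairs. The main obstacle is step (ii), since a pair's shortest path could a priori contain several edges that have been picked as witnesses for other pairs, and we must knock these down without losing too much of the pair set.

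Let $P = \{(s,t) \in S \times T : \dist_H(s,t) = \gamma\}$, and for $(s,t) \in P$ let $P_{s,t}$ denote the (by the girth hypothesis, unique) shortest $s$--$t$ path, of length $\gamma$. By Lemma~\ref{lem: hard girth graphs} every edge of $H$ lies on at least one such $P_{s,t}$. For each $e \in E$ arbitrarily pick a witness $\sigma(e) \in P$ with $e \in P_{\sigma(e)}$, and let $i_e \in \{0, \ldots, \gamma-1\}$ be the position of $e$ along $P_{\sigma(e)}$. By pigeonhole, some position $i^*$ is realized by at least $|E|/\gamma = \Omega(n^{1+1/\gamma}/\gamma)$ edges; call this subfamily $E_1$. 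The restriction $\sigma|_{E_1}$ is injective: any two $e_1, e_2 \in E_1$ with $\sigma(e_1) = \sigma(e_2) = (s,t)$ would both be the unique $i^*$-th edge of $P_{s,t}$, hence $e_1 = e_2$.

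For step (ii), independently place each edge of $E_1$ into a random subsample $E_2$ with probability $q = 1/\gamma$. Call a pair $(s,t) = \sigma(e)$ \emph{surviving} if $E_2 \cap P_{s,t} = \{e\}$. Since $|P_{s,t}| = \gamma$, at most $\gamma$ edges of $P_{s,t}$ lie in $E_1$, so
\[
\Pr\bigl[(s,t) \text{ survives}\bigr] \;\geq\; q(1-q)^{\gamma-1} \;\geq\; \frac{1}{e\gamma}.
\]
Linearity of expectation then yields an outcome in which at least $|E_1|/(e\gamma) = \Omega(n^{1+1/\gamma}/\gamma^2)$ pairs survive.

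Take $\Pairs_1^\gamma$ to be this surviving set and define $\phi(\sigma(e)) := e$ for each surviving $(s,t) = \sigma(e)$. The injectivity of $\sigma|_{E_1}$ guarantees that $\phi$ is a well-defined injection. Each $(s,t) \in \Pairs_1^\gamma$ requires $\phi(s,t)$ because $P_{s,t}$ is the unique shortest $s$--$t$ path. Finally, if some $\phi(s',t') = e'$ with $(s',t') \ne (s,t)$ appeared on $P_{s,t}$, then $e'$ would be an element of $E_2 \cap P_{s,t} \setminus \{e\}$, contradicting the survival of $(s,t)$. The bound actually achieved, $\Omega(n^{1+1/\gamma}/\gamma^2)$, is slightly stronger than the claimed $\gamma^{-3}$ loss, and the whole argument costs only one pigeonhole factor and one independent-sampling factor of $\gamma$.
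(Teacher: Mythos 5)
Your proof is correct, and it takes a genuinely different route from the paper's. The paper's argument is deterministic and greedy: it first uses an averaging argument plus Markov's inequality to identify a set $F$ of $\Omega(n^{1+1/\gamma})$ ``low-multiplicity'' edges (each required by at most $O(\gamma)$ pairs in $P$), then iterates over pairs, assigning each surviving pair to an arbitrary $F$-edge on its path and discarding every other pair that requires any $F$-edge on that path. Since $|Q|\le\gamma$ and each $F$-edge has multiplicity $O(\gamma)$, each accepted pair knocks out $O(\gamma^2)$ others, and since $\Omega(n^{1+1/\gamma}/\gamma)$ pairs require some $F$-edge, the paper lands at the $\gamma^{-3}$ loss. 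Your argument instead pigeonholes on the \emph{position} of the witness edge to extract a subfamily $E_1$ of size $\Omega(|E|/\gamma)$ on which the witness map $\sigma$ is injective, then independently subsamples $E_1$ at rate $1/\gamma$ and keeps a pair exactly when its designated edge is the only retained edge on its path; the survival probability $\Omega(1/\gamma)$ gives a clean one-shot probabilistic bound. The net effect is that you pay two factors of $\gamma$ instead of three, so your $\Omega(n^{1+1/\gamma}/\gamma^2)$ is strictly stronger than the stated bound; both arguments ultimately lean on the same two facts --- the girth-$(2\gamma+2)$ uniqueness of shortest paths between distance-$\gamma$ pairs, and the covering property from Lemma~\ref{lem: hard girth graphs} --- and both yield the required injectivity and non-interference properties of $\phi$.
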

\begin{proof}
Let $P \subseteq S \times T$ be the set of $s, t$ pairs for which $\dist(s, t) = \gamma$.
Since $P = \Theta(n^{1 + 1/\gamma})$, each node pair in $P$ has distance $\gamma$, 
and $H$ has $\Omega(n^{1 + 1/\gamma})$ edges, it follows that the average edge in $H$ is required by 
$c = O(\gamma)$ different pairs in $P$.  By Markov's inequality, at most half the edges in $H$ are required by more than 
$2c$ pairs; let $F$ be the set of edges required by $2c$ or fewer node pairs in $S \times T$.
We then have $|F| = \Omega(n^{1 + 1/\gamma})$.

We now build $\Pairs_1^{\gamma}$ and $\phi$ using the following process.
Iterate through the pairs in $P$ in any order.
For each $(s, t) \in P$, let $Q$ be the unique shortest path from $s$ to $t$.  
If $F\cap Q = \emptyset$, i.e., if $(s,t)$ requires no $F$-edges, then discard $(s,t)$ from $P$.
Otherwise, include $(s,t)$ in $\Pairs_1^\gamma$, set $\phi(s, t)$ to be any edge in $F\cap Q$,
and discard from $P$ any other pair that requires any edge in $F\cap Q$.  Since $|Q| \le \gamma$ 
and edges in $F$ are required by at most $2c$ pairs, we discard $O(c\gamma)$ pairs for each $(s,t)$ not discarded.

The necessary properties of $\Pairs_1^{\gamma}$ are immediate from the construction.
To bound the size of $|\Pairs_1^{\gamma}|$, first note that 
$\Omega(n^{1 + 1/\gamma} / \gamma)$ pairs in $P$ require at least one edge in $F$, since $|F| = \Omega(n^{1 + 1/\gamma})$. 
Of these $\Omega(n^{1 + 1/\gamma} / \gamma)$ node pairs, each one added to $\Pairs_1^\gamma$ causes
at most $O(c\gamma)=O(\gamma^2)$ to be discarded, so $|\Pairs_1^\gamma| = \Omega(n^{1 + 1/\gamma} / \gamma^3)$.
\end{proof}

\paragraph{The Lower Bound Construction.}

The graph $\Base[p]$ is defined exactly as before, and the parameter $\ell \ge 2$ is fixed throughout.
The graph $\DblBase$ can now be formed from an imbalanced product.
Construct $\DblBase[p_1, p_2]$ from copies of $\Base[p_1]$ and $\Base[p_2]$ in exactly
the same way that $\DblBase[p]$ is constructed from two copies of $\Base[\sqrt{p}]$.  
The number of vertex layers in $\DblBase[p_1,p_2]$ is still $2\ell+1$ and each layer contains $p_1p_2$ vertices.
However, a node in an internal layer has $|\Labels[p_1]|$ neighbors in the previous layer and $|\Labels[p_2]|$
neighbors in the next layer (or vice versa), so the density of $\DblBase[p_1,p_2]$ is determined by $\max\{p_1,p_2\}$.

We define $\Graph_1^{\gamma}[p_1, p_2]$ to be a graph drawn from Lemma \ref{lem: hard girth graphs}, 
with $n=n(p_1,p_2)$ vertices and input/output ports $S,T$ selected with the following cardinality.
When $\gamma \ge 3$ is odd,
$$|S| = |T| = p_1 = p_2 = n^{(\gamma + 1)/(2\gamma)},$$
which, in Lemma \ref{lem: hard girth graphs}, corresponds to choosing $i = \frac{\gamma + 1}{2}$.  When $\gamma \ge 2$ is even,
$$|S| = p_1 = n^{(\gamma + 2)/(2\gamma)} \;\mbox{ and }\; |T| = p_2 = n^{1/2},$$
which corresponds to picking $i = \frac{\gamma + 2}{2}$ in Lemma \ref{lem: hard girth graphs}.
We define $\Pairs_1^{\gamma}$ to be the set of $\Omega(p_1p_2)$ 
node pairs in $\Graph_1^{\gamma}[p_1,p_2]$ from Lemma \ref{lem: girth pairs}.

We proceed as in Theorem \ref{thm: girth lower bounds}, but with a few critical differences.  Although $\Pairs_1^\gamma$-paths
through $\Graph_1^\gamma$ have length $\gamma$, there could be ``shortcuts'' between input ports and output ports not
covered by $\Pairs_1^\gamma$; the length of a shortcut might be as low as 1.
When forming $\Graph_k^\gamma$ we subdivide edges in $\DblBase$ as before, but have to make these
paths a factor $\gamma$ longer to sufficiently penalize paths that attempt to 
deviate far from the unique shortest path and thereby take advantage of many shortcuts elsewhere in $\Graph_k^\gamma$.
The imbalanced product $\DblBase[p_1,p_2]$ is only used in the formation of $\Graph_2[p]$, and only when $\gamma$ is even.

When $\gamma$ is odd, $\Graph_2^\gamma[p]$ is constructed from $\DblBase[p]=\DblBase[\sqrt{p},\sqrt{p}]$
and $\Graph_1^\gamma[\f{\sqrt{p}}{\Loss_\ell(\sqrt{p})},\f{\sqrt{p}}{\Loss_\ell(\sqrt{p})}]$ exactly as in Section~\ref{sect:lower-bound}, 
but replacing edges in $\DblBase$ by paths of length $\gamma(2\ell-1)$.
When $\gamma$ is even, to construct $\Graph_2^\gamma[p]$ we pick $p_1,p_2$ to have the ``right'' proportions
such that $p_1p_2 = p$.  The right proportions are dictated by the function $\Loss_\ell(\cdot)$ from the construction of $\Base$
and Lemma~\ref{lem: hard girth graphs}.
Let $p_1' = |\Labels[p_1]| \ge p_1/\Loss_\ell(p_1)$ and $p_2' = |\Labels[p_2]| \ge p_2/\Loss_\ell(p_2)$ be 
the number of edges connecting an internal node $u$ in $\DblBase[p_1,p_2]$ to previous/subsequent layers.
When forming $\Graph_2^\gamma$, each of these edges gets attached to a different input/output port of $\Graph_1^\gamma(u)$,
so we need $p_1' = (p_2')^{(\gamma+2)/\gamma}$.  In cases where $\Loss_\ell(p) = p^{o(1)}$, we can ignore the distinction between
$p_1$ and $p_1'$, and just set $p_1 = p^{\f{\gamma+2}{2\gamma+2}}$ and $p_2 = p^{\f{\gamma}{2\gamma+2}}$.
When $k\ge 3$, $\Graph_k^\gamma[p]$ is constructed from $\DblBase[p]$ and $\Graph_{k-1}^\gamma[\cdot]$ as before,
but subdivides edges into paths of length $\gamma(2\ell-1)^{k-1}$.

We now analyze the distances in $\Graph_k^\gamma$ of pairs in $\Pairs_k^\gamma$.

\begin{lemma} [Compare to Lemma \ref{lem: unique paths}] \label{lem: girth unique paths}
Fix a $(u_0, u_{2\ell}) \in \Pairs_k^{\gamma}$ whose unique shortest path in $\DblBase$ is $(u_0,u_1,\ldots,u_{2\ell})$.
The following hold.
\begin{itemize}
\item There is a unique $u_0$--$u_{2\ell}$ shortest path in $\Graph_k^\gamma$.  It has length 
$\gamma (2(k-1)\ell + 1)(2\ell - 1)^{k-1}$.

\item Any path from $u_0$ to $u_{2\ell}$ in $\Graph_k^{\gamma}$ that intersects some $\Graph_{k-1}^\gamma(u')$,
$u'\not\in \{u_1,\ldots,u_{2\ell-1}\}$, is at least $2(2\ell - 1)^{k-1}$ longer than the shortest path.
\end{itemize}
\end{lemma}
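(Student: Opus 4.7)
My plan is to prove both parts jointly by induction on $k$, mirroring the structure of Lemma~\ref{lem: unique paths} but carefully tracking the new factor $\gamma$ introduced by (a) the $\gamma$-fold subdivision of inter-copy edges and (b) the fact that shortest paths in $\Graph_1^\gamma$ have length $\gamma$ rather than $1$.

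In the base case $k=1$, the pair $(u_0,u_{2\ell})=(s,t)\in\Pairs_1^\gamma$ satisfies $\dist_{\Graph_1^\gamma}(s,t)=\gamma$ by Lemma~\ref{lem: girth pairs}, matching the formula at $k=1$, and uniqueness follows from $\Graph_1^\gamma$'s girth $\geq 2\gamma+2$, since any two distinct shortest $s$-$t$ paths of length $\gamma$ would yield a cycle of length at most $2\gamma$; part 2 is vacuous.  For the inductive step $k\ge 2$, given any $u_0$-$u_{2\ell}$ walk $W$ in $\Graph_k^\gamma$, let $\hat{W}$ be its projection onto $\DblBase$ obtained by collapsing each copy $\Graph_{k-1}^\gamma(u)$ to $u$ and each subdivided path to a single edge.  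If $\hat{W}$ equals the unique shortest $\DblBase$-path $(u_0,\ldots,u_{2\ell})$ (unique by Lemma~\ref{lem:properties-of-DblBase}), then $W$ must traverse each $\Graph_{k-1}^\gamma(u_i)$ via the port pair $(\pi(a),\pi(b))$ dictated by the alternating labels, and by construction of $\Pairs_k^\gamma$ this pair lies in $\Pairs_{k-1}^\gamma$; the inductive hypothesis then supplies a unique intra-copy subpath of length $\gamma(2(k-2)\ell+1)(2\ell-1)^{k-2}$, and summing the $2\ell$ inter-copy paths of length $\gamma(2\ell-1)^{k-1}$ with the $2\ell-1$ intra-copy subpaths gives the claimed total $\gamma(2(k-1)\ell+1)(2\ell-1)^{k-1}$.

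Otherwise $\hat{W}$ deviates from the unique shortest $\DblBase$-path, and bipartiteness of $\DblBase$ forces $|\hat{W}|\ge 2\ell+2$, so the inter-copy contribution to $|W|$ is already at least $(2\ell+2)\gamma(2\ell-1)^{k-1}$, exceeding Case~A's inter contribution by $2\gamma(2\ell-1)^{k-1}\ge 2(2\ell-1)^{k-1}$.  This surplus is the source of the bound in part 2 and simultaneously gives strict inequality for uniqueness in part 1, provided we can show the Case~B intra contribution does not erase it.

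The main obstacle is this intra bookkeeping: in Case~B the at least $2\ell+1$ internal vertices visited by $\hat{W}$ each produce an intra subpath between a possibly arbitrary (input port, output port) pair which need not lie in $\Pairs_{k-1}^\gamma$, so the inductive hypothesis does not directly lower-bound these subpaths.  I would handle this by strengthening the induction with an auxiliary claim: every input-to-output path in $\Graph_k^\gamma$ has length at least $L_k\ge (2\gamma(k-1)\ell+1)(2\ell-1)^{k-1}$, proved via the recurrence $L_k\ge 2\ell\gamma(2\ell-1)^{k-1} + (2\ell-1)L_{k-1}$ with base $L_1\ge 1$ (which in turn follows from the layered structure of $\DblBase$ and the fact that input and output ports sit in opposite extremal layers).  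A direct computation then shows $(2\ell+1)L_{k-1}$ exceeds the Case~A intra-copy contribution $(2\ell-1)\gamma(2(k-2)\ell+1)(2\ell-1)^{k-2}$ for all $\gamma\ge 1$, so no intra savings can cancel the inter surplus, and the Case~B total therefore remains at least $2(2\ell-1)^{k-1}$ above the Case~A total; the fact that the subdivision factor $\gamma$ appears in the inter length is exactly what makes this absorption possible.
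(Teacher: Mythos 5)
Your overall strategy matches the paper's (project onto $\DblBase$, split into the ``projection equals the $\DblBase$ shortest path'' case and the deviating case, and bound the deviating case via an inter surplus), and your auxiliary claim that every input-to-output path in $\Graph_k^\gamma$ has length at least $L_k=(2\gamma(k-1)\ell+1)(2\ell-1)^{k-1}$ is correct and is essentially the quantity the paper's proof uses. But the step you call a ``direct computation'' is false: you assert that $(2\ell+1)L_{k-1}$ exceeds the Case~A intra-copy contribution $(2\ell-1)d_{k-1}^\gamma$ for all $\gamma\ge 1$. For $k=2$ this reads $2\ell+1\ge(2\ell-1)\gamma$, which fails for every $\gamma\ge 2$ once $\ell\ge 2$ (e.g.\ $\gamma=\ell=2$ gives $5<6$). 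So the deviating path's intra contribution \emph{can} drop below Case~A's; the claim ``no intra savings can cancel the inter surplus'' is only true because the inter surplus is $2\gamma(2\ell-1)^{k-1}$, not $2(2\ell-1)^{k-1}$, and you never perform the cancellation that accounts for this.

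The paper avoids this bookkeeping entirely with a cleaner device: any input-to-output path in $\Graph_k^\gamma$ passes through exactly $(2\ell-1)^{k-1}$ copies of $\Graph_1^\gamma$, each of which can save at most $\gamma-1$ edges over the length-$\gamma$ path used by a $\Pairs_1^\gamma$ pair, while the subdivided edges contribute identically in every case. Hence the minimum length of \emph{any} input-to-output path is $d_k^\gamma-(\gamma-1)(2\ell-1)^{k-1}$, and a deviating path costs at least $2\gamma(2\ell-1)^{k-1}$ extra in subdivided edges, giving a total of at least $d_k^\gamma+(\gamma+1)(2\ell-1)^{k-1}\ge d_k^\gamma+2(2\ell-1)^{k-1}$. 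The single identity $d_k^\gamma-L_k^\gamma=(\gamma-1)(2\ell-1)^{k-1}$ is the fact your argument should invoke in place of the false intra-dominance inequality; with it your recurrence-based bookkeeping can be repaired, but as written the proof has a genuine gap.
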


\begin{proof}
The proof is by induction.
When $k=1$, $(u_0,u_{2\ell})\in \Pairs_1^\gamma$ implies that
$\dist_{\Graph_1^{\gamma}}(u_0, u_{2\ell}) = \gamma$ (by Lemma \ref{lem: hard girth graphs}) 
and by the girth of $\Graph_1^\gamma$ the path is unique.

We now turn to the inductive step.
By the inductive hypothesis, there is only one shortest path that passes through 
$\Graph_{k-1}^\gamma(u_1),\ldots,\Graph_{k-1}^\gamma(u_{2\ell-1})$
and it has length
\[
d_k^\gamma = (2\ell-1)d_{k-1}^{\gamma} + (2\ell)\cdot \gamma(2\ell-1)^{k-1},
\]
which has a closed-form solution $d_k^\gamma = \gamma(2(k-1)\ell + 1)(2\ell - 1)^{k-1}$.

By Lemma~\ref{lem: unique paths}, any path from an input port of $\Graph_k^\gamma$ to an output port of $\Graph_k^\gamma$
passes through $(2\ell-1)^{k-1}$ copies of $\Graph_1^{\gamma}$.  Thus, the {\em minimum} length of such a path
is exactly $d_k^\gamma - (\gamma-1)(2\ell-1)^{k-1}$.  Consider a path that passes through some $\Graph_{k-1}^\gamma(u')$,
where $u'\not\in \{u_1,\ldots,u_{2\ell-1}\}$.  Since the shortest $u_0$--$u_{2\ell}$ path in $\DblBase$ is unique and $\DblBase$ is bipartite,
this path traverses at least two additional subdivided edges, each of length $\gamma (2\ell-1)^{k-1}$.  The length of such a path is therefore
at least 
\[
d_k^\gamma - (\gamma-1)(2\ell-1)^{k-1} + 2\cdot \gamma (2\ell-1)^{k-1},
\]
which is at least $d_k^\gamma + 2(2\ell-1)^{k-1}$.  Thus, the shortest $u_0$--$u_{2\ell}$ path in $\Graph_k^\gamma$ is unique.
\end{proof}

Since paths through $\Graph_1^\gamma$ overlap, we need to update the definition of a ``critical'' edge.

\begin{definition} [Compare to Definition \ref{def: critical}]
An edge $e$ is \emph{critical} for a pair $(u_0, u_{2\ell}) \in \Pairs_k^{\gamma}$ if it lies in a copy of $\Graph_1^{\gamma}$, the unique 
shortest $u_0$--$u_{2\ell}$ path in $\Graph_k^{\gamma}$ enters and leaves that copy of $\Graph_1^{\gamma}$ by some pair $(s, t) \in \Pairs_1^{\gamma}$, and we have $\phi(s, t) = e$. (Thus $(s, t)$ requires $e$ in $\Graph_1^{\gamma}$, and so $(u_0, u_{2\ell})$ requires $e$ in $\Graph_k^{\gamma}$.)
\end{definition}

\begin{lemma} \label{lem: girth path stretch}
Let $\tilde{\Graph}_k^{\gamma}$ be $\Graph_k^{\gamma}$ with all critical edges for $(u_0, u_{2\ell})$ removed.
Then $\dist_{\tilde{\Graph}_k^{\gamma}}(u_0, u_{2\ell}) \ge \dist_{\Graph_k^{\gamma}}(u_0, u_{2\ell}) + 2(2\ell - 1)^{k-1}$.
\end{lemma}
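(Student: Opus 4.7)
My plan is to prove this by induction on $k$, mirroring the structure of Lemma~\ref{lem:removing-critical-edges} but using the girth hypothesis in the base case instead of the biclique argument.

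For the base case $k=1$, the pair $(u_0, u_{2\ell}) \in \Pairs_1^\gamma$ has a unique shortest path of length $\gamma$ in $\Graph_1^\gamma$. The single critical edge is $\phi(u_0, u_{2\ell})$, which lies on this path. Because $\Graph_1^\gamma$ has girth at least $2\gamma+2$, the two endpoints of the removed critical edge cannot be joined by an alternative path of length less than $2\gamma+1$. Hence any $u_0$--$u_{2\ell}$ walk in $\tilde{\Graph}_1^\gamma$ has length at least $(\gamma-1)+(2\gamma+1) = 3\gamma$, which exceeds the original distance $\gamma$ by $2\gamma \ge 2 = 2(2\ell-1)^{0}$.

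For the inductive step, fix $(u_0,u_{2\ell})\in\Pairs_k^\gamma$ and let $(u_0,u_1,\ldots,u_{2\ell})$ be its unique shortest path in $\DblBase$. Any $u_0$--$u_{2\ell}$ path $Q$ in $\tilde{\Graph}_k^\gamma$ falls into one of two cases. In Case~1, $Q$ passes through some $\Graph_{k-1}^\gamma(u')$ with $u'\notin\{u_1,\ldots,u_{2\ell-1}\}$; then Lemma~\ref{lem: girth unique paths} already guarantees that $Q$ is at least $2(2\ell-1)^{k-1}$ longer than $\dist_{\Graph_k^\gamma}(u_0,u_{2\ell})$, and we are done. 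In Case~2, $Q$ visits exactly $\Graph_{k-1}^\gamma(u_1),\ldots,\Graph_{k-1}^\gamma(u_{2\ell-1})$ in order, entering and leaving each $\Graph_{k-1}^\gamma(u_i)$ through the same input/output port pair used by the original shortest path. By the definition of critical edges in $\Graph_k^\gamma$, the critical edges inside each $\Graph_{k-1}^\gamma(u_i)$ are precisely the critical edges of the corresponding pair in $\Pairs_{k-1}^\gamma$ induced by $(u_0,u_{2\ell})$'s passage through $\Graph_{k-1}^\gamma(u_i)$. Applying the inductive hypothesis inside each of the $2\ell-1$ copies $\Graph_{k-1}^\gamma(u_i)$, each such traversal is lengthened by at least $2(2\ell-1)^{k-2}$, and the $2\ell$ subdivided interconnecting paths of length $\gamma(2\ell-1)^{k-1}$ are unaffected. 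Summing over the $2\ell-1$ copies yields an additive stretch of at least $(2\ell-1)\cdot 2(2\ell-1)^{k-2}=2(2\ell-1)^{k-1}$, as required.

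The main obstacle I anticipate is verifying that critical edges behave compositionally, i.e., that a critical edge for $(u_0,u_{2\ell})$ inside some $\Graph_{k-1}^\gamma(u_i)$ really is a critical edge for the induced pair in $\Pairs_{k-1}^\gamma$, so that the inductive hypothesis applies cleanly. This follows because, by construction of $\Pairs_k^\gamma$ from $\Pairs_{k-1}^\gamma$, a pair is retained in $\Pairs_k^\gamma$ only when the port pair it uses at each interior copy of $\Graph_{k-1}^\gamma$ lies in $\Pairs_{k-1}^\gamma$, and the unique shortest path there is inherited from $\Graph_{k-1}^\gamma$. Once this compatibility is spelled out the rest is routine counting; the girth-based base case supplies the $+2$ additive gap at the bottom of the recursion, exactly matching what the $\Graph_1$ biclique argument gave in the low-girth setting.
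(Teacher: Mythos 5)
Your inductive step is correct and follows the same two-case structure as the paper: either $\tilde{Q}$ deviates to some $\Graph_{k-1}^\gamma(u')$ with $u'\notin\{u_1,\ldots,u_{2\ell-1}\}$ and Lemma~\ref{lem: girth unique paths} gives the $2(2\ell-1)^{k-1}$ penalty directly, or $\tilde{Q}$ traverses the same copies and you accumulate $2(2\ell-1)^{k-2}$ in each of the $2\ell-1$ copies via the inductive hypothesis. Your remark about critical edges behaving compositionally is the correct thing to check and is indeed what the construction of $\Pairs_k^\gamma$ provides.

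The base case $k=1$, however, has a genuine gap. You argue that since the endpoints of the removed critical edge cannot be rejoined by a path shorter than $2\gamma+1$, every $u_0$--$u_{2\ell}$ walk in $\tilde{\Graph}_1^\gamma$ has length at least $(\gamma-1)+(2\gamma+1)=3\gamma$. The ``hence'' does not follow: an alternative $u_0$--$u_{2\ell}$ path need not pass through either endpoint of the removed edge, so you cannot decompose its length as ``the surviving $\gamma-1$ edges of $P$ plus a local detour of length $\ge 2\gamma+1$.'' Indeed the new distance can be as small as $\gamma+2$, strictly less than $3\gamma$ for $\gamma\ge 2$. The argument should instead be made directly on the pair $(u_0,u_{2\ell})$, as the paper does: the unique shortest path $P$ has length $\gamma$, and any distinct $u_0$--$u_{2\ell}$ path $P'$ forms, together with $P$, a cycle of length at most $|P|+|P'|$; the girth bound $|P|+|P'|\ge 2\gamma+2$ then yields $|P'|\ge \gamma+2$, which is the required additive $+2 = 2(2\ell-1)^0$. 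With that repair the proof goes through and coincides with the paper's.
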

\begin{proof}
We proceed by induction.
In the base case of $k=1$, suppose a critical edge is removed for $(u_0, u_{2\ell})$.  Since $\dist_{\Graph_1^\gamma}(u_0,u_{2\ell}) = \gamma$
and $\Graph_1^\gamma$ has girth $2\gamma+2$, $\dist_{\tilde{\Graph}_1^\gamma}(u_0,u_{2\ell}) \ge \gamma+2$.
For the inductive step, let $Q$ be the shortest path from $u_0$ to $u_{2\ell}$ in $\Graph_k^{\gamma}$, 
and let $\tilde{Q}$ be the shortest path in the graph $\tilde{\Graph}_k^{\gamma}$.
Suppose first that $\tilde{Q}$ traverses the exact same copies of $\Graph_{k-1}^\gamma$ that $Q$ traverses.
In this case the claim follows from the inductive hypothesis: we accumulate $2(2\ell-1)^{k-2}$ additive stretch in
each of the $2\ell-1$ copies of $\Graph_{k-1}^\gamma$ traversed.
If $\tilde{Q}$ deviates and intersects some other copy of $\Graph_{k-1}^\gamma$, then by Lemma~\ref{lem: girth unique paths},
the additive stretch is at least $2(2\ell-1)^{k-1}$.
\end{proof}

\begin{lemma} \label{lem: girth distance equal}
Let $\tilde{\Graph}_k^{\gamma}$ be $\Graph_k^{\gamma}$ with the critical edges for all pairs in 
$\Pairs_k^{\gamma}$ except $(u_0, u_{2\ell})$ removed.  
Then $\dist_{\tilde{\Graph}_k^{\gamma}}(u_0, u_{2\ell}) = \dist_{\Graph_k^{\gamma}} (u_0, u_{2\ell})$.
\end{lemma}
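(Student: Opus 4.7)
The plan is to reduce the lemma to the statement that the sets of critical edges for distinct pairs in $\Pairs_k^{\gamma}$ are pairwise disjoint. Once that is established, every critical edge on the unique shortest $u_0$--$u_{2\ell}$ path in $\Graph_k^{\gamma}$ (which is singled out by Lemma~\ref{lem: girth unique paths}) survives in $\tilde{\Graph}_k^{\gamma}$, since only the critical edges of the other pairs are removed, and non-critical edges are never removed by construction. Hence the original shortest path still exists in $\tilde{\Graph}_k^{\gamma}$, so $\dist_{\tilde{\Graph}_k^{\gamma}}(u_0,u_{2\ell}) = \dist_{\Graph_k^{\gamma}}(u_0,u_{2\ell})$.

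I would prove the disjointness claim by induction on $k$, mirroring the structure of Lemma~\ref{lem:pair-distances}. For $k=1$ this is exactly the second property of Lemma~\ref{lem: girth pairs}: for any $(s,t)\in\Pairs_1^{\gamma}$ the unique shortest $s$--$t$ path avoids every edge $\phi(s',t')$ with $(s',t')\neq(s,t)$, so $\phi(s,t)$ (the only critical edge of $(s,t)$) is not required by, and hence not critical for, any other pair.

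For $k\ge 2$, let $(u_0,u_{2\ell})$ and $(u_0',u_{2\ell}')$ be distinct pairs in $\Pairs_k^{\gamma}$. Critical edges live inside nested copies $\Graph_{k-1}^{\gamma}(u_i)$ sitting at interior vertices $u_i$ of $\DblBase$. If the unique shortest paths of the two outer pairs share no such $\Graph_{k-1}^{\gamma}$-copy, disjointness is immediate. Otherwise they share some $\Graph_{k-1}^{\gamma}(u_i)$. Each outer pair is identified with a label pair $(a,b),(a',b')\in\Labels$, and together with any one endpoint this label pair determines the underlying shortest path in $\DblBase$ (and hence the outer pair itself); therefore two distinct outer pairs whose $\DblBase$-shortest paths share the interior vertex $u_i$ must have different label pairs. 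The portions of the two shortest paths inside $\Graph_{k-1}^{\gamma}(u_i)$ enter and exit at the ports $(\pi(a),\pi(b))$ and $(\pi(a'),\pi(b'))$ respectively, which are distinct and which correspond to two distinct pairs of $\Pairs_{k-1}^{\gamma}$. Applying the inductive hypothesis inside $\Graph_{k-1}^{\gamma}(u_i)$, their critical edges there are disjoint; doing this for every shared copy yields disjointness in $\Graph_k^{\gamma}$.

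The main obstacle is the high-girth base case. In Section~\ref{sect:lower-bound}, $\Graph_1$ is a biclique in which each pair's shortest path is a single edge, making edge-disjointness trivial. In $\Graph_1^{\gamma}$ the length-$\gamma$ shortest paths for pairs in $\Pairs_1^{\gamma}$ can and do share edges, so we cannot claim disjoint shortest paths. What saves the argument is that criticality is defined via the specific edge $\phi(s,t)$ picked out by Lemma~\ref{lem: girth pairs}, whose defining property is precisely that it appears on no other pair's shortest path. This single-edge witness per pair plays the same role that the unique biclique edge played in the low-girth proof, and is exactly what the inductive step needs.
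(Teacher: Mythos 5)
Your proof is correct and rests on the same key tool as the paper---property 2 of Lemma~\ref{lem: girth pairs}---but it takes a more roundabout route, organizing the argument around an inductive disjointness claim. The paper's proof is more direct: for each internal copy of $\Graph_1^\gamma$ that the unique shortest $u_0$--$u_{2\ell}$ path traverses at ports $(s,t)\in\Pairs_1^\gamma$, property 2 of Lemma~\ref{lem: girth pairs} says the $s$--$t$ shortest path inside that copy avoids every edge $\phi(s',t')$ with $(s',t')\neq(s,t)$, and these $\phi$-images are precisely the candidate deletions, so no edge of the path is deleted. Your framing---reduce the lemma to ``critical edge sets of distinct pairs are disjoint''---is slightly weaker than what the lemma actually requires, and the first paragraph glosses over the gap: disjointness shows that $(u_0,u_{2\ell})$'s own critical edges $\phi(s,t)$ survive, but the other $\gamma-1$ edges of the path inside each copy of $\Graph_1^\gamma$ are not critical for $(u_0,u_{2\ell})$, and an edge non-critical for $(u_0,u_{2\ell})$ could a priori be $\phi(s',t')$ for some other port pair and hence critical for, and deleted on behalf of, an unrelated pair. ``Non-critical edges are never removed by construction'' only holds if non-critical means non-critical for every pair, and establishing that every non-$\phi(s,t)$ edge of the path has this property is itself the crux of the lemma. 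Your final paragraph does identify the resolution (property 2 of Lemma~\ref{lem: girth pairs} rules out any other $\phi$-image on the $s$--$t$ path), and your middle paragraph makes explicit the label-uniqueness argument that the paper's proof leaves tacit, so the substance is all present---you would simply want to state up front that the goal is the stronger ``no critical edge of any other pair lies on the $u_0$--$u_{2\ell}$ shortest path'' rather than disjointness of critical edge sets alone.
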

\begin{proof}
Let $s, t$ be the input/output ports used by shortest paths from $u_0$ to $u_{2\ell}$ in any internal copy of $\Graph_1^{\gamma}$.
By Lemma \ref{lem: girth pairs}, the unique shortest path from $s$ to $t$ in $\Graph_1^{\gamma}$ does not include any edge 
$e$ for which $\phi(s',t') = e,  (s',t') \neq (s, t)$.
It follows that the distance from $s$ to $t$ is the same in $\Graph_k^{\gamma}$ and $\tilde{\Graph}_k^{\gamma}$.
\end{proof}

The final piece of the proof is exactly identical to the lower bound argument in Section \ref{sect:lower-bound}.
In particular, we define a family of $2^{|\Pairs_k^{\gamma}|}$ graphs by keeping/removing the critical edges for each pair in 
$\Pairs_k^{\gamma}$ in all possible combinations.
By Lemmas \ref{lem: girth path stretch} and \ref{lem: girth distance equal}, any two of these graphs will disagree on a 
pairwise distance $(u_0, u_{2\ell})$ by an additive $2(2\ell - 1)^{k-1}$.
By Lemma \ref{lem: girth unique paths}, we have 
$\dist_{\Graph_k^{\gamma}}(u_0, u_{2\ell}) = d = \gamma(2(k-1)\ell + 1)(2\ell - 1)^{k-1}$.
Thus, the additive stretch $2(2\ell-1)^{k-1}$ is roughly $\fr{2}{(\gamma(k-1))^{1-1/k}} \cdot d^{1-1/k}$.
If the stretch function of the distance oracle is $f(d) \le d + c d^{1-1/k}$ for sufficiently large $d$ 
and a sufficiently small constant $c < \fr{2}{(\gamma(k-1))^{1-1/k}}$, then it cannot map any of these graphs to the same bit-string.
This gives the stretch part of the lower bound claimed in Theorem \ref{thm: girth lower bounds}.
It remains only to compute the size of this graph family.
We have $2^{|\Pairs_k^{\gamma}|}$ distinct graphs, so we need to obtain a lower bound on $|\Pairs_k^\gamma|$.
In order to avoid tedious calculations let 
us assume that $\ell = p^{o(1)}$, so $\Loss_\ell(p) = p^{o(1)}$ as well.
In particular, $\nBase[p]$, $\nDblBase[p] = p^{1+o(1)}$, 
$\mDblBase[p_1,p_2] = p_1p_2(p_1+p_2)^{1-o(1)}$, 
and $|\Pairs(\DblBase[p_1,p_2])| = (p_1p_2)^{2-o(1)}$.
Letting $n_1^\gamma[p_1,p_2]$ be the number of vertices in $\Graph_1^\gamma[p_1,p_2]$
and $n_k^\gamma[p]$ be the number of vertices in $\Graph_k^\gamma[p]$, we have
\begin{align*}
n_1^\gamma[p_1, p_2]		&= (p_1 p_2)^{\gamma / (\gamma + 1)}\\
n_2^\gamma[p]				&= p^{1+o(1)} \cdot n_{1}^\gamma[p_1^{1-o(1)}, p_2^{1-o(1)}] + (\mDblBase[p_1,p_2])^{1+o(1)}\\
\intertext{Where $p_1,p_2 = \sqrt{p}$ if $\gamma$ is odd and $p_1 = p^{\f{\gamma+2}{2\gamma+2}}, p_2 = p^{\f{\gamma}{2\gamma+2}}$ if $\gamma$ is even. When $k\ge 3$,}
n_k^\gamma[p]				&= p^{1+o(1)} \cdot n_{k-1}^\gamma[p^{1/2 - o(1)}] + (\mDblBase[p])^{1+o(1)}.
\end{align*}
Whether $\gamma$ is even or odd, $n_{1}^\gamma[p_1^{1-o(1)}, p_2^{1-o(1)}] = (p_1p_2)^{\gamma/(\gamma+1) - o(1)}$.
The density of $\DblBase[p_1,p_2]$ is maximized when $p_1$ and $p_2$ are most imbalanced.  This occurs
when $\gamma=2$, $p_1 = p^{2/3}$ and $p_2 = p^{1/3}$, making $\mDblBase[p_1,p_2] = p^{5/3 - o(1)}$.
Thus, for any $\gamma\ge 2$, $n_2^\gamma[p] = p^{2 - \f{1}{\gamma+1} + o(1)}$.
By induction on $k$, $n_k^\gamma[p] = p^{2 - \f{1}{(\gamma+1)2^{k-2}} + o(1)}$.

By Lemma~\ref{lem: girth pairs}, $|\Pairs_1^\gamma[p_1,p_2]| = \Omega(p_1p_2)$.
The same inductive proof from Section~\ref{sect:lower-bound} shows that for 
any $k\ge 2$, $|\Pairs_k^\gamma[p]| = p^{2-o(1)}$.  
Expressed in terms of $n = n_k^\gamma[p]$, $p^{2-o(1)}$
is $n^{1+\f{1}{(\gamma+1)2^{k-1}-1} - o(1)}$.   
Theorem \ref{thm: girth lower bounds} follows.

\subsection{Matching Upper Bounds}\label{sect:girth-upper-bounds}

The subgraph $E_0'$ from Section~\ref{sect:new-upper-bounds} can be viewed as a {\em radius-1 clustering}
of the graph, obtained from the following procedure.
First, cluster centers $V_1 \subset V$ are sampled with probability $q_1$.  Each vertex
incident to $V_1$ is {\em clustered} and joins the cluster of one such adjacent $V_1$ vertex.
$E_0'$ contains a star spanning each cluster and all edges incident to unclustered vertices,
which number $O(n/q_1)$ in expectation.
Baswana et al.~\cite{BKMP10} observed that in graphs with
girth at least $2\gamma+1$, this procedure can be generalized to compute a radius-$\gamma$ clustering 
with similar properties.

\begin{theorem} [\cite{BKMP10}] \label{thm:bkmp-clustering}
Let $G=(V,E)$ be a graph with girth at least $2\gamma+1$.
Fix $q_1 < 1$ and let $V_1\subset V$ be obtained by sampling each element of $V$ with probability $q_1$.
Any $v\in V$ with $\dist(v,V_1)\le \gamma$ is {\em clustered} and joins the cluster of the closest $V_1$ vertex, breaking ties consistently.
Let $E_0^\gamma$ contain a radius-$\gamma$ tree on each cluster and all edges incident to unclustered vertices.
In expectation $|E_0^\gamma| = O(n/q_1^{1/\gamma})$.
\end{theorem}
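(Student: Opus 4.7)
The plan is to decompose $|E_0^\gamma|$ into two contributions, (a) the radius-$\gamma$ spanning trees of the clusters and (b) the edges incident to unclustered vertices, and bound each separately. Part (a) is deterministic: the clusters are vertex-disjoint and a spanning tree on a cluster $C$ has $|C|-1$ edges, so $\sum_{\text{clusters } C}(|C|-1) \le n-1$. Part (b) is where the probabilistic argument is concentrated, and where the girth hypothesis is essential.

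For part (b), note that $v$ is unclustered iff $B_\gamma(v) \cap V_1 = \emptyset$, so by independence of the sampling $\Pr[v \text{ unclustered}] = (1-q_1)^{|B_\gamma(v)|}$. By linearity of expectation, the expected number of edges incident to unclustered vertices is at most
\[
\sum_{v \in V} d(v)\,(1-q_1)^{|B_\gamma(v)|},
\]
and the goal is to show this sum is $O(n/q_1^{1/\gamma})$.

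The key use of girth is that for $j < \gamma$ no two shortest paths from $v$ of length $\le j$ can terminate at the same vertex (that would close a cycle of length $\le 2j < 2\gamma+1$), and no two BFS-layer $L_j(v)$ vertices can be adjacent. This pins down the growth relation $|L_{j+1}(v)| = \sum_{u\in L_j(v)}(d(u)-1)$ for $j \le \gamma-2$, forcing $B_\gamma(v)$ to fan out rapidly in the ``typical'' case. I would then use a degree-threshold argument at $\tau = q_1^{-1/\gamma}$: for vertices with $d(v)<\tau$, the trivial bound $\sum_{v: d(v)<\tau} d(v) \le n\tau$ already gives $O(n/q_1^{1/\gamma})$ edges; and for vertices with $d(v) \ge \tau$, the BFS fan-out forces $|B_\gamma(v)|$ to be large enough that $d(v)(1-q_1)^{|B_\gamma(v)|}$ is maximized near $d(v)\approx q_1^{-1/\gamma}$, where its value is $O(q_1^{-1/\gamma})$, summing to $O(n/q_1^{1/\gamma})$.

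The main obstacle is that BFS growth can stall when intermediate layers contain many low-degree vertices, so the naive chain $|B_\gamma(v)| \ge (d(v)-1)^\gamma$ is false (consider a ``spider'' with one high-degree hub and only leaf neighbors, where $|B_\gamma(v)|\approx d(v)+1$). A purely pointwise bound on $d(v)(1-q_1)^{|B_\gamma(v)|}$ can therefore exceed $q_1^{-1/\gamma}$. The fix is to amortize globally using the Moore-type edge bound $|E(G)| \le O(n^{1+1/\gamma})$ for girth-$(2\gamma+1)$ graphs: whenever the ball around some high-degree $v$ is anomalously small, that local structure is tree-like enough that $n$ itself is small relative to $d(v)$, and the total $\sum_v d(v)(1-q_1)^{|B_\gamma(v)|}$ remains $O(n/q_1^{1/\gamma})$. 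I would finalize the argument by combining the threshold analysis above with this global accounting, so that the regular-expander case (where the bound is tight) is the extremal configuration rather than the pathological spider.
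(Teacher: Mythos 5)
The paper does not supply a proof of this theorem---it is cited directly from [BKMP10]---so I will assess your attempt on its own merits. Your decomposition into at most $n-1$ intra-cluster tree edges plus $\sum_v d(v)(1-q_1)^{|B_\gamma(v)|}$ for the unclustered part, the degree threshold at $\tau = q_1^{-1/\gamma}$ to dispatch low-degree vertices, and the identification of the central obstacle (a high-degree ``spider'' with $|B_\gamma(v)| = O(d(v))$) are all correct. The problem is the proposed fix: the Bondy--Simonovits bound $m = O(n^{1+1/\gamma})$ genuinely does not close the gap. Used globally it bounds $\sum_v d(v)$ by $O(n^{1+1/\gamma})$, which is $\le n q_1^{-1/\gamma}$ only when $q_1 \le 1/n$, a regime where $V_1$ is typically empty. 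Used locally in $G[B_\gamma(v)]$ it gives $|B_\gamma(v)| \ge \Omega(d(v)^{\gamma/(\gamma+1)})$, which is \emph{weaker} than the trivial $|B_\gamma(v)| > d(v)$ coming from $B_1(v)$ alone; and $\sup_d d\,(1-q_1)^d = \Theta(1/q_1)$, so the per-vertex contribution can still reach $\Theta(q_1^{-1})$, off from the target $O(q_1^{-1/\gamma})$ by a $q_1^{1/\gamma-1}$ factor. The assertion ``whenever the ball is anomalously small, $n$ itself is small relative to $d(v)$'' is also simply false: a single spider can sit inside an arbitrarily large host graph of girth $\ge 2\gamma+1$ without affecting $n$.

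What actually works is a \emph{local} charging argument, not a global density bound. One concrete route: assign vertices to levels, with level $0$ the vertices of degree $< \tau$, and level $j$ (for $j\ge 1$) the high-degree vertices not yet placed that have at least $d(v)/2$ neighbors at levels $< j$. For $1 \le j \le \gamma-1$ each level-$j$ vertex sends at least half its edges to lower levels, so the level degree sums satisfy $D_j \le 2\sum_{i<j}D_i$ with $D_0 \le n\tau$, hence $D_j = O_\gamma(n\tau)$, and on these levels the trivial bound $(1-q_1)^{|B_\gamma(v)|} \le 1$ already suffices. For a vertex $v$ of level $\ge \gamma$, iterating ``more than half the neighbors lie one level lower'' through the depth-$\gamma$ BFS ball (which is a tree by the girth hypothesis) forces $|B_\gamma(v)| = \Omega_\gamma(d(v)\,\tau^{\gamma-1})$, and then $d\,\exp(-\Omega_\gamma(q_1\,d\,\tau^{\gamma-1}))$ peaks at $O_\gamma(\tau)$. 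A stalling ball is thus paid for by the low-level neighbors causing the stall, not by the global edge count. You would need to replace the Moore-bound hand-wave with an amortization of this flavor to have a complete proof.
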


We can use the $E_0^\gamma$ from Theorem~\ref{thm:bkmp-clustering} in lieu of $E_0$ in the construction
of Thorup-Zwick emulators~\cite{TZ06}.  The total size of the emulator 
$E^\gamma = E_0^\gamma \cup E_1 \cup\cdots \cup E_k$ is then
on the order of
\[
\f{n}{q_1^{1/\gamma}} + \f{nq_1^2}{q_2} + \cdots + \f{nq_{k-1}^2}{q_k} + (nq_k)^2.
\]
If we write $q_i$ as $n^{-g(i)}$, $g(i)$ must satisfy the following.
\begin{align*}
g(i) &= 2g(i-1) + \fr{g(1)}{\gamma}		& \mbox{(balancing $E_0^\gamma$ and $E_{i-1}$, for $i\in[2,k]$)}\\
	&= (\gamma+1)(2^{i-1}-1)\fr{g(1)}{\gamma}	& \mbox{(by induction)}
\end{align*}
Balancing the size of $E_k$ and $E_0^\gamma$ we have
\[
|E_k| = (nq_k)^2 = n^{2 - ((\gamma+1)2^k -2)\f{g(1)}{\gamma}}  = n^{1+\f{g(1)}{\gamma}} = |E_0^\gamma|,
\]
which is satisfied when $\f{g(1)}{\gamma} = \f{1}{(\gamma+1)2^k-1}$, implying the size of the emulator is
$O(kn^{1+\f{1}{(\gamma+1)2^k -1}})$.
The analysis of the emulator proceeds in exactly as in Section~\ref{sect:new-upper-bounds}, by bounding the quantities $\Succ{\ell}{i}$
and $\Fail{\ell}{i}$ inductively.  Substituting $E_0^\gamma$ for $E_0$ only affects the following base cases.
\begin{align*}
\Succ{\ell}{0} &= 0		& \mbox{ for all $\ell$}\\
\Fail{\ell}{0} &= \gamma	& \mbox{ for all $\ell$}
\end{align*}
This is justified by Theorem~\ref{thm:bkmp-clustering}.  Any path with length $\ell^0 = 1$ is a single edge, say $(u,v)$.
If $u$ is unclustered in $E_0^\gamma$ then $(u,v)\in E^\gamma$, 
$\dist_{E^\gamma}(u,v)=1$, and $(u,v)$ is \emph{complete}.
On the other hand, if $u$ is clustered then 
$\dist_{E^\gamma}(u,p_1(u)) \le \gamma$ and $(u,v)$ is \emph{incomplete}.
With these base cases it is straightforward to show the stretch function for $E^\gamma$ is
$f(d) = d + O(\gamma k d^{1-1/k})$.   For example, when $\gamma=k=2$ we see that every girth-5 graph
has an $(d+O(\sqrt{d}))$-emulator with size $O(n^{12/11})$. 

This emulator can be converted to a $(1+\epsilon,O(\gamma k/\epsilon)^{k-1})$-spanner 
by applying the same transformations from Section~\ref{sect:new-upper-bounds}, using 
Theorem~\ref{thm:girth-path-buying} in lieu of Theorem~\ref{thm:path-buying}.

\begin{theorem}\label{thm:girth-path-buying}
Let $G=(V,E),q_1,V_1,$ and $E_0^\gamma$ be as in Theorem~\ref{thm:bkmp-clustering}.
Suppose $V_2$ is obtained by sampling each element of $V$ with probability $q_2$, where $q_2 < q_1$.
There is an edge-set $\tilde{E}_1^\gamma$ with expected size $O(\gamma^2 n q_1^2/q_2)$
such that for $u,v\in V_1$ and $v\in \Ball_2(u)$,
\[
\dist_{E_0^\gamma \cup \tilde{E}_1^\gamma}(u,v) \le 2\gamma.
\]
\end{theorem}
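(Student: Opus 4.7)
I would prove this girth-$(2\gamma{+}1)$ analogue of Theorem~\ref{thm:path-buying} by running the path-buying algorithm of~\cite{BKMP10,Pettie-Span09}, adapted to the setting where each cluster in $E_0^\gamma$ is a radius-$\gamma$ tree rather than a single-edge star. The candidate pair set is $\mathcal{P}^\gamma \defeq \{(u,v) \in V_1\times V_1 : v \in \Ball_2(u)\}$, which has expected size $O(nq_1^2/q_2)$ because $\E|\Ball_2(u)| = O(1/q_2)$ and $V_1$ is $q_1$-sampled. Iterate over $\mathcal{P}^\gamma$ in arbitrary order, and for each pair $(u,v)$ consider a fixed shortest $u$-$v$ path $P(u,v) \subseteq G$, which lies entirely inside $\Ball_2(u)$.

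\textbf{Buying rule and stretch.} I exploit the fact that $E_0^\gamma$ already connects every $V_1$-vertex to everything within graph distance $\gamma$ for free. Accordingly, define the \emph{value} of $P(u,v)$ as the number of pairs $(x,y) \in \mathcal{P}^\gamma$ for which both $x$ and $y$ lie in clusters whose centers are within graph distance $\gamma$ of $P(u,v)$, and for which $\dist_{E_0^\gamma \cup \tilde{E}_1^\gamma}(x,y) > 2\gamma$ currently but $\dist_{E_0^\gamma \cup \tilde{E}_1^\gamma \cup P(u,v)}(x,y) \le 2\gamma$. The \emph{cost} is the number of new edges $P(u,v)$ contributes. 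Buy iff value exceeds cost. The stretch guarantee then follows by the usual dichotomy: either $P(u,v)$ was bought (and, by the value definition of the bought pairs covering $(u,v)$, a witnessing walk through cluster trees shows the spanner distance is $\le 2\gamma$), or some previously bought path $P(u',v')$ already provides a $u$-$v$ walk in $E_0^\gamma \cup \tilde{E}_1^\gamma$ of length $\gamma + 0 + \gamma = 2\gamma$: walk $\gamma$ steps from $u$ down its cluster tree to a vertex on the bought path, skip across to a vertex in $v$'s cluster, and walk $\gamma$ steps back up to $v$.

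\textbf{Size accounting.} By a pigeonhole charging analogous to the one in~\cite{BKMP10}, each pair in $\mathcal{P}^\gamma$ is charged $O(\gamma^2)$ edges of $\tilde{E}_1^\gamma$: one factor of $\gamma$ comes from the cluster-neighborhood width used in the value definition (each bought edge can be ``paid for'' only once per vertex in its $\gamma$-neighborhood), and another factor of $\gamma$ comes from the number of successive clusters along $P(u,v)$ that a prior bought path might have shortcutted through. Summing gives $\E|\tilde{E}_1^\gamma| = O(\gamma^2 n q_1^2/q_2)$.

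\textbf{Main obstacle.} The genuinely delicate step is forcing the \emph{absolute} stretch bound $2\gamma$ rather than the naive generalization $\dist_G(u,v) + 2\gamma$. This demands that every unbought pair $(u,v) \in \mathcal{P}^\gamma$ be covered by a bought path whose boundary is truly absorbed into the radius-$\gamma$ cluster trees of $u$ and $v$, so that the full budget of $2\gamma$ is consumed entirely by cluster-tree hops with no net traversal along the bought path itself. Engineering the value function so that this absorbing behavior is forced without inflating the $O(\gamma^2)$ charging factor is the technical crux; this is where I expect the proof to require the most care, paralleling but strictly generalizing the ``depth-$1$ cluster'' analysis of~\cite{BKMP10,Pettie-Span09}.
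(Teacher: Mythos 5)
Your high-level plan is the same one the paper intends: the paper gives no formal proof of Theorem~\ref{thm:girth-path-buying} at all, only a remark deferring to the path-buying machinery of~\cite{BKMP10,Pettie-Span09}, with the $\gamma^2$ factor explained by (i) the cost of a path being controlled by the number of radius-$\gamma$ clusters it touches (a shortest path can cross a single cluster in only $O(\gamma)$ edges, and edges with an unclustered endpoint are already in $E_0^\gamma$), and (ii) each cluster pair being chargeable only $O(\gamma)$ times, since once charged its distance is known to within $+O(\gamma)$ and it is charged again only when that distance improves. Your pair set $\mathcal{P}^\gamma$ and its size bound, and the rough shape of your charging, are consistent with this.

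The genuine gap is in the stretch argument, precisely at the point you flag as the ``main obstacle'': you are trying to prove something that is false as literally stated, and your proposed mechanism for it does not work. The edge set $E_0^\gamma\cup\tilde{E}_1^\gamma$ is a subgraph of $G$ (path-buying only adds shortest paths of $G$), so $\dist_{E_0^\gamma\cup\tilde{E}_1^\gamma}(u,v)\ge \dist_G(u,v)$, and for $v\in\Ball_2(u)$ the distance $\dist_G(u,v)$ is not bounded by any function of $\gamma$ (the radius of $\Ball_2(u)$ is $\dist(u,p_2(u))-1$, which grows as $q_2\to 0$). Hence no subgraph can achieve an absolute bound $2\gamma$; the statement is a typo for the additive guarantee $\dist_{E_0^\gamma\cup\tilde{E}_1^\gamma}(u,v)\le \dist_G(u,v)+2\gamma$, which is the exact analogue of the $+2$ in Theorem~\ref{thm:path-buying} and is what the paper actually uses downstream (it is what yields the base case $\Succ{\ell}{1}=4\Fail{\ell}{0}+2\gamma=6\gamma$ in the girth stretch analysis). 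Your covering argument for an unbought pair --- walk $\gamma$ steps down $u$'s cluster tree, ``skip across'' with length $0$, walk $\gamma$ steps up to $v$ --- is exactly where this surfaces: the middle segment runs along the previously bought path and contributes its full (positive) length, so the correct conclusion of the dichotomy is $\dist_G(u,v)+O(\gamma)$, not $2\gamma$. Relatedly, your value function, which counts pairs whose spanner distance falls below the absolute threshold $2\gamma$, should instead count cluster pairs adjacent to the path whose current spanner distance strictly improves; with that standard definition the pigeonhole dichotomy gives the additive $+2\gamma$ bound, and the $O(\gamma^2)$ charge per pair follows along the lines of the paper's remark rather than the ``absorbing'' behavior you were trying to engineer.
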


\begin{remark}
The $\gamma^2$ factor arises from two parts of the path-buying algorithm's analysis that
depend on the cluster radii.  The {\em cost} of a path (number of missing edges) is at most the
number of clusters touching the path divided by the cluster diameter, $2\gamma$.  Once a
cluster-pair is charged we have their correct distance to within $+O(\gamma)$.  The path-buying
algorithm only charges this cluster pair again when the distance improves, so at most $O(\gamma)$ times.
\end{remark}

The base case values for $\Succ{\ell}{i}$ and $\Fail{\ell}{i}$ are updated as follows.  For all $\ell$,
\begin{align*}
\Succ{\ell}{0} &= 0			\\
\Fail{\ell}{0} &= \gamma		\\
\Succ{\ell}{1} &= 6\gamma		\\
\Fail{\ell}{1} &= \ell + 3\gamma	
\end{align*}
It is easy to check that these base cases increase $\Fail{\ell}{i} - \ell^i$ by a factor of $\gamma$.
It is for this reason that we use a slightly larger threshold $(r+2\gamma)^i$ when forming $E_i'$ in the following construction.
The spanner $S(k,r,\gamma)$ has the edge-set $E_0^\gamma \cup \tilde{E}_1^\gamma \cup E_2' \cup \cdots \cup E_k'$,
where $E_i'$ is obtained by replacing each (weighted) pair $(u,v)\in E_i$ with a shortest path $P(u,v)$, assuming $\dist(u,v)$ is sufficiently short.
\[
E_i' = \bigcup_{\substack{(u,v) \in E_i \,:\\\dist(u,v) \le (r + 2\gamma)^i}} P(u,v).
\]
It follows that the size of the spanner is on the order of\footnote{For simplicity we treat the $\gamma^2$ factor in $|\tilde{E}_1^\gamma|$ as a constant.}
\[
\f{n}{q_1^{1/\gamma}}  + \f{nq_1^2}{q_2} + \f{nq_2^2 r^2}{q_3} \cdots \f{nq_{k-1}^2r^{k-1}}{q_k} + (nq_k)^2r^k.
\]
Writing $q_i = n^{-g(i)}r^{-h(i)}$, $g(i)$ satisfies the same recurrence as before and $h(i)$ satisfies the following.
\begin{align*}
h(2)	&= 2h(1) + \fr{h(1)}{\gamma}		& \mbox{(balancing $E_0^\gamma$ and $\tilde{E}_1^\gamma$)}\\
h(i) 	&= 2h(i-1) + \fr{h(1)}{\gamma} - (i-1)				& \mbox{(balancing $E_0^\gamma$ and $E_{i-1}'$, $i\in [3,k]$)}.\\
	&= ((\gamma+1)2^{i-1} - 1)\fr{h(1)}{\gamma} - 3\cdot 2^{i-2} + (i+1)	& \mbox{(by induction, for $i \in [3,k]$)}
\end{align*}
Finally, we balance $E_0^\gamma$ and $E_k'$,
\[
|E_k'| = (nq_k)^2r^k = n^{2 - ((\gamma+1)2^k - 2)\f{g(1)}{\gamma}} r^{k - [((\gamma+1)2^{k}-2)\f{h(1)}{\gamma} - 3\cdot 2^{k-1} + 2(k+1)]} = n^{1+\f{g(1)}{\gamma}}r^{1+\f{h(1)}{\gamma}} = |E_0^\gamma|,
\]
by setting $g(1),h(1)$ as follows.
\begin{align*}
\f{g(1)}{\gamma} &= \f{1}{(\gamma+1)2^k -1}\\
\f{h(1)}{\gamma} &= \f{3\cdot 2^{k-1} - (k+2)}{(\gamma+1)2^k -1}.
\end{align*}
Thus, the size of the resulting spanner is $O(r^h kn^{1+\f{1}{(\gamma+1)2^k -1}})$, 
where $h=\f{3\cdot 2^{k-1} - (k+2)}{(\gamma+1)2^k -1}$.
For example, setting $k=\gamma=2$ and $r=\sqrt{D}$, this shows that every graph with girth 5 contains a subgraph that
functions like a $(d+O(\sqrt{d}))$-spanner for all $d\le D$, with size $O(r^{\f{2}{11}}n^{\f{12}{11}}) = O(D^{\f{1}{11}}n^{\f{12}{11}})$.

Theorem summarizes our emulator and spanners constructions for high-girth graphs.

\begin{theorem}\label{thm:girth-emul-span}
Let $G$ be a graph with girth at least $2\gamma+1$.  There is an additive $4\gamma$-emulator and
additive $6\gamma$-spanner for $G$ with size $O(n^{1+\f{1}{2\gamma+1}})$.
For any integer $k\ge 2$, there is an $(d+O(\gamma kd^{1-1/k}))$-emulator
for $G$ with size $O(kn^{1+\f{1}{(\gamma+1)2^k -1}})$ and a $(1+\epsilon, O(\gamma k/\epsilon)^{k-1})$-spanner
for $G$ with size $O((\gamma k/\epsilon)^h kn^{1+\f{1}{(\gamma+1)2^k -1}})$, where $h=\f{3\cdot 2^{k-1} - (k+2)}{(\gamma+1)2^k-1} < \f{3}{2(\gamma+1)}$.
\end{theorem}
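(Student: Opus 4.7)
The plan is to treat Theorem~\ref{thm:girth-emul-span} as a summary consolidating the constructions developed in this subsection. The emulator is $E^\gamma = E_0^\gamma \cup E_1 \cup \cdots \cup E_k$, built from the radius-$\gamma$ clustering $E_0^\gamma$ of Theorem~\ref{thm:bkmp-clustering} and the Thorup-Zwick sampled sets $E_1,\ldots,E_k$; the spanner is $S(k,r,\gamma) = E_0^\gamma \cup \tilde{E}_1^\gamma \cup E_2' \cup \cdots \cup E_k'$, with $\tilde{E}_1^\gamma$ supplied by Theorem~\ref{thm:girth-path-buying} and $E_i' = \bigcup_{(u,v)\in E_i,\ \dist(u,v) \le (r+2\gamma)^i} P(u,v)$. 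The sampling probabilities $q_i = n^{-g(i)} r^{-h(i)}$ have already been chosen in the preceding balance calculations, so what remains is to verify the stretch bounds and extract the small-$k$ cases.

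For the additive $4\gamma$-emulator I would instantiate $k=1$: then $E_1$ is the complete weighted graph on $V_1$, and for any $u,v \in V$ the detour $u\rightsquigarrow p_1(u) \to p_1(v)\rightsquigarrow v$ lies in the emulator with cost at most $\gamma + \dist_G(p_1(u),p_1(v)) + \gamma$, while the triangle inequality gives $\dist_G(p_1(u),p_1(v)) \le \dist_G(u,v) + 2\gamma$, for total stretch $+4\gamma$. Balancing $|E_0^\gamma|=O(n/q_1^{1/\gamma})$ against $|E_1|=O((nq_1)^2)$ yields $q_1 = n^{-\gamma/(2\gamma+1)}$ and the claimed size. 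The additive $6\gamma$-spanner of the same size is exactly Theorem~\ref{thm: bkmp girth} of~\cite{BKMP10} and can be cited directly.

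For the $k\ge 2$ claims, the size bounds $O(kn^{1+\f{1}{(\gamma+1)2^k-1}})$ and $O(r^h k n^{1+\f{1}{(\gamma+1)2^k-1}})$ with $h = \f{3\cdot 2^{k-1}-(k+2)}{(\gamma+1)2^k-1}$ come straight from the balance derivations. A short check confirms $h < \f{3}{2(\gamma+1)}$: clearing denominators, this reduces to $2(\gamma+1)(k+2) > 3$, which holds for every $\gamma\ge 1, k\ge 2$. The stretch analysis reuses the $\Succ{\ell}{i}/\Fail{\ell}{i}$ recurrences of Lemma~\ref{lem:succfail-recursive} verbatim, but with the girth-inflated base cases $\Fail{\ell}{0}=\gamma$, $\Succ{\ell}{1}=6\gamma$, and $\Fail{\ell}{1}=\ell+3\gamma$; by induction the closed-form bounds of Lemma~\ref{lem:succfail-closedform} scale by a factor $\gamma$, giving $\Fail{\ell}{i}=O(\gamma \ell^i)$ and $\Succ{\ell}{i}=O(\gamma i\,\ell^{i-1})$. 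Partitioning a length-$d$ shortest path into $\ell=\lfloor d^{1/k}\rfloor$ segments then produces the emulator stretch $f(d)=d+O(\gamma k d^{1-1/k})$. For the spanner version, one additionally checks that $p_i(u_{\ell-s'})\in\Ball(p_i(u_s),(r+2\gamma)^i)$ whenever $\ell\le r$, since the worst-case pivot-to-pivot distance is at most $\ell^i + 2\Fail{\ell}{i-1} \le (\ell+2\gamma)^i$, exactly justifying the threshold used in the definition of $E_i'$. Setting $r=\Theta(\gamma k/\eps)$ converts the sublinear additive spanner into the promised $(1+\eps, O(\gamma k/\eps)^{k-1})$-spanner.

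The main obstacle is verifying that the $\gamma$-scaling of the base cases propagates cleanly through the $\Succ/\Fail$ recurrence, multiplying the additive terms uniformly by a factor of $\gamma$ while leaving the exponent $1+\f{1}{(\gamma+1)2^k-1}$ untouched. This is essentially bookkeeping: the recurrence structure is preserved because $\tilde{E}_1^\gamma$ of Theorem~\ref{thm:girth-path-buying} is the $\gamma$-inflated analogue of Theorem~\ref{thm:path-buying}'s guarantee, and the $+2\gamma$ slack built into the distance threshold for $E_i'$ absorbs the additional $\Fail$-growth. Everything else is a mechanical application of the argument template from Section~\ref{sect:new-upper-bounds}.
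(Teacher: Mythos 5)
Your proposal reproduces the paper's own argument: the emulator $E_0^\gamma \cup E_1 \cup \cdots \cup E_k$ and spanner $E_0^\gamma \cup \tilde{E}_1^\gamma \cup E_2'\cup\cdots\cup E_k'$ with the same balancing of $q_i = n^{-g(i)}r^{-h(i)}$, the same girth-inflated base cases $\Fail{\ell}{0}=\gamma,\ \Succ{\ell}{1}=6\gamma,\ \Fail{\ell}{1}=\ell+3\gamma$, the same $(r+2\gamma)^i$ threshold, and the citation of Theorem~\ref{thm: bkmp girth} for the additive $6\gamma$-spanner. The only additions beyond what the paper spells out are the explicit $k=1$ derivation of the additive $4\gamma$-emulator and the one-line check that $h < 3/(2(\gamma+1))$, both of which are correct and in the spirit of what the paper leaves implicit.
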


\section{Lower Bounds on Shortcutting Digraphs}\label{sect:shortcut}

In this section we consider {\em directed} unweighted graphs $G=(V,E)$.  
Let $u\leadsto v$ be the reachability (transitive closure) relation for $G$, 
indicating a directed path from $u$ to $v$.
In 1992 Thorup~\cite{Thorup92} conjectured that for any directed graph $G=(V,E)$
there exists another $G' = (V,E')$ such that 
(i) $G$ and $G'$ have the same reachability relation ($\leadsto$),
(ii) $|E'|\le 2|E|$, and 
(iii) every $u\leadsto v$ is witnessed in $G'$ by a directed path with length $\poly(\log n)$;
this is called the {\em diameter} of $G'$.
Thorup's conjecture was confirmed for trees~\cite{Thorup92,Thorup97-par-shortcut,Chaz87}
and planar graphs~\cite{Thorup95}, but disproved in a strong form by Hesse~\cite{Hesse03},
who showed that there exists a $G$ with $n^{1+\epsilon}$ edges and diameter $n^{\delta(\epsilon)}$,
such that any shortcutting $G'$ with diameter $o(n^\delta)$ requires $\Omega(n^{2-\epsilon})$ edges.
In this section we give a simpler proof of Hesse's result---a refutation of Thorup's conjecture---by generalizing the construction of $\DblBase$ from Section~\ref{sect:lower-bound}.  

\subsection{The Construction}

Recall that $\Base[p]$ is parameterized by an integer $\ell \ge 2$.  Its vertex set is partitioned 
into $\ell+1$ layers of $p$ vertices; each vertex has $p/\Loss_\ell(p)$ edges leading to the next layer,
each of which is assigned a distinct label from the set $\Labels[p]$.  
Here $\Loss_\ell(p) = 2^{\Theta(\sqrt{\log\ell \cdot \log p})}$.
The set $\Pairs = \Pairs(\Base[p])$ consists of $p^2 / \Loss_\ell(p)$ pairs, each having a unique length-$\ell$ shortest path.  Each element $(u,v)\in \Pairs$ is generated by picking a vertex $u$ in the first layer and a label $a\in\Labels[p]$: $v$ is the vertex in the last layer reached by starting at $u$ and repeatedly following edges labeled $a$.
In this section we regard $\Base[p]$ as being a directed acyclic graph, with all edges oriented toward the higher numbered layer.

Rather than form $\DblBase[p]$ by taking the product of two copies of $\Base[\sqrt{p}]$, 
we take the product of $k$ copies of $\Base[p^{1/k}]$.   
Let the layers of $\Base[p^{1/k}]$ be $L_0,\ldots,L_\ell$.
The vertex set of
$\DblBase[p]$ is partitioned into layers $\Layer{0},\ldots,\Layer{k\ell}$.
If $q$ is written $ik+j$, where $i\le \ell, j<k$,  $\Layer{q}$ is the set $L_{i+1}^j \times L_i^{k-j}$.
A directed edge $(\nu,\nu')\in \Layer{q} \times \Layer{q+1}$ exists if 
$\nu$ and $\nu'$ only differ in their 
$j$th component and $(\nu[j],\nu'[j])$ is in the edge-set of $\Base[p^{1/k}]$.
In other words, a path from layer $\Layer{0}$ to layer $\Layer{q}$ in $\DblBase$ simulates
$k$ independent paths, from layer $L_0$ to $L_{i}$ in $k-j$ copies of $\Base$, and from
layer $L_0$ to $L_{i+1}$ in the remaining $j$ copies.
The pair-set $\Pairs(\DblBase[p])$ is defined as one might expect:
\[
\Pairs(\DblBase[p]) = \{(\nu,\nu') \;|\; (\nu[j],\nu'[j]) \in \Pairs(\Base), \mbox{ for each $0 \le j < k$}\}
\]
Thus, for any $(\nu,\nu')\in \Pairs(\DblBase[p])$, 
$\nu\leadsto\nu'$ is witnessed by a unique path having length $k\ell$
and the labels along this path form a periodic sequence
$(a_0,a_1,\ldots,a_{k-1},a_0,a_1,\ldots)$ for some $(a_0,\ldots,a_{k-1})\in (\Labels[p^{1/k}])^k$.
The size of the pair-set $\Pairs = \Pairs(\DblBase[p])$ is 
$|\Pairs| \ge (p^{2/k} / \Loss)^k = p^2 / \Loss^k$, where $\Loss = \Loss_\ell[p^{1/k}]$.

\begin{lemma}\label{lem:shortcut-diam}
The diameter of $G = \DblBase[p]$ is $k\ell$.  
Any graph $G'$ with the same transitive closure as $G$ and diameter
$k\ell/(k-1) - 1$ must have at least $|\Pairs|$ edges.
\end{lemma}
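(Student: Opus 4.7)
The plan is to handle the two claims in sequence. For the diameter of $G = \DblBase[p]$, I would observe that $G$ is a layered DAG whose edges all go from $\Layer{q}$ to $\Layer{q+1}$, so any directed path has length equal to the layer jump of its endpoints, which is at most $k\ell$. On the other hand, by construction any pair $(\nu,\nu')\in \Pairs$ has a unique $G$-shortest path of length exactly $k\ell$, witnessing $\operatorname{diam}(G)=k\ell$.

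For the lower bound on $|E(G')|$, I want to injectively assign each pair in $\Pairs$ to a distinct edge of $G'$. Fix $(\nu,\nu')\in \Pairs$ and let $\pi$ be a $\nu$--$\nu'$ path in $G'$ with hop-count $h \le k\ell/(k-1)-1$. Because $G$ is a layered DAG, every edge $(x,y)\in E(G')$ (which witnesses $x\leadsto y$ in $G$) satisfies $d_G(x,y)=\operatorname{layer}(y)-\operatorname{layer}(x)$. Summing over the hops of $\pi$ gives a total layer jump of exactly $k\ell$, so by pigeonhole some hop $(x,y)\in \pi$ has $d_G(x,y) \ge \lceil k\ell/h\rceil \ge k$, since $h<k\ell/(k-1)$. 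I will assign $(\nu,\nu')$ to this edge.

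The next step, which is the conceptual heart of the argument, is to show that the endpoints of every hop of $\pi$ lie on the \emph{unique} shortest $G$-path from $\nu$ to $\nu'$. Replace each hop $(x_i,y_i)$ of $\pi$ by a shortest $G$-path from $x_i$ to $y_i$. This unrolls $\pi$ into a directed walk in $G$ from $\nu$ to $\nu'$ of length exactly $k\ell = d_G(\nu,\nu')$; hence it is a shortest $\nu$--$\nu'$ walk and, because each step advances one layer, a simple directed path. Uniqueness of shortest paths for pairs in $\Pairs$ (guaranteed by the construction of $\DblBase$) then forces this walk to coincide with the unique shortest $\nu$--$\nu'$ path in $G$, so every $x_i,y_i$ lies on that path.

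Finally, I need injectivity: the hop $(x,y)$ with $d_G(x,y)\ge k$ uniquely determines $(\nu,\nu')$. Given only $(x,y)$, the unique shortest $G$-path from $x$ to $y$ can be read off and yields a sequence of $d_G(x,y)\ge k$ consecutive edge labels. Because the labels along any $\Pairs$-shortest-path are $k$-periodic with period $(a_0,\ldots,a_{k-1})$ (one label per component of $\DblBase$, coming from the $\Pairs(\Base)$-paths in each of the $k$ copies of $\Base[p^{1/k}]$), and because $d_G(x,y)\ge k$, we recover $(a_0,\ldots,a_{k-1})$ in full from the label sequence read at $x$, using $\operatorname{layer}(x)\bmod k$ to align the offsets. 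Since $\Base$ has the property that both the successor and predecessor of any vertex under a given label are unique, extending $x$ backward by these labels uniquely determines $\nu\in\Layer{0}$, and extending $y$ forward uniquely determines $\nu'\in\Layer{k\ell}$. So the assignment is injective and $|E(G')|\ge |\Pairs|$. The main obstacle is the ``unrolling'' step, where the combination of layer-DAG structure, the tight distance $k\ell$, and uniqueness of shortest paths must be used together to rule out the possibility that the $G'$-path strays off the unique shortest $G$-path; once this is in hand, the label-periodicity recovery argument is routine.
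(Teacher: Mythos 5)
Your proof is correct and takes essentially the same approach as the paper: pigeonhole on the $k\ell$ layers forces some $G'$-edge to span at least $k$ layers of the unique shortest path for a $\Pairs$-pair, and the $k$-periodicity of the labels along that path then lets that one edge identify the pair; the ``unrolling'' step (showing the hop endpoints must all lie on the unique shortest $\nu$--$\nu'$ path), which the paper leaves implicit, is a nice clarification. One small caveat: shortest $x$--$y$ paths in $\DblBase[p]$ need not be unique for arbitrary $x,y$, so ``the unique shortest $G$-path from $x$ to $y$'' should be read as the subpath of the $\Pairs$-shortest path through $x$ and $y$, whose labels are constant within each $\Base$-component; that such a constant-label subpath is determined by its endpoints $(x,y)$ follows from the average-free construction of $\Base$, and is the fact that actually delivers injectivity in both your proof and the paper's.
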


\begin{proof}
Let $(\nu,\nu')\in\Pairs$, $P$ be the unique $\nu$--$\nu'$ path in $G$ and
$P'$ be a $\nu$--$\nu'$ path in $G'$ having length strictly shorter than $k\ell/(k-1)$.
It must be that $P'$ contains an edge that shortcuts at least $k$ consecutive edges in $P$.
However, because the edge-labels along $P$ are periodic with length $k$, 
any length-$k$ subpath of $P$ uniquely identifies $P$.  Thus, no edge of $G'$ that shortcuts
$k$ or more edges in $G$ can be used by two distinct pairs in $\Pairs$.  It follows
that $G'$ contains at least $|\Pairs|$ edges or the diameter of $G'$ is at least $k\ell/(k-1)$.
\end{proof}

The number of vertices and edges in $\DblBase[p]$ is $n=(k\ell+1)p$ and $m=k\ell p^{1+1/k}/\Loss_\ell(p^{1/k})$, respectively.  Setting $\ell = p^{\delta/k}$ for some small $\delta>0$ we have 
$\Loss_\ell(p^{1/k}) = 2^{\Theta(\sqrt{\log\ell \cdot \log p^{1/k}})} = p^{\Theta(\sqrt{\delta}/k)}$,
so the density of the graph is roughly 
$p^{(1-\Theta(\sqrt{\delta}))/k} \approx n^{\frac{1-\Theta(\sqrt{\delta})}{k(1+\delta/k)}}$.  
By Lemma~\ref{lem:shortcut-diam}, 
in order to reduce the diameter to $p^{\delta/k} = O(n^{\frac{\delta}{k(1+\delta/k)}})$ we need to 
add $|\Pairs| = p^2 / \Loss^{k} = p^{2 - \Theta(\sqrt{\delta})}$ shortcuts.
By setting $\delta = O(\epsilon^2)$ to be sufficiently small and $k=\Omega(1/\epsilon)$,
we arrive at the same conclusion of Hesse~\cite{Hesse03}.

\begin{theorem} (\cite{Hesse03})
For any $\epsilon > 0$ there exists a $\delta = \delta(\epsilon)$, 
a directed graph $G$ with $n$ vertices, at most $n^{1+\epsilon}$ edges, and diameter $n^{\delta}$
with the following property.  Any graph $G'$ with the same transitive closure as $G$
and diameter $o(n^{\delta})$ must contain at least $n^{2-\epsilon}$ edges.
\end{theorem}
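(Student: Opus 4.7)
The plan is to instantiate the $\DblBase[p]$ construction from Lemma~\ref{lem:shortcut-diam} with parameters $k$ and $\ell$ chosen as explicit functions of $\epsilon$. Specifically, I would set $k = \Theta(1/\epsilon)$ and $\ell = p^{\delta/k}$ for a small $\delta = \delta(\epsilon)$ to be determined, then verify that the resulting $G = \DblBase[p]$ has at most $n^{1+\epsilon}$ edges and diameter at most $n^\delta$, while Lemma~\ref{lem:shortcut-diam} forces any transitive-closure-preserving $G'$ of diameter $o(n^\delta)$ to contain at least $n^{2-\epsilon}$ edges.

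First I would verify the size and diameter. The loss factor is $\Loss_\ell(p^{1/k}) = 2^{\Theta(\sqrt{\log\ell \cdot \log p^{1/k}})} = p^{\Theta(\sqrt{\delta}/k)}$, so the edge count is $m = k\ell p^{1+1/k}/\Loss = p^{1 + (1-\Theta(\sqrt{\delta}))/k + o(1)}$. Since $n = \Theta(k\ell p) = p^{1+o(1)}$, the edge density is at most $n^{(1-\Theta(\sqrt{\delta}))/k + o(1)}$, and choosing $k$ a sufficiently large constant multiple of $1/\epsilon$ ensures $m \le n^{1+\epsilon}$. The diameter of $G$ is $k\ell = kp^{\delta/k}$; since $k = O(1/\epsilon)$ does not grow with $n$, this is at most $n^\delta$ after possibly inflating $\delta$ by a vanishing additive term to absorb the leading $k$.

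Next I would invoke Lemma~\ref{lem:shortcut-diam}: any shortcutting $G'$ with the same reachability and diameter below $k\ell/(k-1) - 1$ must include at least $|\Pairs| \ge p^2/\Loss^k = p^{2 - \Theta(\sqrt{\delta})}$ edges. The target diameter $o(n^\delta) = o(\ell)$ is strictly smaller than $k\ell/(k-1) - 1$ for any fixed $k \ge 2$ once $n$ is large, so the hypothesis of the lemma is satisfied. Translating the edge bound into $n$, we get at least $n^{2 - \Theta(\sqrt{\delta}) - o(1)}$ edges. Finally, I would choose $\delta = c\epsilon^2$ for a sufficiently small constant $c$, so that $\Theta(\sqrt{\delta}) < \epsilon/2$; the lower bound then cleanly becomes $n^{2-\epsilon}$.

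The main obstacle is purely bookkeeping: one has to thread three competing constraints on $(k,\ell,\delta)$. Larger $k$ pushes the density down toward $n^\epsilon$, but compresses the ratio $k\ell/(k-1)$ toward $\ell$ so the lemma's diameter hypothesis barely holds; smaller $\delta$ makes $\Loss^k$ and hence $|\Pairs|$ close to $n^2$, but forces $G'$ to beat a shorter diameter $n^\delta$. The square-root dependence in $\Loss_\ell(p) = 2^{\Theta(\sqrt{\log\ell\log p})}$ is precisely what dictates $\delta = \Theta(\epsilon^2)$ rather than $\Theta(\epsilon)$: the degradation factor $p^{-\Theta(\sqrt{\delta})}$ in $|\Pairs|$ must remain bounded by $p^{-O(\epsilon)}$, which requires $\sqrt{\delta} = O(\epsilon)$.
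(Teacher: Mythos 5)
Your proposal is correct and follows essentially the same route as the paper: instantiate the $k$-fold product $\DblBase[p]$ with $\ell = p^{\delta/k}$, invoke Lemma~\ref{lem:shortcut-diam}, and set $k = \Omega(1/\epsilon)$ and $\delta = \Theta(\epsilon^2)$ so that the Behrend loss factor $p^{\Theta(\sqrt{\delta})}$ stays within the $n^{\epsilon}$ budget. The parameter bookkeeping you spell out (why $\delta$ must scale as $\epsilon^2$ rather than $\epsilon$) is exactly the calculation the paper performs in the paragraph following that lemma.
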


\section{Conclusion}\label{sect:conclusion}

In this paper, we characterized the optimal asymptotic behavior of sublinear additive stretch functions $f$ for spanners, emulators, or any graph compression scheme.
Roughly speaking, any representation using $n^{1+\f{1}{2^{k}-1}-\delta}$ bits (for any $\delta > 0$) must have stretch function 
$f(d) = d + \Omega(d^{1-\f{1}{k}})$.
Previous constructions of sublinear additive emulators~\cite{TZ06} and $(1+\epsilon,\beta)$-spanners (\cite{EP04,TZ06} 
and the construction of Section~\ref{sect:new-upper-bounds})
show that neither the exponent $1+\f{1}{2^{k}-1}$ nor additive stretch $\Omega(d^{1-\f{1}{k}})$ can be improved, for any $k$, 

The main distinction between $(1+\epsilon,\beta)$-spanners~\cite{EP04,TZ06,Pettie-Span09} and
sublinear additive emulators/spanners~\cite{TZ06,Pettie-Span09,Chechik13} is that constructions
of the former take $\epsilon$ as a parameter (which affects the size of the spanner) whereas
the latter have $(1+\epsilon,\beta)$-stretch for all $\epsilon$, that is, $\epsilon$ can be chosen in the analysis.
An interesting open question is whether one can match the size-stretch tradeoff of 
Thorup and Zwick's optimal {\em emulators}~\cite{TZ06} with a {\em spanner}.
(Constructions in \cite{Pettie-Span09, Chechik13} are off from \cite{TZ06} (and our lower bounds) by a polynomial factor.)  It would be possible to construct such spanners given a \emph{pairwise} spanner
with a sublinear additive stretch function.  For example, when $S\subset V$ with $|S|=\Omega(n^{4/7})$
and $P=S\times S$, does there exist a pairwise spanner for $P$ with stretch $d+O(\sqrt{d})$ and size 
$O(|P|)$?  If such an object existed, we would immediately have an optimal $(d+O(\sqrt{d}))$-spanner with size $O(n^{8/7})$; see~\cite{TZ06,Pettie-Span09}.

Our lower bounds match the existing upper bounds in the distance regime 
$2^{\Omega(k)} \ll d < n^{o(1)}$, while they say nothing when $d = 2^{O(k)}$ and they are weaker when $d = n^{\Omega(1)}$.
An interesting open problem is to understand the sparseness-stretch tradeoffs available when $d=O(2^k)$ is tiny 
(see~\cite{EP01,BKMP10,Parter14}) and when $d=n^{\Omega(1)}$ is very large~\cite{BCE06,BodwinW16}.\\

\paragraph{Acknowledgments.}
We are grateful to Virginia Vassilevska Williams for useful technical discussions and for advice about the directions taken by this paper.
We thank Michael Elkin for proposing the question of finding a lower bound hierarchy for mixed spanners (as shown here), as well as observing the corresponding upper bounds.


\begin{thebibliography}{10}

\bibitem{AbboudB16}
A.~Abboud and G.~Bodwin.
\newblock The $4/3$ additive spanner exponent is tight.
\newblock In {\em Proceedings 48th Annual {ACM} Symposium on Theory of
  Computing ({STOC})}, pages 351--361, 2016.

\bibitem{AbboudB16-SODA}
A.~Abboud and G.~Bodwin.
\newblock Error amplification for pairwise spanner lower bounds.
\newblock In {\em Proceedings 27th Annual {ACM-SIAM} Symposium on Discrete
  Algorithms (SODA)}, pages 841--854, 2016.

\bibitem{AbrahamG11}
I.~Abraham and C.~Gavoille.
\newblock On approximate distance labels and routing schemes with affine
  stretch.
\newblock In {\em Proceedings 25th International Symposium on Distributed
  Computing ({DISC})}, pages 404--415, 2011.

\bibitem{Agarwal14}
R.~Agarwal.
\newblock The space-stretch-time tradeoff in distance oracles.
\newblock In {\em Proceedings 22nd Annual European Symposium Algorithms
  ({ESA})}, pages 49--60, 2014.

\bibitem{AgarwalG13}
R.~Agarwal and P.~B. Godfrey.
\newblock Distance oracles for stretch less than 2.
\newblock In {\em Proceedings 24th Annual {ACM-SIAM} Symposium on Discrete
  Algorithms ({SODA})}, pages 526--538, 2013.

\bibitem{ACIM99}
D.~Aingworth, C.~Chekuri, P.~Indyk, and R.~Motwani.
\newblock Fast estimation of diameter and shortest paths (without matrix
  multiplication).
\newblock {\em SIAM J.~Comput.}, 28(4):1167--1181, 1999.

\bibitem{Alon01}
N.~Alon.
\newblock Testing subgraphs in large graphs.
\newblock In {\em Proceedings 42nd IEEE Symposium on Foundations of Computer
  Science (FOCS)}, pages 434--441, 2001.

\bibitem{Althofer+93}
I.~Alth\"{o}fer, G.~Das, D.~Dobkin, D.~Joseph, and J.~Soares.
\newblock On sparse spanners of weighted graphs.
\newblock {\em Discrete and Computational Geometry}, 9:81--100, 1993.

\bibitem{BKMP10}
S.~Baswana, T.~Kavitha, K.~Mehlhorn, and S.~Pettie.
\newblock Additive spanners and $(\alpha,\beta)$-spanners.
\newblock {\em ACM Trans. on Algorithms}, 2009.

\bibitem{BaswanaS07}
S.~Baswana and S.~Sen.
\newblock A simple and linear time randomized algorithm for computing sparse
  spanners in weighted graphs.
\newblock {\em J.~Random Structures and Algs.}, 30(4):532--563, 2007.

\bibitem{Behrend46}
F.~Behrend.
\newblock On sets of integers which contain no three terms in arithmetic
  progression.
\newblock {\em Proc. Nat. Acad. Sci.}, 32:331--332, 1946.

\bibitem{Benson66}
C.~Benson.
\newblock Minimal regular graphs of girth eight and twelve.
\newblock {\em Canadian Journal of Mathematics}, 18:1091--1094, 1966.

\bibitem{BodwinW15}
G.~Bodwin and V.~Vassilevska Williams.
\newblock Very sparse additive spanners and emulators.
\newblock In {\em Proceedings 2015 Conference on Innovations in Theoretical
  Computer Science ({ITCS})}, pages 377--382, 2015.

\bibitem{BodwinW16}
G.~Bodwin and V.~Vassilevska Williams.
\newblock Better distance preservers and additive spanners.
\newblock In {\em Proceedings 27th Annual {ACM-SIAM} Symposium on Discrete
  Algorithms ({SODA})}, pages 855--872, 2016.

\bibitem{Bollobas78}
B.~Bollob{\'a}s.
\newblock {\em Extremal graph theory}, volume~11 of {\em London Mathematical
  Society Monographs}.
\newblock Academic Press Inc. [Harcourt Brace Jovanovich Publishers], London,
  1978.

\bibitem{BCE06}
B.~Bollob{\'a}s, D.~Coppersmith, and M.~Elkin.
\newblock Sparse subgraphs that preserve long distances and additive spanners.
\newblock {\em SIAM J. Discr. Math.}, 9(4):1029--1055, 2006.

\bibitem{BondyS74}
J.~Bondy and M.~Simonovits.
\newblock Cycles of even length in graphs.
\newblock {\em J. Combinatorial Theory, Series B}, 16:97--105, 1974.

\bibitem{Brown66}
W.~G. Brown.
\newblock On graphs that do not contain a {T}homsen graph.
\newblock {\em Canad. Math. Bull.}, 9:281--285, 1966.

\bibitem{Chaz87}
B.~Chazelle.
\newblock Computing on a free tree via complexity-preserving mappings.
\newblock {\em Algorithmica}, 2(3):337--361, 1987.

\bibitem{Chechik13}
S.~Chechik.
\newblock New additive spanners.
\newblock In {\em Proceedings 24th Annual {ACM}-{SIAM} Symposium on Discrete
  Algorithms (SODA)}, pages 498--512, 2013.

\bibitem{Chechik15}
S.~Chechik.
\newblock Approximate distance oracles with improved bounds.
\newblock In {\em Proceedings 47th Annual {ACM} on Symposium on Theory of
  Computing ({STOC})}, pages 1--10, 2015.

\bibitem{Cohen97}
E.~Cohen.
\newblock Using selective path-doubling for parallel shortest-path
  computations.
\newblock {\em Journal of Algorithms}, 22(1):30--56, 1997.

\bibitem{Cohen00}
E.~Cohen.
\newblock Polylog-time and near-linear work approximation scheme for undirected
  shortest-paths.
\newblock {\em J.~ACM}, 47:132--166, 2000.

\bibitem{CohenP10}
H.~Cohen and E.~Porat.
\newblock On the hardness of distance oracle for sparse graph.
\newblock {\em CoRR}, abs/1006.1117, 2010.

\bibitem{CE06}
D.~Coppersmith and M.~Elkin.
\newblock Sparse source-wise and pair-wise preservers.
\newblock {\em SIAM J.~Discrete Mathematics}, 20(2):463--501, 2006.

\bibitem{CyganGK13}
M.~Cygan, F.~Grandoni, and T.~Kavitha.
\newblock On pairwise spanners.
\newblock In {\em Proceedings 30th International Symposium on Theoretical
  Aspects of Computer Science ({STACS})}, pages 209--220, 2013.

\bibitem{DHZ00}
D.~Dor, S.~Halperin, and U.~Zwick.
\newblock All-pairs almost shortest paths.
\newblock {\em SIAM J.~Comput.}, 29(5):1740--1759, 2000.

\bibitem{EN16}
M.~Elkin and O.~Neiman.
\newblock Hopsets with constant hopbound, and applications to approximate
  shortest paths.
\newblock In {\em Proc. 57th IEEE Symposium on Foundations of Computer Science
  (FOCS), to appear}, 2016.

\bibitem{EP01}
M.~Elkin and D.~Peleg.
\newblock {$(1+\epsilon, \beta)$}-{spanner} constructions for general graphs.
\newblock In {\em Proc.~33rd Annual {ACM} Symposium on Theory of Computing
  ({STOC})}, pages 173--182, 2001.

\bibitem{EP04}
M.~Elkin and D.~Peleg.
\newblock $(1+\epsilon,\beta)$-spanner constructions for general graphs.
\newblock {\em SIAM J.~Comput.}, 33(3):608--631, 2004.

\bibitem{ElkinP16}
M.~Elkin and S.~Pettie.
\newblock A linear-size logarithmic stretch path-reporting distance oracle for
  general graphs.
\newblock {\em {ACM} Trans. Algorithms}, 12(4):50, 2016.

\bibitem{Erdos63}
P.~Erd\H{o}s.
\newblock Extremal problems in graph theory.
\newblock In {\em Theory of Graphs and its Applications (Proc. Sympos.
  Smolenice, 1963), pages 29--36. Publ. House Czechoslovak Acad. Sci., Prague},
  1963.

\bibitem{ErdosRS66}
P.~Erd{\H{o}}s, A.~R{\'e}nyi, and V.~T. S{\'o}s.
\newblock On a problem of graph theory.
\newblock {\em Studia Sci. Math. Hungar.}, 1:215--235, 1966.

\bibitem{Hesse03}
W.~Hesse.
\newblock Directed graphs requiring large numbers of shortcuts.
\newblock In {\em Proceedings 14th Annual {ACM-SIAM} Symposium on Discrete
  Algorithms (SODA)}, pages 665--669, 2003.

\bibitem{Kavitha15}
T.~Kavitha.
\newblock New pairwise spanners.
\newblock In {\em Proceedings 32nd International Symposium on Theoretical
  Aspects of Computer Science ({STACS})}, pages 513--526, 2015.

\bibitem{KavithaV15}
T.~Kavitha and N.~M. Varma.
\newblock Small stretch pairwise spanners and approximate d-preservers.
\newblock {\em SIAM J.~Discrete Mathematics}, 29(4):2239--2254, 2015.

\bibitem{KS97}
P.~N. Klein and S.~Subramanian.
\newblock A randomized parallel algorithm for single-source shortest paths.
\newblock {\em J.~Algor.}, 25(2):205--220, 1997.

\bibitem{Knudsen14}
M.~B.~T. Knudsen.
\newblock Additive spanners: {A} simple construction.
\newblock In {\em Proceedings 14th Scandinavian Symposium and Workshops on
  Algorithm Theory ({SWAT})}, pages 277--281, 2014.

\bibitem{LazebnikU93}
F.~Lazebnik and V.~A. Ustimenko.
\newblock New examples of graphs without small cycles and of large size.
\newblock {\em European J. of Combinatorics}, 14:445--460, 1993.

\bibitem{LazebnikUW95}
F.~Lazebnik, V.~A. Ustimenko, and A.~J. Woldar.
\newblock A new series of dense graphs of high girth.
\newblock {\em Bulletin of the AMS}, 32(1):73--79, 1995.

\bibitem{LazebnikUW96}
F.~Lazebnik, V.~A. Ustimenko, and A.~J. Woldar.
\newblock A characterization of the components of the graphs $d(k,q)$.
\newblock {\em Discrete Mathematics}, 157(1--3):271--283, 1996.

\bibitem{Matousek96}
J.~Matou{\v{s}}ek.
\newblock On the distortion required for embedding finite metric spaces into
  normed spaces.
\newblock {\em Israel J. Math.}, 93:333--344, 1996.

\bibitem{Parter14}
M.~Parter.
\newblock Bypassing {E}rd{\H{o}}s' girth conjecture: Hybrid stretch and
  sourcewise spanners.
\newblock In {\em Proceedings 41st International Colloquium Automata,
  Languages, and Programming ({ICALP})}, pages 608--619, 2014.

\bibitem{PatrascuR14}
M.~P{\v a}tra{\c s}cu and L.~Roditty.
\newblock Distance oracles beyond the {T}horup-{Z}wick bound.
\newblock {\em SIAM J.~Comput.}, 43(1):300--311, 2014.

\bibitem{PatrascuRT12}
M.~Patrascu, L.~Roditty, and M.~Thorup.
\newblock A new infinity of distance oracles for sparse graphs.
\newblock In {\em Proceedings 53rd Annual {IEEE} Symposium on Foundations of
  Computer Science ({FOCS})}, pages 738--747, 2012.

\bibitem{PS89}
D.~Peleg and A.~A. Schaffer.
\newblock Graph spanners.
\newblock {\em Journal of Graph Theory}, 13:99--116, 1989.

\bibitem{Pettie-Span09}
S.~Pettie.
\newblock Low distortion spanners.
\newblock {\em ACM Transactions on Algorithms}, 6(1), 2009.

\bibitem{PoratR13}
E.~Porat and L.~Roditty.
\newblock Preprocess, set, query!
\newblock {\em Algorithmica}, 67(4):516--528, 2013.

\bibitem{Reiman58}
I.~Reiman.
\newblock \"{U}ber ein {P}roblem von {K}. {Z}arankiewicz.
\newblock {\em Acta. Math. Acad. Sci. Hungary}, 9:269--273, 1958.

\bibitem{RTZ05}
L.~Roditty, M.~Thorup, and U.~Zwick.
\newblock Deterministic constructions of approximate distance oracles and
  spanners.
\newblock In {\em Proc. 32nd Int'l Colloq. on Automata, Lang., and
  Prog.~({ICALP})}, pages 261--272, 2005.

\bibitem{SS99}
H.~Shi and T.~H. Spencer.
\newblock Time-work tradeoffs of the single-source shortest paths problem.
\newblock {\em Journal of Algorithms}, 30(1):19--32, 1999.

\bibitem{SommerVY09}
C.~Sommer, E.~Verbin, and W.~Yu.
\newblock Distance oracles for sparse graphs.
\newblock In {\em Proceedings 50th Annual {IEEE} Symposium on Foundations of
  Computer Science ({FOCS})}, pages 703--712, 2009.

\bibitem{Thorup92}
M.~Thorup.
\newblock On shortcutting digraphs.
\newblock In {\em Proceedings 18th International Workshop on Graph Theoretic
  Concepts in Computer Science (WG)}, pages 205--211, 1992.

\bibitem{Thorup95}
M.~Thorup.
\newblock Shortcutting planar digraphs.
\newblock {\em Combinatorics, Probability {\&} Computing}, 4:287--315, 1995.

\bibitem{Thorup97-par-shortcut}
M.~Thorup.
\newblock Parallel shortcutting of rooted trees.
\newblock {\em J. Algorithms}, 23(1):139--159, 1997.

\bibitem{TZ05}
M.~Thorup and U.~Zwick.
\newblock Approximate distance oracles.
\newblock {\em J.~ACM}, 52(1):1--24, 2005.

\bibitem{TZ06}
M.~Thorup and U.~Zwick.
\newblock Spanners and emulators with sublinear distance errors.
\newblock In {\em Proc.~17th {ACM}-{SIAM} Symposium on Discrete Algorithms
  ({SODA})}, pages 802--809, 2006.

\bibitem{Tits59}
J.~Tits.
\newblock Sur la trialit\'{e} et certains groupes qui s'en d\'{e}duisent.
\newblock {\em Publ. Math. I.H.E.S.}, 2:14--20, 1959.

\bibitem{UY91}
J.~D. Ullman and M.~Yannakakis.
\newblock High-probability parallel transitive-closure algorithms.
\newblock {\em SIAM J. Comput.}, 20(1):100--125, 1991.

\bibitem{Wenger91}
R.~Wenger.
\newblock Extremal graphs with no {$C\sp 4$}s, {$C\sp 6$}s, or {$C\sp {10}$}s.
\newblock {\em J.~Combin. Theory Ser.~B}, 52(1):113--116, 1991.

\bibitem{WoldarU93}
A.~Woldar and V.~Ustimenko.
\newblock An application of group theory to extremal graph theory.
\newblock In {\em Group theory, Proceedings of the Ohio State-Denison
  Conference}, pages 293--298, River Edge, NJ, 1993. World Sci. Publishing.

\bibitem{Woodruff10}
D.~P. Woodruff.
\newblock Additive spanners in nearly quadratic time.
\newblock In {\em Proceedings 37th International Colloquium on Automata,
  Languages and Programming (ICALP)}, pages 463--474, 2010.

\end{thebibliography}


\end{document}